\theoremstyle{plain}
\newtheorem{theorem}{Theorem}[section]
\newtheorem{lemma}[theorem]{Lemma}
\newtheorem{proposition}[theorem]{Proposition}
\newtheorem{corollary}[theorem]{Corollary}
\theoremstyle{definition}
\newtheorem{notation}[theorem]{Notation}
\newtheorem{example}[theorem]{Example}
\newtheorem{definition}[theorem]{Definition}
\theoremstyle{remark}
\newtheorem{remark}[theorem]{Remark}
\newcommand{\lref}[1]{Lemma \ref{#1}}
\newcommand{\pref}[1]{Proposition \ref{#1}}
\newcommand{\cref}[1]{Corollary \ref{#1}}
\newcommand{\rref}[1]{Remark \ref{#1}}
\newcommand{\eref}[1]{Example \ref{#1}}
\newcommand{\sref}[1]{Section \ref{#1}}
\newcommand{\powerset}{\raisebox{.15\baselineskip}{\Large\ensuremath{\wp}}}
\begin{document}

%\doublespacing

%%-----------------------------
%%      the top matter
%%-----------------------------
\title[{\tt if-then-else} over monoids of non-halting programs and tests]{Axiomatization of {\tt if-then-else} over monoids of possibly non-halting programs and tests}
\author[Gayatri Panicker]{Gayatri Panicker}
\address{Department of Mathematics,  Indian Institute of Technology Guwahati, Guwahati, India}
\email{p.gayatri@iitg.ac.in}
\author[K. V. Krishna]{K. V. Krishna}
\address{Department of Mathematics, Indian Institute of Technology Guwahati, Guwahati, India}
\email{kvk@iitg.ac.in}
\author[Purandar Bhaduri]{Purandar
 Bhaduri}
\address{Department of Computer Science and Engineering, Indian Institute of Technology Guwahati, Guwahati, India}
\email{pbhaduri@iitg.ac.in}

%\date{...}

\begin{abstract}
In order to study the axiomatization of the {\tt if-then-else} construct over possibly non-halting programs and tests, the notion of \emph{$C$-sets} was introduced in the literature by considering the tests from an abstract $C$-algebra. This paper extends the notion of $C$-sets to $C$-monoids which include the composition of programs as well as composition of programs with tests. For the class of $C$-monoids where the $C$-algebras are adas a canonical representation in terms of functional $C$-monoids is obtained.
\end{abstract}

\subjclass[2010]{08A70, 03G25 and 68N15.}

\keywords{Axiomatization, if-then-else,  non-halting programs, $C$-algebra}

\maketitle

\section*{Introduction}

The algebraic properties of the program construct {\tt if-then-else} have been studied in great detail under various contexts. For example, in    \cite{igarashi71,mccarthy63,sethi78}, the authors investigated on axiom schema for determination of the semantic equivalence between the conditional expressions.  The authors in \cite{bloom83,guessarian87,mekler87} studied complete proof systems for various versions of {\tt if-then-else}.  While a transformational characterization of {\tt if-then-else} was given in \cite{manes90}, an axiomatization of equality test algebras was considered in \cite{jackson01,pigozzi91}.  In \cite{bergman91,stokes98}, {\tt if-then-else} was studied as an action of Boolean algebra on a set. Due to their close relation with program features, functions have been canonical models for studies on algebraic semantics of programs.

In \cite{kennison81} Kennison defined comparison algebras as those equipped with a quaternary operation $C(s, t, u, v)$ satisfying certain identities modelling the equality test. He also showed that such algebras are simple if and only if $C$ is the direct comparison operation $C_{0}$ given by $C_{0}(s, t, u, v)$ taking value $u$ if $s = t$ and $v$ otherwise. This was extended by Stokes in \cite{stokes10} to semigroups and monoids. He showed that every comparison semigroup (monoid) is embeddable in  the comparison semigroup (monoid) $\mathcal{T}(X)$ of all total functions $X \rightarrow X$, for some set $X$. He also obtained a similar result in terms of partial functions $X \rightarrow X$. In \cite{jackson09} Jackson and Stokes gave a complete axiomatization of {\tt if-then-else} over halting programs and tests. They also modelled composition of functions and of functions with predicates and called this object a $B$-monoid and further showed that the more natural setting of only considering composition of functions would not admit a finite axiomatization. They proved that every $B$-monoid is embeddable in a functional $B$-monoid comprising total functions and halting tests and thus achieved a Cayley-type theorem for the class of $B$-monoids.
The work listed above predominantly considered the case where the tests are halting and drawn from a Boolean algebra. A natural interest is to study non-halting tests and programs.
%
%Considerable studies have been made on the theory of partial functions equipped with composition. Along these lines, Jackson and Stokes in introduced the notion of {\em agreeable semigroups}, which have an additional binary operation that modelled the equality test in the context of partial functions.
%
%%Much work has also been done in the case of {\tt if-then-else} considered over non-halting programs and non-halting tests with the inclusion of the composition of programs.

There are multiple studies (e.g., see \cite{bochvar38,heyting34,kleene38,lukasiewicz20}) on extending two-valued Boolean logic to three-valued logic. However McCarthy's logic (cf. \cite{mccarthy63}) is distinct in that it models the short-circuit evaluation exhibited by programming languages that evaluate expressions in sequential order, from left to right. In \cite{guzman90} Guzm\'{a}n and Squier gave a complete axiomatization of McCarthy's three-valued logic and called the corresponding algebra a $C$-algebra, or the algebra of conditional logic. While studying {\tt if-then-else} algebras in \cite{manes93}, Manes defined an {\em ada} (Algebra of Disjoint Alternatives) which is essentially a $C$-algebra equipped with an oracle for the halting problem.

Jackson and Stokes in \cite{jackson15} studied the algebraic theory of computable functions, which can be viewed as possibly non-halting programs, together with composition, {\tt if-then-else} and {\tt while-do}. In this work they assumed that the tests form a Boolean algebra. Further, they demonstrated how an algebra of non-halting tests could be constructed from Boolean tests in their setting. Jackson and Stokes proposed an alternative approach by considering an abstract collection of non-halting tests and posed the following problem:

{\em Characterize the algebras of computable functions associated with an abstract $C$-algebra of non-halting tests.}

The authors in \cite{panicker16} have approached the problem by adopting the approach of Jackson and Stokes in \cite{jackson09}. The notion of a {\em $C$-set} was introduced through which a complete axiomatization for {\tt if-then-else} over a class of possibly non-halting programs and tests, where tests are drawn from an ada, was provided.

In this paper, following the approach of Jackson and Stokes in \cite{jackson09}, we extend the notion of $C$-sets to include composition of possibly non-halting programs and of these programs with possibly non-halting tests. This object is termed a {\em $C$-monoid} and we show that every $C$-monoid where the tests are drawn from an ada is embeddable in a canonical model of $C$-monoids, viz., functional $C$-monoids. The organisation of the paper is as follows. In \sref{SectionPrelims} we provide necessary background material including the notion of $C$-sets and their properties. We introduce the notion of a {\em $C$-monoid} in \sref{SectionCMonoids} and give various examples thereof. In \sref{SectionRepresentationCMonoids} we delineate the procedure to achieve a Cayley-type theorem and embed every $C$-monoid where the tests are drawn from an ada into a functional $C$-monoid. We conclude the paper in \sref{Conclusion}.

\section{Preliminaries} \label{SectionPrelims}

%%%%%%%%%%%%%%%%%%%%%%%%%%%%%%%%%%%%%%%%%%%%%%%%%%%% With Notation List %%%%%%%%%%%%%%%%%%%%%%%%%%%%%%%%%%%%%%%%%%%%%%%%%

In this section we present the necessary background material. First we recall the concept of $B$-sets. The notion of a $B$-set was introduced by Bergman in \cite{bergman91} and elucidated by Jackson and Stokes in \cite{jackson09} to study the theory of halting programs equipped with the operation of {\tt if-then-else}.

\begin{definition}
 Let $\langle Q, \vee, \wedge, \neg, T, F \rangle$ be a Boolean algebra and $S$
be a set. A \emph{$B$-set} is a pair $(S, Q)$, equipped with a function $\eta : Q \times S \times S \rightarrow
S$, called \emph{$B$-action}, where $\eta(\alpha, a, b)$ is denoted by $\alpha[a, b]$, read ``{\tt if $\alpha$ then $a$ else $b$}", that satisfies the following axioms for all $\alpha, \beta \in Q$ and $a, b, c \in S$:

\begin{align}
  \alpha[a, a] & = a \label{B1} \\
   \alpha[\alpha[a, b], c] & = \alpha[a, c] \label{B2} \\
   \alpha[a, \alpha[b, c]] & = \alpha[a, c] \label{B3} \\
   F[a, b] & = b \label{B4} \\
   \neg \alpha[a, b] & = \alpha[b, a] \label{B5} \\
   (\alpha \wedge \beta)[a, b] & = \alpha[\beta[a, b], b] \label{B6}
\end{align}
\end{definition}

In \cite{jackson09} Jackson and Stokes also considered the case of modelling {\tt if-then-else} over a collection of programs with composition by including an operation to capture the composition of programs with tests.

\begin{definition}
Let $(S, \cdot)$ be a monoid with identity element 1 and $(S, Q)$ be a $B$-set. The pair $(S, Q)$ equipped with a function  $\circ : S \times Q \rightarrow Q$ is said to be a \emph{$B$-monoid} if it satisfies the following axioms for all $a, b, c \in S$ and $\alpha, \beta \in Q$:

\begin{align}
a \circ T & = T \label{BM1} \\
(a \circ \alpha) \wedge (a \circ \beta) & = a \circ (\alpha \wedge \beta) \label{BM2} \\
a \circ (\neg \alpha) & = \neg (a \circ \alpha) \label{BM3} \\
a \circ (b \circ \alpha) & = (a \cdot b) \circ \alpha \label{BM4} \\
\alpha[a, b] \cdot c & = \alpha [a \cdot c, b \cdot c] \label{BM5} \\
a \cdot \alpha[b, c] & = (a \circ \alpha) [a \cdot b, a \cdot c] \label{BM6} \\
\beta[a, b] \circ \alpha & = (\beta \wedge (a \circ \alpha)) \vee (\neg \beta \wedge (b \circ \alpha))\;\; (\mbox{written}\;  \beta \llbracket a \circ \alpha, b \circ \alpha\rrbracket) \label{BM7} \\
1 \circ \alpha & = \alpha \label{BM8}
\end{align}
\end{definition}

Since functions on sets model programs, standard examples of $B$-sets and $B$-monoids come from functions on sets. Let $X$ and $Y$ be two sets. The set of all functions $X \rightarrow Y$ will be denoted by $Y^{X}$ while the set of all functions $X \rightarrow X$ will be denoted by $\mathcal{T}(X)$.

Let $\mathbb{2}$ be the two element Boolean algebra. For any set $X$ the pair $(\mathcal{T}(X), \mathbb{2}^{X})$ is a $B$-set with the following action for all $\alpha \in \mathbb{2}^{X}$ and $g, h \in \mathcal{T}(X)$:
  \begin{equation*}
  \alpha[g, h](x) =
  \begin{cases}
   g(x), & \text{ if } \alpha(x) = T; \\
   h(x), & \text{ if } \alpha(x) = F.
  \end{cases}
 \end{equation*}
Note that $\mathcal{T}(X)$ is a monoid with respect to usual composition of mappings. The $B$-set $(\mathcal{T}(X), \mathbb{2}^{X})$ equipped with the operation $\circ$ defined by
$$(f \circ \alpha) = \{ x \in X : f(x) \in \alpha \}$$
for all $f \in \mathcal{T}(X)$ and $\alpha \in \mathbb{2}^{X}$ is a $B$-monoid. In fact, Jackson and Stokes obtained the following theorem.

\begin{theorem}[\cite{jackson09}] \label{FunctionalBmonoidEmb}
Every $B$-monoid $(S, Q)$ is embeddable as a two-sorted algebra into the $B$-monoid $(\mathcal{T}(X), \mathbb{2}^{X})$ for some set $X$. Furthermore, if $S$ and $Q$ are finite, then $X$ is finite.
\end{theorem}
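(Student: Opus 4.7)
The plan is to construct a set $X$ together with injective maps $\widehat{(\cdot)}: S \to \mathcal{T}(X)$ and $\widetilde{(\cdot)}: Q \to \mathbb{2}^{X}$ preserving the full two-sorted $B$-monoid structure, using a Cayley-Stone style combination of the right regular representation of $S$ with the Stone space of $Q$. I would start from $X_{0} := S \times U(Q)$, where $U(Q)$ is the set of ultrafilters of $Q$, and define
\[
\widehat{a}(s, u) := (s \cdot a,\, u), \qquad \widetilde{\alpha}(s, u) := T \text{ iff } s \circ \alpha \in u.
\]
On $X_{0}$, axiom \eqref{BM4} immediately yields $\widetilde{a \circ \alpha} = \widetilde{\alpha} \circ \widehat{a}$, so the action $\circ$ is preserved, and a direct unwinding shows that $a \mapsto \widehat{a}$ is monoid-compatible (a genuine homomorphism under the diagrammatic composition convention on $\mathcal{T}(X)$). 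Injectivity of $\widetilde{(\cdot)}$ follows from Stone duality applied at a suitable $(1, u)$, and injectivity of $\widehat{(\cdot)}$ is visible from $\widehat{a}(1, u) = (a, u)$.

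The delicate step is the if-then-else. By \eqref{BM6}, $s \cdot \alpha[a, b] = (s \circ \alpha)[s a,\, s b]$, so $\widehat{\alpha[a, b]}(s, u) = ((s \circ \alpha)[sa, sb],\, u)$, while the functional if-then-else $\widetilde{\alpha}[\widehat{a}, \widehat{b}](s, u)$ is $(sa, u)$ when $s \circ \alpha \in u$ and $(sb, u)$ otherwise; these disagree pointwise on $X_{0}$. To remedy this, I would pass to the quotient $X := X_{0}/{\sim}$ under $(s, u) \sim (s', u)$ iff $s \circ \beta \in u$ iff $s' \circ \beta \in u$ for every $\beta \in Q$. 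The key calculation, using \eqref{BM7} together with the prime-filter property of $u$, is that $((s \circ \alpha)[sa, sb]) \circ \beta \in u$ iff $(sa) \circ \beta \in u$ whenever $s \circ \alpha \in u$, and iff $(sb) \circ \beta \in u$ otherwise. Hence $[(s \circ \alpha)[sa, sb], u]$ equals $[sa, u]$ or $[sb, u]$ in the quotient exactly as required, and well-definedness of the induced $\widehat{(\cdot)}$ and $\widetilde{(\cdot)}$ then follows from \eqref{BM4} together with \eqref{BM2} and \eqref{BM3}.

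The main obstacle I anticipate is re-establishing injectivity of $\widehat{(\cdot)}$ after this quotient. On $X$, $\widehat{a} = \widehat{b}$ only forces $a \circ \beta = b \circ \beta$ for every $\beta \in Q$, which does not immediately imply $a = b$ from \eqref{BM1}--\eqref{BM8}. To overcome this, I would augment $X$ with a separating family of ``tag'' points---concretely, a disjoint copy of $S$ on which the action of $\widehat{a}$ is right multiplication and on which each $\widetilde{\alpha}$ is extended in a way compatible with \eqref{BM5}--\eqref{BM7}---so that $\widehat{a} \neq \widehat{b}$ whenever $a \neq b$, without disturbing the if-then-else preservation already verified on the quotient part. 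This augmentation is in the spirit of the state-separating device used for the $C$-set embedding of \cite{panicker16}. Finally, if $S$ and $Q$ are finite, then $U(Q)$, the quotient, and the added copy of $S$ are all finite, so $X$ can be taken finite.
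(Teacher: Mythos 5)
Your overall architecture---ultrafilters of $Q$ (equivalently maximal congruences) as the index set, the right regular representation of $S$ on quotients, and a disjoint union to separate elements---matches the scheme this paper uses for its $C$-monoid analogue (Theorem \ref{main_theorem}), and your \eqref{BM7}-computation showing that {\tt if-then-else} is preserved on your quotient is correct. The gap lies in the choice of quotient and in the proposed repair of injectivity. The relation $(s,u)\sim(s',u)$ iff $s\circ\beta\in u\Leftrightarrow s'\circ\beta\in u$ for all $\beta$ is genuinely too coarse: take any monoid $S$ with $|S|>1$, $Q=\mathbb{2}$, $a\circ\beta:=\beta$ and $T[a,b]=a$, $F[a,b]=b$; all axioms \eqref{B1}--\eqref{B6} and \eqref{BM1}--\eqref{BM8} hold, yet every $(s,u)$ with the same $u$ is identified, $X$ is a singleton, and $\widehat{(\cdot)}$ collapses $S$. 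Your remedy---adjoining a raw disjoint copy of $S$ on which $\widehat{a}$ acts by right multiplication---cannot be carried out: preservation of {\tt if-then-else} at a tag point $s$ demands that $s\cdot\alpha[a,b]$ equal $sa$ or $sb$ according to the single truth value $\widetilde{\alpha}(s)$, and taking $s=1$ this forces $\alpha[a,b]=a$ for all $a,b$ or $\alpha[a,b]=b$ for all $a,b$, which fails for any nontrivial $B$-action (already for nonconstant $\alpha$ in $(\mathcal{T}(X),\mathbb{2}^{X})$). So the ``compatible extension'' you invoke does not exist in general; reconciling injectivity on $S$ with preservation of {\tt if-then-else} is the crux of the theorem, not a routine augmentation.

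The missing idea is to quotient each fibre by the finer relation defined through the \emph{action} rather than through the tests: for an ultrafilter $u$ set $E_u=\{(s,t)\,:\,\alpha[s,t]=t\ \text{for some}\ \alpha\in u\}$, the Boolean specialization of the relation $E_\theta$ of Proposition \ref{PropCollection}. Using \eqref{B1}--\eqref{B6} one checks that $I_{s,t}=\{\alpha\,:\,\alpha[s,t]=t\}$ is an ideal of $Q$ not containing $T$ when $s\neq t$, so some ultrafilter $u$ avoids it and $(s,t)\notin E_u$; hence $\bigcap_u E_u=\Delta_S$ and the disjoint union of the $S/E_u$ separates $S$ with no extra tag points. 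Moreover $E_u$ is contained in your relation $\sim$ (by \eqref{BM7}, if $\alpha\in u$ and $\alpha[s,t]=t$ then $t\circ\beta\in u$ iff $s\circ\beta\in u$), so every test is still decided on each $E_u$-class, and your if-then-else verification goes through on $S/E_u$ because $\gamma[\gamma[a,b],a]=a$ for $\gamma\in u$ by \eqref{B2} and \eqref{B1}. This is exactly the route the paper follows in Subsections \ref{collec_homos} and \ref{embed_func} for $C$-monoids, with $E_\theta$ in place of $E_u$.
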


In \cite{guzman90} Guzm\'{a}n and Squier introduced the notion of a $C$-algebra as the algebra corresponding to McCarthy's three-valued logic (cf. \cite{mccarthy63}).

\begin{definition}
 A \emph{$C$-algebra} is an algebra $\langle M, \vee, \wedge, \neg \rangle$ of
type $(2, 2, 1)$, which satisfies the following axioms for all $\alpha, \beta, \gamma \in M$:

\begin{align}
  \neg \neg \alpha & = \alpha \label{C1} \\
   \neg (\alpha \wedge \beta) & = \neg \alpha \vee \neg \beta \label{C2} \\
   (\alpha \wedge \beta) \wedge \gamma & = \alpha \wedge (\beta \wedge \gamma) \label{C3} \\
   \alpha \wedge (\beta \vee \gamma) & = (\alpha \wedge \beta) \vee (\alpha \wedge \gamma) \label{C4} \\
   (\alpha \vee \beta) \wedge \gamma & = (\alpha \wedge \gamma) \vee (\neg \alpha \wedge \beta \wedge \gamma) \label{C5} \\
   \alpha \vee (\alpha \wedge \beta) & = \alpha \label{C6} \\
   (\alpha \wedge \beta) \vee (\beta \wedge \alpha) & = (\beta \wedge \alpha) \vee (\alpha \wedge \beta) \label{C7}
\end{align}
\end{definition}

It is easy to see that every Boolean algebra is a $C$-algebra. In particular, $\mathbb{2}$ is a $C$-algebra. Let $\mathbb{3}$ denote the $C$-algebra with the universe $\{ T, F, U \}$ and the following operations.
 \begin{center}
  \begin{tabular}{c|c}
  $\neg$ & \\
  \hline
  $T$ & $F$ \\
  $F$ & $T$ \\
  $U$ & $U$
 \end{tabular}
 \quad
 \begin{tabular}{c|ccc}
  $\wedge$ & $T$ & $F$ & $U$ \\
  \hline
  $T$ & $T$ & $F$ & $U$ \\
  $F$ & $F$ & $F$ & $F$ \\
  $U$ & $U$ & $U$ & $U$
 \end{tabular}
 \quad
 \begin{tabular}{c|ccc}
  $\vee$ & $T$ & $F$ & $U$ \\
  \hline
  $T$ & $T$ & $T$ & $T$ \\
  $F$ & $T$ & $F$ & $U$ \\
  $U$ & $U$ & $U$ & $U$
 \end{tabular}
 \end{center}
In fact, the $C$-algebra $\mathbb{3}$ is the McCarthy's three-valued logic.

In view of the fact that the class of $C$-algebras is a variety, for any set $X$, $\mathbb{3}^{X}$ is a $C$-algebra with the operations defined pointwise. Guzm\'{a}n and Squier in \cite{guzman90} showed that elements of $\mathbb{3}^{X}$ along with the $C$-algebra operations may be viewed in terms of \emph{pairs of sets}. This is a pair $(A, B)$ where $A, B \subseteq X$ and $A \cap B = \emptyset$. Akin to the well-known correlation between $\mathbb{2}^{X}$ and the power set $\powerset(X)$ of $X$, for any element $\alpha \in \mathbb{3}^{X}$, associate the pair of sets $(\alpha^{-1}(T), \alpha^{-1}(F))$. Conversely, for any pair of sets $(A, B)$ where $A, B \subseteq X$ and $A \cap B = \emptyset$ associate the function $\alpha$ where $\alpha(x) = T$ if $x \in A$, $\alpha(x) = F$ if $x \in B$ and $\alpha(x) = U$ otherwise. With this correlation, the operations can be expressed as follows:
\begin{align*}
 \neg (A_{1}, A_{2}) & = (A_{2}, A_{1}) \\
 (A_{1}, A_{2}) \wedge (B_{1}, B_{2}) & = (A_{1} \cap B_{1}, A_{2} \cup (A_{1} \cap B_{2})) \\
 (A_{1}, A_{2}) \vee (B_{1}, B_{2}) & = ((A_{1} \cup (A_{2} \cap B_{1}), A_{2} \cap B_{2}) \\
\end{align*}

\begin{notation}
We use $M$ to denote an arbitrary $C$-algebra. By a $C$-algebra with $T, F, U$ we mean a $C$-algebra with nullary operations $T, F, U$, where $T$ is the (unique) left-identity (and right-identity) for $\wedge$, $F$ is the (unique) left-identity (and right-identity) for $\vee$ and $U$ is the (unique) fixed point for $\neg$. Note that $U$ is also a left-zero for both $\wedge$ and $\vee$ while $F$ is a left-zero for $\wedge$.
\end{notation}

In \cite{manes93} Manes introduced the notion of ada (algebra of disjoint alternatives) which is a $C$-algebra equipped with an oracle for the halting problem.

\begin{definition}
 An \emph{ada} is a $C$-algebra $M$ with $T, F, U$ equipped with an additional unary operation $(\text{ })^{\downarrow}$ subject to the following equations for all $\alpha, \beta \in M$:
 \begin{align}
  F^{\downarrow} & = F \label{A1} \\
  U^{\downarrow} & = F \label{A2} \\
  T^{\downarrow} & = T \label{A3} \\
  \alpha \wedge \beta^{\downarrow} & = \alpha \wedge (\alpha \wedge \beta)^{\downarrow} \label{A4} \\
  \alpha^{\downarrow} \vee \neg (\alpha^{\downarrow}) & = T \label{A5} \\
  \alpha & = \alpha^{\downarrow} \vee \alpha \label{A6}
 \end{align}
\end{definition}

The $C$-algebra $\mathbb{3}$ with the unary operation $(\text{ })^{\downarrow}$ defined by \eqref{A1}, \eqref{A2} and \eqref{A3} forms an ada. This ada will also be denoted by $\mathbb{3}$. One may easily resolve the notation overloading -- whether $\mathbb{3}$ is a $C$-algebra or an ada --  depending on the context. In \cite{manes93} Manes showed that the ada $\mathbb{3}$ is the only subdirectly irreducible ada. For any set $X$, $\mathbb{3}^{X}$ is an ada with operations defined pointwise. Note that the ada $\mathbb{3}$ is also simple.

We use the following notations related to sets and equivalence relations.
\begin{notation}$\;$
\begin{enumerate}
\item Let $X$ be a set and $\bot \notin X$. The pointed set $X \cup \{ \bot \}$ with base point $\bot$ is denoted by $X_{\bot}$.
\item The set of all functions on $X_{\bot}$ which fix $\bot$ is denoted by $\mathcal{T}_{o}(X_{\bot})$, i.e., $\mathcal{T}_{o}(X_{\bot}) = \{f \in \mathcal{T}(X_{\bot}) \; : \; f(\bot) = \bot\}$.
\item Under an equivalence relation $\sigma$ on a set $A$, the equivalence class of an element $p \in A$ will be denoted by $\overline{p}^{{\sigma}}$. Within a given context, if there is no ambiguity, we may simply denote the equivalence class by $\overline{p}$.
\end{enumerate}
\end{notation}

In order to axiomatize {\tt if-then-else} over possibly non-halting programmes and tests, in \cite{panicker16}, Panicker et al. considered the tests from a $C$-algebra and introduced the notion of $C$-sets. We now recall the notion of a $C$-set.
\begin{definition}
 Let $S_{\bot}$ be a pointed set with base point $\bot$ and $M$ be a $C$-algebra with $T, F, U$. The pair $(S_{\bot}, M)$  equipped with an action \[\_\; [\_\; , \_] : M \times S_{\bot} \times S_{\bot} \rightarrow S_{\bot}\] is called a \emph{$C$-set} if it satisfies the following axioms for all $\alpha, \beta \in M$ and $s, t, u, v \in S_{\bot}$:
 \begin{align}
  U[s, t] & = \bot \label{EC1} & \text{($U$-axiom)} \\
  F[s, t] & = t \label{EC6} & \text{($F$-axiom)} \\
  (\neg \alpha)[s, t] & = \alpha[t, s] \label{EC5} & \text{($\neg$-axiom)} \\
  \alpha[\alpha[s, t], u] & = \alpha[s, u]  \label{EC3} & \text{(positive redundancy)} \\
  \alpha[s, \alpha[t, u]] & = \alpha[s, u] \label{EC4} & \text{(negative redundancy)} \\
  (\alpha \wedge \beta)[s, t] & = \alpha[\beta[s, t], t] \label{EC7} & \text{($\wedge$-axiom)} \\
  \alpha[\beta[s, t], \beta[u, v]] & = \beta[\alpha[s, u], \alpha[t, v]] \label{EC2} & \text{(premise interchange)} \\
  \alpha[s, t] = \alpha[t, t] & \Rightarrow (\alpha \wedge \beta)[s, t] = (\alpha \wedge \beta)[t, t] \label{EC8} & \text{($\wedge$-compatibility)}
 \end{align}
\end{definition}

Let $M$ be a $C$-algebra with $T, F, U$  treated as a pointed set with base point $U$. The pair $(M, M)$ is a $C$-set under the following action for all $\alpha, \beta, \gamma \in M$:
 $$\alpha[\beta, \gamma] = (\alpha \wedge \beta) \vee (\neg \alpha \wedge \gamma).$$
We denote the action of the $C$-set $(M, M)$ by $\_\; \llbracket \_ \;, \_ \rrbracket$. In \cite{panicker16}, Panicker et al. showed that the axiomatization is complete for the class of $C$-sets $(S_{\bot}, M)$ when $M$ is an ada.  In that connection, they obtained some properties of $C$-sets. Amongst, in \pref{PropCollection} below, we list certain properties related to congruences which are useful in the present work. Viewing $C$-sets as two-sorted algebras,  a \emph{congruence} of a $C$-set is a pair $(\sigma, \tau)$, where $\sigma$ is an equivalence relation on $S_{\bot}$ and $\tau$ is a congruence on the ada $M$ such that $(s, t), (u, v) \in \sigma \; \text{ and }\; (\alpha, \beta) \in \tau$ imply that $(\alpha[s, u], \beta[t, v]) \in \sigma$.

\begin{proposition}[\cite{panicker16}] \label{PropCollection}
Let $(S_{\bot}, M)$ be a $C$-set where $M$ is an ada.  For each maximal congruence $\theta$ on $M$, let $E_\theta$ be the relation on $S_{\bot}$ given by
 $$E_{\theta} = \{ (s, t) \in S_{\bot} \times S_{\bot} : \;  \beta[s, t] = \beta[t, t]\; \text{ for some } \beta \in \overline{T}^{\theta}\}.$$ Then we have the following properties:
 \begin{enumerate}[\rm(i)]
  \item For $\alpha \in M$ and $s, t \in S_{\bot}$, if $(\alpha, \beta) \in \theta$ then according to $\beta = T, F$ or $U$, we have $(\alpha[s, t], s) \in E_{\theta}, (\alpha[s, t], t) \in E_{\theta}$ or $(\alpha[s, t], \bot) \in E_{\theta}$, respectively.
  \item The pair $(E_{\theta}, \theta)$ is a $C$-set congruence.
  \item For the $C$-set $(M, M)$ the equivalence $E_{\theta}$ on $M$,  denoted by $E_{\theta_{M}}$, is a subset of $\theta$.
  \item $\displaystyle \bigcap_{\theta} E_{\theta} = \Delta_{S_{\bot}}$, where $\theta$ ranges over all maximal congruences on $M$.
  \item The intersection of all maximal congruences on $M$ is trivial, that is $\bigcap \theta = \Delta_{M}$ where $\theta$ ranges over all maximal congruences on $M$.
 \end{enumerate}
\end{proposition}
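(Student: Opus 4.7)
The plan hinges on the observation that, because $M$ is an ada and $\theta$ is maximal, the quotient $M/\theta$ is a simple ada; since $\mathbb{3}$ is the unique subdirectly irreducible ada (noted just before the proposition), we get $M/\theta \cong \mathbb{3}$, and this yields the three-way partition $M = \overline{T}^{\theta} \sqcup \overline{F}^{\theta} \sqcup \overline{U}^{\theta}$ on which every subsequent step rests. I would dispatch (v) first: it is just Birkhoff's subdirect representation theorem for the variety of adas, since every subdirectly irreducible quotient is isomorphic to $\mathbb{3}$ and hence arises by factoring through a maximal congruence.

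For part (i), I would split on the value of $\beta \in \{T, F, U\}$. When $\beta = T$, the witness $\gamma = \alpha \in \overline{T}^{\theta}$ and positive redundancy \eqref{EC3} give $\alpha[\alpha[s, t], s] = \alpha[s, s]$. When $\beta = F$, take $\gamma = \neg \alpha \in \overline{T}^{\theta}$ and combine the $\neg$-axiom \eqref{EC5} with negative redundancy \eqref{EC4} to get $(\neg \alpha)[\alpha[s, t], t] = \alpha[t, \alpha[s, t]] = \alpha[t, t] = (\neg \alpha)[t, t]$. The case $\beta = U$ is the delicate one: first, by specializing premise interchange \eqref{EC2} with the $U$-axiom \eqref{EC1}, derive the auxiliary identity $\delta[\bot, \bot] = \bot$ for every $\delta \in M$; then exploit the ada operation, observing that $\alpha^{\downarrow} \in \overline{F}^{\theta}$ and $\neg \alpha^{\downarrow} \in \overline{T}^{\theta}$, to assemble a witness $\gamma \in \overline{T}^{\theta}$ with $\gamma[\alpha[s, t], \bot] = \bot$. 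This is the step I expect to fight with hardest, as it is the only place where a genuinely ada-theoretic construction is needed rather than plain $C$-set axiom chasing.

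Part (ii) then splits into showing that $E_{\theta}$ is an equivalence and that $(E_{\theta}, \theta)$ respects the ternary action. Reflexivity is immediate from $T \in \overline{T}^{\theta}$ and $T[s, s] = s$; for symmetry and transitivity I would combine witnesses in $\overline{T}^{\theta}$ via $\wedge$ and lean on $\wedge$-compatibility \eqref{EC8} to close the argument. Action-compatibility then reduces, via part (i), to a finite check across the nine class-combinations for $\alpha$ and $\beta$ modulo $\theta$. Part (iii) drops out immediately: in $(M, M)$ a witness $\delta \in \overline{T}^{\theta}$ for $(\alpha, \beta) \in E_{\theta_{M}}$ collapses to $T$ modulo $\theta$, so the defining equation $\delta \llbracket \alpha, \beta \rrbracket = \delta \llbracket \beta, \beta \rrbracket$ reduces to $\alpha \equiv \beta \pmod \theta$. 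For part (iv), given $s \neq t$ in $S_{\bot}$ I would build a congruence on $M$ that distinguishes tests whose actions separate $s$ from $t$, check properness via the $C$-set axioms, and extend to a maximal congruence by Zorn's lemma; this is the secondary obstacle, since it is the only step that leans on the full ada hypothesis on $M$ to guarantee enough maximal congruences to separate points of $S_{\bot}$.
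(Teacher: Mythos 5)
First, a caveat: the paper itself gives no proof of \pref{PropCollection} --- it is quoted verbatim from \cite{panicker16} and the reader is referred there --- so there is no in-paper argument to measure you against, and I can only assess your sketch on its own terms. Your skeleton is the right one: maximality of $\theta$ plus Manes' theorem gives $M/\theta\cong\mathbb{3}$ and hence the three-block partition of $M$; part (v) is indeed Birkhoff's subdirect decomposition for the variety of adas; the $T$- and $F$-cases of (i), with witnesses $\alpha$ and $\neg\alpha$ via \eqref{EC3}, \eqref{EC4}, \eqref{EC5}, are correct; and (iii) is the easy computation you describe.

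The gaps sit exactly at the points you flag as hard, and they are genuine. (1) In the $U$-case of (i) you never produce the witness. The natural candidate is $\gamma=\neg\bigl(\alpha^{\downarrow}\vee(\neg\alpha)^{\downarrow}\bigr)\in\overline{T}^{\theta}$, but the only tool the $C$-set axioms offer for pushing an ada identity into the action is the $\wedge$-axiom \eqref{EC7}, and the valid ada identity $\gamma\wedge\alpha=\gamma\wedge U$ only yields $\gamma[\alpha[s,t],t]=\gamma[\bot,t]$ --- the second slot is $t$, not $\bot$ --- which is not the required $\gamma[\alpha[s,t],\bot]=\gamma[\bot,\bot]$. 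You need an additional device, e.g.\ the derived $C$-set identity $\gamma[\beta_{1}[s,t],\beta_{2}[s,t]]=(\gamma\llbracket\beta_{1},\beta_{2}\rrbracket)[s,t]$ combined with the ada identity $\gamma\llbracket\alpha,U\rrbracket=U$; without something of this kind the case remains open. (2) In (ii), symmetry and transitivity of $E_{\theta}$ are asserted rather than proved. Symmetry does follow from \eqref{EC3} with the same witness, but transitivity is not closed by ``combining witnesses via $\wedge$'': $\wedge$ is not commutative in a $C$-algebra, and $\wedge$-compatibility \eqref{EC8} attaches your two hypotheses to the distinct elements $\beta\wedge\gamma$ and $\gamma\wedge\beta$, leaving you without a single witness; note also that $(\alpha,\beta)\in\theta$ forces $\alpha$ and $\beta$ into the same class, so the action-compatibility check has three cases, not nine. (3) In (iv) the Zorn argument is underspecified: you must take a congruence maximal subject to its $T$-class avoiding $D_{s,t}=\{\beta:\beta[s,t]=\beta[t,t]\}$ (a set closed under $\wedge$ by \eqref{EC8} and not containing $T$) and then prove that such a congruence is a maximal congruence of $M$; merely extending some separating congruence to a maximal one may enlarge $\overline{T}^{\theta}$ into $D_{s,t}$ and destroy the separation.
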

\noindent For more details on $C$-sets one may refer to \cite{panicker16}.

%%%%%%%%%%%%%%%%%%%%%%%%%%%%%% Paper 2 starts here %%%%%%%%%%%%%%%%%%%%%%%%%%%%%%%%%%%%%

\section{$C$-monoids} \label{SectionCMonoids}

We now include the case where the composition of two elements of the base set and of an element with a predicate is allowed. Our motivating example is $(\mathcal{T}_{o}(X_{\bot}), \mathbb{3}^{X})$, where $\mathcal{T}_{o}(X_{\bot})$ is considered to be a monoid with zero by equipping it with composition of functions. The composition will be written from left to right, i.e., $(f \cdot g)(x) = g(f(x))$. The monoid identity in $\mathcal{T}_{o}(X_{\bot})$ is the identity function $id_{X_{\bot}}$ and the zero element is $\zeta_{\bot}$, the constant function taking the value $\bot$. We also include composition of functions with predicates via the natural interpretation given by the following for all $f \in \mathcal{T}_{o}(X_{\bot})$ and $\alpha \in \mathbb{3}^{X}$:
\begin{equation} \label{FunctionalCMonoidEq}
(f \circ \alpha)(x) =
\begin{cases}
T, & \text{ if } \alpha(f(x)) = T; \\
F, & \text{ if } \alpha(f(x)) = F; \\
U, & \text{ otherwise. }
\end{cases}
\end{equation}

Note that if the composition takes value $T$ or $F$ at some point $x \in X_{\bot}$ then as $\alpha \in \mathbb{3}^{X}$ this implies that $f(x) \neq \bot$.

With this example in mind we define a $C$-monoid as follows.

\begin{definition}
 Let $(S_{\bot}, \cdot)$ be a monoid with identity element $1$ and zero element $\bot$ where $\bot \cdot s = \bot = s \cdot \bot$. Let $M$ be a $C$-algebra and $(S_{\bot}, M)$ be a $C$-set with $\bot$ as the base point of the pointed set $S_{\bot}$. The pair $(S_{\bot}, M)$ equipped with a function $$\circ : S_{\bot} \times M \rightarrow M$$ is said to be a \emph{$C$-monoid} if it satisfies the following axioms for all $s, t, r, u \in S_{\bot}$ and $\alpha, \beta \in M$:
 \begin{align}
 \bot \circ \alpha & = U \label{EM7} & \text{($\bot$-$\circ$-axiom)} \\
 t \circ U & = U \label{EM8} & \text{($U$-$\circ$-axiom)} \\
  1 \circ \alpha & = \alpha \label{EM1} & \text{($1$-$\circ$-axiom)} \\
  s \circ (\neg \alpha) & = \neg (s \circ \alpha) \label{EM4} & \text{($\neg$-$\circ$-axiom)} \\
  s \circ (\alpha \wedge \beta) & = (s \circ \alpha) \wedge (s \circ \beta) \label{EM3} & \text{($\wedge$-$\circ$-axiom)} \\
  (s \cdot t) \circ \alpha & = s \circ (t \circ \alpha) \label{EM2} & \text{(semigroup action)} \\
  \alpha[s, t] \cdot u & = \alpha[s \cdot u, t \cdot u] \label{EM5} & \text{(right composition)} \\
  r \cdot \alpha[s, t] & = (r \circ \alpha)[r \cdot s, r \cdot t] \label{EM9} & \text{(left composition)} \\
  \alpha[s, t] \circ \beta & = \alpha \llbracket s \circ \beta, t \circ \beta \rrbracket \label{EM6} & \text{($\circ$-interchange)}
 \end{align}
\end{definition}

The following are examples of $C$-monoids.

\begin{example} \label{ExampleFunctionalCmonoid}
 Recall from \cite{panicker16} that the pair $\big( \mathcal{T}_{o}(X_{\bot}), \mathbb{3}^{X}\big)$ equipped with the action \eqref{FunctionalAction} for all $f, g \in \mathcal{T}_{o}(X_{\bot})$ and $\alpha \in \mathbb{3}^{X}$ is a $C$-set. Note that $\mathcal{T}_{o}(X_{\bot})$ is treated as a pointed set with base point $\zeta_{\bot}$.

 \begin{equation} \label{FunctionalAction}
  \alpha[f, g](x) =
  \begin{cases}
   f(x), & \text{ if } \alpha(x) = T; \\
   g(x), & \text{ if } \alpha(x) = F; \\
   \bot, & \text{ otherwise. }
  \end{cases}
 \end{equation}

 The $C$-set $\big( \mathcal{T}_{o}(X_{\bot}), \mathbb{3}^{X} \big)$ equipped with the operation $\circ$ given in \eqref{FunctionalCMonoidEq} and with $\mathcal{T}_{o}(X_{\bot})$ treated as a monoid with zero is in fact a $C$-monoid. For verification of axioms \eqref{EM7} -- \eqref{EM6} refer to Appendix \ref{VerFunctionalCmonoid}. Such $C$-monoids will be called \emph{functional $C$-monoids.}
\end{example}

\begin{example} \label{ExampleGeneralCmonoid}
 Let $S_{\bot}$ be a non-trivial monoid with identity $1$ and zero $\bot$ and no non-zero zero-divisors, i.e., $s \cdot t = \bot \Rightarrow s = \bot$ or $t = \bot$. Then $S_{\bot}^{X}$ is also a monoid with zero for any set $X$ with operations defined pointwise. For $f, g \in S_{\bot}^{X}$ define $(f \cdot g)(x) = f(x) \cdot g(x)$. The identity of $S_{\bot}^{X}$ is the constant function $\zeta_{1}$ taking the value $1$. The zero and base point of $S_{\bot}^{X}$ is the constant function $\zeta_{\bot}$ taking the value $\bot$. Recall from \cite{panicker16} that the pair $\big( S_{\bot}^{X}, \mathbb{3}^{X} \big)$ is a $C$-set under action \eqref{FunctionalAction}. In fact it is also a $C$-monoid with $\circ$ defined as follows for all $f \in S_{\bot}^{X}$ and $\alpha \in \mathbb{3}^{X}$:
 \begin{equation*}
  (f \circ \alpha)(x) =
  \begin{cases}
   \alpha(x), & \text{ if } f(x) \neq \bot; \\
   U, & \text{ otherwise.}
  \end{cases}
 \end{equation*}
 For verification of axioms \eqref{EM7} -- \eqref{EM6} refer to Appendix \ref{VerGeneralCmonoid}.
\end{example}

\begin{example} \label{ExampleBasicCmonoid}
 Let $S_{\bot}$ be a non-trivial monoid with zero and no non-zero zero-divisors, i.e., $s \cdot t = \bot \Rightarrow s = \bot \text{ or } t = \bot$. In \cite{panicker16} the authors showed that for any pointed set  $S_{\bot}$ with base point $\bot$, the pair $(S_{\bot}, \mathbb{3})$ is a (basic) $C$-set with respect to the following action for all $a, b \in S_\bot$ and $\alpha \in \mathbb{3}$:
 \begin{equation*}
  \alpha[a, b] =
  \begin{cases}
   a, & \text{ if } \alpha = T; \\
   b, & \text{ if } \alpha = F; \\
   \bot, & \text{ if } \alpha = U.
  \end{cases}
 \end{equation*}
 This basic $C$-set $(S_{\bot}, \mathbb{3})$ equipped with $\circ : S_{\bot} \times \mathbb{3} \rightarrow \mathbb{3}$ defined below for all $s \in S_{\bot}$ and $\alpha \in \mathbb{3}$ is a $C$-monoid.
 \begin{equation*}
  s \circ \alpha =
  \begin{cases}
   \alpha, & \text{ if } s \neq \bot; \\
   U, & \text{ if } s = \bot.
  \end{cases}
 \end{equation*}
 For verification of axioms \eqref{EM7} -- \eqref{EM6} refer to Appendix \ref{VerBasicCmonoid}.
\end{example}

\section{Representation of a class of $C$-monoids} \label{SectionRepresentationCMonoids}

In this section we obtain a Cayley-type theorem for a class of $C$-monoids as stated in the following main theorem.

\begin{theorem}\label{main_theorem}
 Every $C$-monoid $(S_{\bot}, M)$ where $M$ is an ada is embeddable in the $C$-monoid $\big( \mathcal{T}_{o}(X_{\bot}), \mathbb{3}^{X} \big)$ for some set $X$. Moreover, if both $S_{\bot}$ and $M$ are finite then so is $X$.
\end{theorem}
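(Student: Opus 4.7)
The plan is a Cayley-type representation, modeled on the $B$-monoid embedding of \tref{FunctionalBmonoidEmb} and the $C$-set embedding of \cite{panicker16}. Let $\Theta$ denote the set of maximal congruences on the ada $M$; by Manes's classification each quotient $M/\theta$ is isomorphic to $\mathbb{3}$, and I write $\nu_\theta \colon M \twoheadrightarrow \mathbb{3}$ for the corresponding ada homomorphism. For each $\theta$, \pref{PropCollection}(ii) supplies the $C$-set congruence $(E_\theta,\theta)$. My first task is to verify two compatibilities: \emph{(a)} $E_\theta$ is a right monoid congruence on $S_\bot$, immediate from \eqref{EM5} applied to the witness $\beta[s,s']=\beta[s',s']$; and \emph{(b)} $s\,E_\theta\,s' \Rightarrow s \circ \alpha \,\theta\, s' \circ \alpha$, obtained by applying $(-)\circ\alpha$ to the same equation via \eqref{EM6} and reducing modulo $\theta$, where $\beta\,\theta\,T$ and $\neg\beta\,\theta\,F$.

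Now set $X_\theta = (S_\bot/E_\theta)\setminus\{\overline{\bot}^{E_\theta}\}$, $X=\bigsqcup_{\theta\in\Theta}X_\theta$, and define
\[
  \varphi(s)\bigl(\overline{t}^{E_\theta}\bigr) = \overline{t\cdot s}^{E_\theta},\qquad
  \psi(\alpha)\bigl(\overline{t}^{E_\theta}\bigr) = \nu_\theta(t\circ\alpha),
\]
with $\varphi(s)$ sending $\overline{t}^{E_\theta}$ to $\bot$ when $\overline{t\cdot s}^{E_\theta}=\overline{\bot}^{E_\theta}$ and $\varphi(s)(\bot)=\bot$. Well-definedness is exactly (a) and (b). The remaining verifications are routine: $\varphi$ is a monoid homomorphism since $\overline{t\cdot(s\cdot s')}^{E_\theta}=\overline{(t\cdot s)\cdot s'}^{E_\theta}$; $\psi$ preserves $\wedge,\vee,\neg$ via \eqref{EM3}--\eqref{EM4}; preservation of the $C$-set action is a three-way case split on $\nu_\theta(u\circ\alpha)\in\{T,F,U\}$ dispatched by \pref{PropCollection}(i) together with \eqref{EM9}; and preservation of $\circ$ is a two-case analysis via \eqref{EM2}, the edge case using (b) together with \eqref{EM7} to force $\nu_\theta((u\cdot s)\circ\alpha)=U$ whenever $u\cdot s\,E_\theta\,\bot$.

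The main technical obstacle is preservation of the ada operation $(-)^{\downarrow}$ by $\psi$, since \eref{ExampleFunctionalCmonoid} already exhibits $t\circ\alpha^{\downarrow}\neq(t\circ\alpha)^{\downarrow}$ pointwise in the functional target; the equality need hold only modulo $\theta$ and only for $\overline{t}^{E_\theta}\in X_\theta$. To exploit this restriction I would first establish the characterization $\overline{t}^{E_\theta}=\overline{\bot}^{E_\theta}$ iff $t\circ T\,\theta\,U$: the forward direction applies $(-)\circ T$ via \eqref{EM6} to a witness of $t\,E_\theta\,\bot$, while the reverse specialises \eqref{EM9} to compute $(t\circ T)[t,\bot]=t$ and invokes \pref{PropCollection}(i). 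Hence every $\overline{t}^{E_\theta}\in X_\theta$ satisfies $\nu_\theta(t\circ T)=T$. Combining this with the ada identities $\alpha=\alpha\wedge\alpha^{\downarrow}$ and $\alpha^{\downarrow}=\alpha^{\downarrow}\wedge\alpha$ (valid in every ada since they hold in the subdirectly irreducible $\mathbb{3}$), after applying $t\circ(-)$ and projecting to $\mathbb{3}$, yields the equivalence $\nu_\theta(t\circ\alpha^{\downarrow})=T \iff \nu_\theta(t\circ\alpha)=T$. Finally \eqref{A5} applied to $\alpha^{\downarrow}$ together with $\nu_\theta(t\circ T)=T$ rules out $\nu_\theta(t\circ\alpha^{\downarrow})=U$, pinning down $\nu_\theta(t\circ\alpha^{\downarrow})=(\nu_\theta(t\circ\alpha))^{\downarrow}$ exactly.

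Injectivity of $\varphi$ falls out of \pref{PropCollection}(iv): if $s\neq s'$, some $\theta$ has $(s,s')\notin E_\theta$, and this $E_\theta$ cannot be universal (else $1\,E_\theta\,\bot$, which via \eqref{EM5} collapses all of $S_\bot$), so $\overline{1}^{E_\theta}\in X_\theta$ separates $\varphi(s)$ from $\varphi(s')$. For $\psi$, \pref{PropCollection}(v) supplies a distinguishing $\theta$, and evaluation at $\overline{1}^{E_\theta}$ works whenever $X_\theta$ is nonempty; if it fails, the plan is to enlarge $X$ by appending the nonempty blocks $(M/E_{\theta_M})\setminus\{\overline{U}^{E_{\theta_M}}\}$ from the self-acting $C$-set $(M,M)$ (using \pref{PropCollection}(iii) and the fact that $\overline{T}^{E_{\theta_M}}\neq\overline{U}^{E_{\theta_M}}$ for every maximal $\theta$). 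Finally, when $S_\bot$ and $M$ are both finite, $\Theta$ is finite and every $S_\bot/E_\theta$ is finite, so $X$ is finite, giving the second assertion.
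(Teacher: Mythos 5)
Your proposal is correct and follows essentially the same route as the paper: the same congruences $(E_\theta,\theta)$ from \pref{PropCollection}, the same quotient maps $\overline{t}\mapsto\overline{t\cdot s}$ and $\overline{t}\mapsto$ (class of $t\circ\alpha$), the same disjoint union $X=\bigsqcup_\theta X_\theta$, and the same separation-then-pasting argument. A few remarks on where you diverge. First, your uniform definition $\psi(\alpha)(\overline{t})=\nu_\theta(t\circ\alpha)$ is cleaner than the paper's $\rho_\theta$, which special-cases $\alpha\in\{T,F\}$ as $(S_\theta,\emptyset)$ and $(\emptyset,S_\theta)$ and consequently pays for it with a lengthy case analysis in \pref{Prop.rhotheta} and \lref{LemmaPhiRhoTheta}; since $\nu_\theta$ is itself a $C$-algebra homomorphism onto $\mathbb{3}$, preservation of $\neg,\wedge,\vee$ reduces to \eqref{EM3}--\eqref{EM4} in one line, and the two definitions agree on $X_\theta$ precisely because of the paper's \pref{Prop.Rho.Hom}(iv). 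Second, your ``main technical obstacle'' --- preservation of $(\ )^{\downarrow}$ --- is not actually required: the embedding asserted is an embedding of $C$-monoids, whose test component carries only the $C$-algebra signature, and the paper accordingly never verifies $\downarrow$-preservation (your argument for it is nevertheless sound, so no harm done). Third, the one loose end is your contingency for $X_\theta=\emptyset$ when separating $\alpha$ from $\beta$. This branch never occurs: your own characterization $\overline{t}^{E_\theta}=\overline{\bot}^{E_\theta}\Leftrightarrow (t\circ T,U)\in\theta$, applied to $t=1$ with \eqref{EM1} and the fact that $(T,U)\notin\theta$ for a maximal $\theta$, gives $\overline{1}^{E_\theta}\in X_\theta$ for \emph{every} maximal $\theta$ (this is \pref{Prop.Rho.Hom}(vi) in the paper). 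You should close the argument this way rather than via the proposed fallback of adjoining blocks of $(M,M)$, since that fallback would require extending $\varphi(s)$ to the new points and there is no canonical way to do so. With that repair, and granting that the ``routine'' verifications (in particular the $\circ$-preservation edge case via \eqref{EM2}, \eqref{EM7} and your property (b)) are carried out as sketched, the proof is complete.
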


\noindent \emph{Sketch of the proof.} For each maximal congruence $\theta$ of $M$, we consider the $C$-set congruence $(E_\theta, \theta)$ of $(S_{\bot}, M)$. Corresponding to each such congruence, we construct a homomorphism of $C$-monoids from $(S_{\bot}, M)$ to the functional $C$-monoid over the set $S_{\bot} / E_{\theta}$. This collection of homomorphisms has the property that every distinct pair of elements from each component of the $C$-monoid will be separated by some homomorphism from this collection. We then set $X$ to be the disjoint union of $S_{\bot} / E_{\theta}$'s excluding the equivalence class $\overline{\bot}^{_{E_{\theta}}}$. We complete the proof by constructing a  monomorphism -- by pasting together each of the individual homomorphisms from the collection defined earlier -- from the $C$-monoid $(S_{\bot}, M)$ to the functional $C$-monoid over the pointed set $X_{\bot}$ with a new base point $\bot$.

The proof of Theorem \ref{main_theorem} will be developed through various subsections. First in Subsection \ref{prop_max_cong}, we study some properties of maximal congruences of adas. We then present a collection of homomorphisms which separate every distinct pair of elements from each component of $(S_{\bot}, M)$ in Subsection \ref{collec_homos}. In Subsection \ref{embed_func}, we construct the required functional $C$-monoid and establish an embedding from $(S_{\bot}, M)$. Finally, we consolidate the proof in Subsection \ref{proof_main_theorem}.

In what follows $(S_{\bot}, M)$ is a $C$-monoid with $M$ as an ada. Let $\theta$ be a maximal congruence on $M$ and $E_{\theta}$ be the equivalence on $S_\bot$ as defined in \pref{PropCollection} so that the pair $(E_{\theta}, \theta)$ is a congruence on $(S_{\bot}, M)$. We denote the quotient set $S_{\bot} / E_{\theta}$ by $S_{{\theta}_{\bot}}$ and use $S_{\theta}$ to denote the set $S_{{\theta}_{\bot}} \setminus \{ \overline{\bot}^{_{E_{\theta}}} \}$.  Further, we use $q, s, t, u, v$ to denote elements of $S_{\bot}$ and $\alpha, \beta, \gamma$ to denote elements of the ada $M$.

\subsection{Properties of maximal congruences}
\label{prop_max_cong} The following properties are useful in proving the main theorem.

\begin{proposition} \label{PropMaxTheta}
 No two elements of $\{T, F, U\}$ are related under $\theta$. That is,  $(T, F) \notin \theta$, $(T, U) \notin \theta$ and $(F, U) \notin \theta$.
\end{proposition}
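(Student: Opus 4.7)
The plan is to leverage Manes' classification result recalled in the preliminaries, which says that $\mathbb{3}$ is the unique subdirectly irreducible ada. Since every simple algebra is subdirectly irreducible, this pins down $\mathbb{3}$ as the unique non-trivial simple ada, and the statement about $T, F, U$ will follow by passing to the simple quotient $M/\theta$.

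First I would show that $M/\theta$ is a simple ada. By the standard correspondence between congruences on a quotient and congruences on the original algebra containing $\theta$, any proper non-trivial congruence on $M/\theta$ would lift to a congruence on $M$ strictly between $\theta$ and the universal relation $\nabla_M$, contradicting the maximality of $\theta$ (which, as the notation ``maximal congruence'' suggests, is a proper congruence, so $M/\theta$ is non-trivial to begin with). Hence $M/\theta$ is non-trivial and simple, and by Manes' theorem it must be isomorphic to the ada $\mathbb{3}$.

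Next I would exploit the fact that the canonical projection $\pi\colon M \to M/\theta$ is a homomorphism of adas and hence preserves the nullary operations $T, F, U$. Under any isomorphism $M/\theta \cong \mathbb{3}$, the images $\pi(T), \pi(F), \pi(U)$ are forced to be the three pairwise distinct constants $T, F, U$ of $\mathbb{3}$. Therefore $\pi(T), \pi(F), \pi(U)$ are pairwise distinct in $M/\theta$, which is exactly the statement that $(T,F),(T,U),(F,U) \notin \theta$.

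The main obstacle, if there is one, is simply the bookkeeping around Manes' classification: one needs to be sure that maximal in the paper's sense excludes the universal congruence (so that $M/\theta$ really is a non-trivial ada) and that ``subdirectly irreducible'' does indeed subsume ``simple'' here, so that the classification applies. Once those points are nailed down, the proof is a two-line consequence of the universal property of the quotient.
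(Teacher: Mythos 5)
Your proof is correct, but it takes a genuinely different route from the paper's. The paper argues directly and elementarily: if $(T,F)\in\theta$ then for any $\alpha$ the compatibility of $\theta$ with $\wedge$ gives $(T\wedge\alpha,\,F\wedge\alpha)=(\alpha,F)\in\theta$, so $\theta=M\times M$, contradicting maximality; the case $(T,U)\in\theta$ is handled the same way using that $U$ is a left-zero for $\wedge$, and $(F,U)\in\theta$ reduces to $(T,U)\in\theta$ by applying $\neg$. That argument uses nothing beyond the $C$-algebra-with-$T,F,U$ identities, so it would work even without the ada hypothesis. Your argument instead passes to the quotient: $M/\theta$ is a non-trivial simple ada (here you correctly flag that ``maximal'' must mean maximal \emph{proper}, which the paper's own proof confirms), hence subdirectly irreducible, hence isomorphic to $\mathbb{3}$ by Manes' classification, and the quotient map preserves the nullary operations $T,F,U$, which are pairwise distinct in $\mathbb{3}$. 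This is a legitimate and clean derivation; it costs you a dependence on the (nontrivial) classification theorem and on adas forming a variety, but it buys a strictly stronger conclusion, namely $M/\theta\cong\mathbb{3}$ -- a fact the paper implicitly leans on repeatedly later when it asserts that for a maximal $\theta$ every element of $M$ is $\theta$-related to exactly one of $T$, $F$, $U$. Making that isomorphism explicit up front, as you do, would actually streamline several subsequent case analyses.
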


\begin{proof}
 If $(T, F) \in \theta$ then we show that $\theta = M \times M$; contradicting the maximality of $\theta$. Suppose $(T, F) \in \theta$ and let $\alpha, \beta \in M$. Then $(T, F), (\alpha, \alpha) \in \theta \Rightarrow (T \wedge \alpha, F \wedge \alpha) \in \theta$ that is $(\alpha, F) \in \theta$. Similarly $(\beta, F) \in \theta$ and so using the symmetry and transitivity of $\theta$ we have $(\alpha, \beta) \in \theta$ and consequently $\theta = M \times M$. The proof of $(T, U) \notin \theta$ follows along similar lines. Finally since $(F, U) \in \theta \Leftrightarrow (T, U) \in \theta$, the result follows.
\end{proof}

\begin{proposition} \label{Prop.Rho.Hom}
 For each $q \in S_{\bot}$, we have
 \begin{enumerate}[\rm(i)]
  \item $(q \circ T)[q, \bot] = q$.
  \item $(q \circ T, F) \notin \theta$.
  \item $(q \circ T, U) \in \theta \Leftrightarrow (q, \bot) \in E_{\theta}$.
  \item $(q \circ T, T) \in \theta \Leftrightarrow (q \circ F, F) \in \theta \Leftrightarrow (q, \bot) \notin E_{\theta}$.
  \item $(s, t) \in E_{\theta} \Rightarrow (s \circ \alpha, t \circ \alpha) \in \theta$ for all $\alpha \in M$.
  \item $(1, \bot) \notin E_{\theta}$.
 \end{enumerate}
\end{proposition}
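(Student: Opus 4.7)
The plan is to prove the six items in the order (i), (v), (iii), (ii), (iv), (vi), as each later item builds on earlier ones. Throughout, I will rely on \pref{PropCollection}, \pref{PropMaxTheta}, the $C$-monoid axioms, and the fact that $M/\theta$ is a simple ada, hence isomorphic to $\mathbb{3}$ (the unique subdirectly irreducible ada). Part (i) is immediate: substituting $r = q$, $\alpha = T$, $s = 1$, $t = \bot$ into the left-composition axiom \eqref{EM9} yields $q \cdot T[1, \bot] = (q \circ T)[q, \bot]$, and since $T = \neg F$, axioms \eqref{EC5} and \eqref{EC6} give $T[1, \bot] = 1$, so the left-hand side equals $q$.

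The crux is (v). Starting from $\beta[s, t] = \beta[t, t]$ with $\beta \in \overline{T}^{\theta}$, I will compose on the right by $\circ\,\alpha$ and expand via \eqref{EM6} to obtain
\[
(\beta \wedge (s \circ \alpha)) \vee (\neg \beta \wedge (t \circ \alpha)) = (\beta \wedge (t \circ \alpha)) \vee (\neg \beta \wedge (t \circ \alpha)).
\]
Reducing modulo $\theta$, where $\beta \equiv T$, $\neg \beta \equiv F$, and then $T \wedge x = x$, $F \wedge x = F$, collapses this to $(s \circ \alpha) \vee F \equiv (t \circ \alpha) \vee F \pmod{\theta}$. The trailing $F$-summands will vanish by the identity $x \vee F = x$, which holds in $\mathbb{3}$ and therefore in every ada, yielding $(s \circ \alpha, t \circ \alpha) \in \theta$.

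With (i) and (v) established, the remaining items become short chases. For (iii)~$\Rightarrow$, apply \pref{PropCollection}(i) with $\beta = U$ to the identity in (i); for (iii)~$\Leftarrow$, apply (v) with $\alpha = T$ together with $\bot \circ T = U$ from \eqref{EM7}. For (ii), if $(q \circ T, F) \in \theta$ then \pref{PropCollection}(i) with $\beta = F$ produces $(q, \bot) \in E_\theta$, whence (iii) forces $(q \circ T, U) \in \theta$, and transitivity yields $(F, U) \in \theta$, contradicting \pref{PropMaxTheta}. For (iv), the first equivalence is immediate from \eqref{EM4} and $\theta$-compatibility of $\neg$; for the second, since $M/\theta \cong \mathbb{3}$, the element $q \circ T$ is $\theta$-related to exactly one of $T, F, U$, with (ii) excluding $F$ and (iii) showing that $U$ is excluded precisely when $(q, \bot) \notin E_\theta$, leaving $T$. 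For (vi), $1 \circ T = T$ by \eqref{EM1}, so $(1, \bot) \in E_\theta$ would force $(T, U) \in \theta$ through (iii), again contradicting \pref{PropMaxTheta}.

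The main obstacle is (v): every other part amounts to bookkeeping against \pref{PropCollection} and the structure of $\mathbb{3}$, whereas (v) requires the identity $x \vee F = x$ in every ada---a mildly non-obvious fact whose justification is that it holds in $\mathbb{3}$, and $\mathbb{3}$ generates the variety of adas.
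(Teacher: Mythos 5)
Your proof is correct, and the reordering (i), (v), (iii), (ii), (iv), (vi) is legitimate because the paper's proof of (iii) nowhere relies on (ii). Several parts do take a genuinely different route from the paper. For (v), the paper never unfolds $\beta \llbracket \cdot, \cdot \rrbracket$: from $\beta \llbracket s \circ \alpha, t \circ \alpha \rrbracket = \beta \llbracket t \circ \alpha, t \circ \alpha \rrbracket$ with $\beta \in \overline{T}^{\theta}$ it reads off $(s \circ \alpha, t \circ \alpha) \in E_{\theta_{M}}$ straight from the definition of $E_{\theta_{M}}$ and then quotes \pref{PropCollection}(iii) ($E_{\theta_{M}} \subseteq \theta$); your explicit reduction modulo $\theta$ via $T \wedge x = x$, $F \wedge x = F$ and $x \vee F = x$ is an equivalent self-contained computation, and the identity $x \vee F = x$ needs no appeal to ``$\mathbb{3}$ generates the variety of adas'' since the paper's notation already records $F$ as the right identity for $\vee$ in any $C$-algebra with $T, F, U$. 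Having (v) first buys you (iii)$\Leftarrow$ as an immediate corollary (take $\alpha = T$ and use \eqref{EM7}), whereas the paper proves (iii)$\Leftarrow$ directly by what is essentially the same $E_{\theta_{M}}$ argument specialized to that case. For (ii) the paper argues quite differently: from $(q \circ T, F) \in \theta$ it derives $(q \circ F, T) \in \theta$, then forms $((q \circ F) \vee (q \circ T), T \vee F) \in \theta$ to conclude $(q \circ T, T) \in \theta$ and hence the contradiction $(T, F) \in \theta$; your route through \pref{PropCollection}(i) and (iii), landing on $(F, U) \in \theta$, is shorter and equally valid, at the cost of making (ii) depend on (iii). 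Parts (i), (iv) and (vi) coincide with the paper's arguments, including the shared (implicit) use of the fact that a maximal congruence on an ada has quotient $\mathbb{3}$, so that every element is $\theta$-related to exactly one of $T, F, U$.
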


\begin{proof}
\noindent
 \begin{enumerate}[\rm(i)]
  \item Using \eqref{EM9} we have $q = q \cdot 1 = q \cdot T[1, \bot] = (q \circ T)[q \cdot 1, q \cdot \bot] = (q \circ T)[q, \bot]$.

  \item We prove the result by contradiction. Suppose $(q \circ T, F) \in \theta$. Using the fact that $\theta$ is a congruence on $M$ and \eqref{EM4} we have $(q \circ T, F) \in \theta \Rightarrow (\neg (q \circ T), \neg F) \in \theta \Rightarrow (q \circ (\neg T), \neg F) \in \theta \Rightarrow (q \circ F, T) \in \theta$. Similarly using the fact that $\theta$ is a congruence, \eqref{EM3} and \eqref{EM4} we have $((q \circ F) \vee (q \circ T), (T \vee F)) \in \theta \Rightarrow (q \circ (F \vee T), (T \vee F)) \in \theta \Rightarrow (q \circ T, T) \in \theta$. Thus we have $(q \circ T, F) \in \theta$ and $(q \circ T, T) \in \theta$. From the symmetry and transitivity of $\theta$ it follows that $(T, F) \in \theta$, a contradiction by \pref{PropMaxTheta}. The result follows.

  \item $(\Rightarrow :)$  Let $(q \circ T, U) \in \theta$. Using \pref{PropCollection}(i) we can say that for any choice of $s, t \in S_{\bot}$ we have $((q \circ T)[s, t], \bot) \in E_{\theta}$. On choosing $s = q, t = \bot$ and using \pref{Prop.Rho.Hom}(i) we have $((q \circ T)[q, \bot], \bot) \in E_{\theta}$ that is $(q, \bot) \in E_{\theta}$ as desired.

      $(:\Leftarrow)$ First note that, for $\alpha \in M$,
      \begin{equation}\label{albotisbot}
        \alpha[\bot, \bot] = \bot
      \end{equation}
      (cf. \cite[Proposition 2.8(1)]{panicker16}). Now assume that $(q, \bot) \in E_{\theta}$. Then there exists $\beta \in \overline{T}^{\theta}$ such that $\beta[q, \bot] = \beta[\bot, \bot]$. However, by \eqref{albotisbot}, we have $\beta[q, \bot] = \bot$. Thus, $\beta[q, \bot] \circ T  = \bot \circ T$ so that $\beta \llbracket q \circ T, \bot \circ T \rrbracket  = U$ (using \eqref{EM7} and \eqref{EM6}). Consequently, using \eqref{albotisbot} on $(M, M)$, we have $\beta \llbracket q \circ T, U \rrbracket = \beta \llbracket U, U \rrbracket$.
  Hence $(q \circ T, U) \in E_{\theta_{M}}$ and so from \pref{PropCollection}(iii), $(q \circ T, U) \in \theta$.

  Thus $(q \circ T, U) \in \theta \Leftrightarrow (q, \bot) \in E_{\theta}$.

  \item We first show that $(q \circ T, T) \in \theta \Leftrightarrow (q \circ F, F) \in \theta$ by making use of the substitution property of the congruence $\theta$ with respect to $\neg$, the fact that $\neg$ is an involution and \eqref{EM4}. Thus $(q \circ T, T) \in \theta \Leftrightarrow (\neg (q \circ T), \neg T) \in \theta \Leftrightarrow (q \circ (\neg T), \neg T) \in \theta \Leftrightarrow (q \circ F, F) \in \theta$. Using \pref{PropMaxTheta}, \pref{Prop.Rho.Hom}(ii) and \pref{Prop.Rho.Hom}(iii) we show the equivalence $(q \circ T, T) \in \theta \Leftrightarrow (q, \bot) \notin E_{\theta}$. We have $(q \circ T, T) \in \theta \Rightarrow (q \circ T, U) \notin \theta \Rightarrow (q, \bot) \notin E_{\theta}$. Conversely $(q, \bot) \notin E_{\theta} \Rightarrow (q \circ T, U) \notin \theta$. Using \pref{Prop.Rho.Hom}(ii) it follows that $(q \circ T, F) \notin \theta$. Since $\theta$ is a maximal congruence the only remaining possibility is that $(q \circ T, T) \in \theta$ which completes the proof.

  \item Consider $(s, t) \in E_{\theta}$ and $\alpha \in M$. Then there exists $\beta \in \overline{T}^{\theta}$ such that $\beta[s, t] = \beta[t, t]$. Thus $\beta[s, t] \circ \alpha = \beta[t, t] \circ \alpha$. Using \eqref{EM6} we have $\beta \llbracket s \circ \alpha, t \circ \alpha \rrbracket = \beta \llbracket t \circ \alpha, t \circ \alpha \rrbracket$ from which it follows that $(s \circ \alpha, t \circ \alpha) \in E_{\theta_{M}} \subseteq \theta$ by \pref{PropCollection}(iii).

  \item Suppose that $(1, \bot) \in E_{\theta}$. Using \pref{Prop.Rho.Hom}(v), \eqref{EM1}, \eqref{EM7} we have $(1, \bot) \in E_{\theta} \Rightarrow (1 \circ T, \bot \circ T) \in \theta$ and so $(T, U) \in \theta$ a contradiction by \pref{PropMaxTheta}.
 \end{enumerate}
\end{proof}

\subsection{A class of homomorphisms separating pairs of elements}
\label{collec_homos} For each maximal congruence $\theta$ on $M$, in this subsection, we present homomorphisms $\phi_{\theta} : S_{\bot} \rightarrow \mathcal{T}_{o}(S_{{\theta_{\bot}}})$ and $\rho_{\theta} : M \rightarrow \mathbb{3}^{S_{\theta}}$. Then we establish that $(\phi_{\theta}, \rho_{\theta})$ is a homomorphism from $(S_{\bot}, M)$ to the functional $C$-monoid $\big( \mathcal{T}_{o}(S_{{\theta_{\bot}}}), \mathbb{3}^{S_{\theta}})$. Further, we ascertain that every pair of elements in $S_\bot$ (or $M$) are separated by some $\phi_{\theta}$ (or $\rho_{\theta}$).

\begin{proposition} \label{Prop.phitheta}
The function $\phi_{\theta} : S_{\bot} \rightarrow \mathcal{T}_{o}(S_{{\theta_{\bot}}})$ given by $\phi_{\theta}(s) = \psi_{\theta}^{s}$, where $\psi_{\theta}^{s}(\overline{t}^{_{E_{\theta}}}) = \overline{t \cdot s}^{_{E_{\theta}}}$, is a monoid homomorphism that maps the zero (and base point) of $S_{\bot}$ to that of $\mathcal{T}_{o}(S_{{\theta_{\bot}}})$, that is $\bot \mapsto \zeta_{\overline{\bot}}$.
\end{proposition}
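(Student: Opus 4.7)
The plan is to verify, in order, (a) that each $\psi_\theta^s$ is a well-defined self-map of $S_{\theta_\bot}$; (b) that $\psi_\theta^s$ fixes the base point $\overline{\bot}^{E_\theta}$, so that $\psi_\theta^s \in \mathcal{T}_o(S_{\theta_\bot})$; (c) that $\phi_\theta$ preserves the monoid identity and the product; and (d) that $\phi_\theta(\bot) = \zeta_{\overline{\bot}}$.

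For (a), the task reduces to showing that $E_\theta$ is a right congruence for the multiplication on $S_\bot$: given $(t, t') \in E_\theta$ and $s \in S_\bot$, I need $(t \cdot s, t' \cdot s) \in E_\theta$. The definition of $E_\theta$ supplies a witness $\beta \in \overline{T}^\theta$ with $\beta[t, t'] = \beta[t', t']$. Right-multiplying both sides by $s$ and invoking axiom \eqref{EM5}, I obtain $\beta[t \cdot s, t' \cdot s] = \beta[t' \cdot s, t' \cdot s]$, so the same $\beta$ witnesses $(t \cdot s, t' \cdot s) \in E_\theta$. This is the only step that invokes the $C$-monoid structure and is where the (small) real content of the proposition sits; I do not foresee any genuine obstacle.

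Steps (b), (c) and (d) are then immediate from the monoid axioms of $S_\bot$. For (b), since $\bot \cdot s = \bot$ in any $C$-monoid, $\psi_\theta^s(\overline{\bot}) = \overline{\bot \cdot s} = \overline{\bot}$. For (c), $\psi_\theta^1(\overline{t}) = \overline{t \cdot 1} = \overline{t}$ gives $\phi_\theta(1) = \mathrm{id}_{S_{\theta_\bot}}$; and, using the left-to-right composition convention $(f \cdot g)(x) = g(f(x))$ together with associativity of $\cdot$ on $S_\bot$, I compute
\[
\psi_\theta^{s \cdot r}(\overline{t}) = \overline{t \cdot (s \cdot r)} = \overline{(t \cdot s) \cdot r} = \psi_\theta^{r}(\overline{t \cdot s}) = (\psi_\theta^{s} \cdot \psi_\theta^{r})(\overline{t}),
\]
which yields $\phi_\theta(s \cdot r) = \phi_\theta(s) \cdot \phi_\theta(r)$. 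For (d), $\psi_\theta^\bot(\overline{t}) = \overline{t \cdot \bot} = \overline{\bot}$ for every $t$, so $\phi_\theta(\bot) = \zeta_{\overline{\bot}}$.
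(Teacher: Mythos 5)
Your proposal is correct and follows essentially the same route as the paper's proof: the only substantive step is showing $E_{\theta}$ is a right congruence for $\cdot$ via the witness $\beta\in\overline{T}^{\theta}$ and axiom \eqref{EM5}, and the remaining claims (base point, identity, multiplicativity under the left-to-right composition convention) are the same routine computations the paper carries out.
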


\begin{proof}
\noindent
 \emph{Claim: $\phi_{\theta}$ is well-defined.} It suffices to show that $\psi_{\theta}^{s}$ is well-defined and that $\psi_{\theta}^{s} \in \mathcal{T}_{o}(S_{{\theta_{\bot}}})$, that is $\psi_{\theta}^{s}(\overline{\bot}) = \overline{\bot}$. In order to show the well-definedness of $\psi_{\theta}^{s}$ we consider $\overline{u} = \overline{t}$ that is $(u, t) \in E_{\theta}$. Then there exists $\beta \in \overline{T}^{\theta}$ such that $\beta[u, t] = \beta[t, t]$. Consequently
  \begin{align*}
   \beta[u \cdot s, t \cdot s] & = \beta[u, t] \cdot s & \text{ from } \eqref{EM5} \\
                   & = \beta[t, t] \cdot s & \text{ } \\
                   & = \beta[t \cdot s, t \cdot s] & \text{ from } \eqref{EM5}
  \end{align*}
  Thus $(u \cdot s, t \cdot s) \in E_{\theta}$ and so $\psi_{\theta}^{s}(\overline{u}) = \psi_{\theta}^{s}(\overline{t})$. Also $\psi_{\theta}^{s}(\overline{\bot}) = \overline{\bot \cdot s} = \overline{\bot}$. Thus $\psi_{\theta}^{s} \in \mathcal{T}_{o}(S_{{\theta_{\bot}}})$.

 \emph{Claim: $\phi_{\theta}(\bot) = \zeta_{\overline{\bot}}$\;.} We have $\phi_{\theta}(\bot) = \psi_{\theta}^{\bot}$ where $\psi_{\theta}^{\bot}(\overline{t}) = \overline{t \cdot \bot} = \overline{\bot}$. Thus $\phi_{\theta}(\bot) = \zeta_{\overline{\bot}}$.

 \emph{Claim: $\phi_{\theta}(1) = id_{S_{{\theta}_{\bot}}}$.} We have $\phi_{\theta}(1) = \psi_{\theta}^{1}$ where $\psi_{\theta}^{1}(\overline{t}) = \overline{t \cdot 1} = \overline{t}$.

 \emph{Claim: $\phi_{\theta}$ is a semigroup homomorphism.} Consider $\phi_{\theta}(s \cdot t) = \psi_{\theta}^{s \cdot t}$ where $\psi_{\theta}^{s \cdot t}(\overline{u}) = \overline{u \cdot (s \cdot t)} = \overline{(u \cdot s) \cdot t} = \psi_{\theta}^{t}(\overline{u \cdot s}) = \psi_{\theta}^{t}(\psi_{\theta}^{s}(\overline{u})) = (\psi_{\theta}^{s} \cdot \psi_{\theta}^{t})(\overline{u})$. Thus $\phi_{\theta}(s \cdot t) = \phi_{\theta}(s) \cdot \phi_{\theta}(t)$.
\end{proof}

\begin{proposition} \label{Prop.rhotheta}
 The function $\rho_{\theta} : M \rightarrow \mathbb{3}^{S_{\theta}}$ given by
 \begin{equation*}
 \rho_{\theta}(\alpha) =
 \begin{cases}
  (S_{\theta}, \emptyset), & \text{ if } \alpha = T; \\
  (\emptyset, S_{\theta}), & \text{ if } \alpha = F; \\
  (A_{\theta}^{\alpha}, B_{\theta}^{\alpha}), & \text{ otherwise}
 \end{cases}
 \end{equation*}
 where $A_{\theta}^{\alpha} = \{ \overline{t}^{_{E_{\theta}}} : t \circ \alpha \in \overline{T}^{\theta} \}$ and $B_{\theta}^{\alpha} = \{ \overline{t}^{_{E_{\theta}}} : t \circ \alpha \in \overline{F}^{\theta} \}$, is a homomorphism of $C$-algebras with $T, F, U$.
\end{proposition}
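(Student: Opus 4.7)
The plan is to reduce the case-by-case definition of $\rho_{\theta}$ to the single uniform formula $\rho_{\theta}(\alpha)(\overline{t}) = \pi_{\theta}(t \circ \alpha)$, and then to recognize that formula as a pointwise composition of $C$-algebra homomorphisms. The first ingredient is the identification $M/\theta \cong \mathbb{3}$: since $\theta$ is a maximal congruence of the ada $M$, the quotient $M/\theta$ is simple, and by Manes's result that $\mathbb{3}$ is the unique subdirectly irreducible (hence unique simple) ada, this quotient must be isomorphic to $\mathbb{3}$. Let $\pi_{\theta} : M \to \mathbb{3}$ denote the resulting ada-homomorphism, which by \pref{PropMaxTheta} sends $\alpha$ to $T$, $F$, or $U$ according as $\alpha$ lies in $\overline{T}^{\theta}$, $\overline{F}^{\theta}$, or $\overline{U}^{\theta}$.

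Next I would verify that this uniform formula agrees with each branch of the definition of $\rho_{\theta}$. Under the pair-of-sets description of $\mathbb{3}^{S_{\theta}}$, the ``otherwise'' case reads $\rho_{\theta}(\alpha)(\overline{t}) = T$ iff $t \circ \alpha \in \overline{T}^{\theta}$ iff $\pi_{\theta}(t \circ \alpha) = T$, and similarly for $F$, matching the formula. For $\alpha = T$ and $\alpha = F$, note that $\overline{t} \in S_{\theta}$ means $(t, \bot) \notin E_{\theta}$, so by \pref{Prop.Rho.Hom}(iv) we have $(t \circ T, T) \in \theta$ and $(t \circ F, F) \in \theta$; hence $\pi_{\theta}(t \circ T) = T$ and $\pi_{\theta}(t \circ F) = F$ uniformly on $S_{\theta}$, matching the assignments $(S_{\theta}, \emptyset)$ and $(\emptyset, S_{\theta})$. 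Independence of the representative $t$, required for $\rho_{\theta}(\alpha)$ to be a well-defined function on $S_{\theta}$, is exactly \pref{Prop.Rho.Hom}(v).

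With the uniform formula in hand, I would prove the homomorphism property pointwise. For each fixed $\overline{t} \in S_{\theta}$, the map $\alpha \mapsto t \circ \alpha$ preserves $\wedge$ and $\neg$ by \eqref{EM3} and \eqref{EM4}, and therefore preserves $\vee$ as well via the identity $\alpha \vee \beta = \neg(\neg \alpha \wedge \neg \beta)$ that follows from \eqref{C1} and \eqref{C2}; composing with the ada-homomorphism $\pi_{\theta}$ shows that $\alpha \mapsto \pi_{\theta}(t \circ \alpha)$ is a homomorphism of $C$-algebras, and since the operations on $\mathbb{3}^{S_{\theta}}$ are computed pointwise, $\rho_{\theta}$ itself preserves $\wedge$, $\vee$, and $\neg$. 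Preservation of the constants reduces to three further pointwise identities: $\pi_{\theta}(t \circ T) = T$ and $\pi_{\theta}(t \circ F) = F$ from \pref{Prop.Rho.Hom}(iv) as above, and $\pi_{\theta}(t \circ U) = \pi_{\theta}(U) = U$ from axiom \eqref{EM8} together with \pref{PropMaxTheta}. The main conceptual step is the identification $M/\theta \cong \mathbb{3}$, which collapses the piecewise definition into the single formula above; once that is done, the remaining verifications are routine applications of the axioms of a $C$-monoid and of \pref{Prop.Rho.Hom}.
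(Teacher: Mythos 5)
Your proof is correct, but it takes a genuinely different and more structural route than the paper. The paper proves the proposition by brute force: it verifies well-definedness, the constants, $\neg$, and then $\wedge$ through a lengthy case analysis on whether $\alpha$, $\beta$, and $\alpha \wedge \beta$ lie in $\{T,F\}$, in each case arguing directly with the congruence $\theta$ about which of $T, F, U$ the element $q \circ \alpha$ is related to. You instead observe that a maximal congruence yields a simple (hence subdirectly irreducible) quotient, so Manes's classification --- cited in the paper --- forces $M/\theta \cong \mathbb{3}$, giving a homomorphism $\pi_{\theta} : M \to \mathbb{3}$ of $C$-algebras with $T,F,U$; you then collapse the piecewise definition into the single formula $\rho_{\theta}(\alpha)(\overline{t}) = \pi_{\theta}(t \circ \alpha)$ (using \pref{Prop.Rho.Hom}(iv) to absorb the $\alpha \in \{T,F\}$ branches and \pref{Prop.Rho.Hom}(v) for representative-independence) and read off the homomorphism property pointwise from \eqref{EM3}, \eqref{EM4}, \eqref{EM8}, the derived identity $\alpha \vee \beta = \neg(\neg\alpha \wedge \neg\beta)$ from \eqref{C1} and \eqref{C2}, and the fact that $\pi_{\theta}$ is a homomorphism. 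This is substantially shorter and also explains why the case split in the definition of $\rho_{\theta}$ is inessential. The trade-off is that you lean explicitly on the classification of subdirectly irreducible adas and on the universal-algebraic facts that maximal congruences give simple quotients and that simple implies subdirectly irreducible; note, however, that the paper's own proof already invokes the embedding $M \leq \mathbb{3}^{X}$ (to rule out $\alpha \wedge \beta = T$ when $\alpha, \beta \notin \{T,F\}$), which rests on the same classification, so you are not importing anything the paper genuinely avoids.
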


\begin{proof} \emph{Claim: $\rho_{\theta}$ is well-defined.} If $\alpha \in \{ T, F \}$ then the proof is obvious. If $\alpha \notin \{ T, F \}$ then we show that $A_{\theta}^{\alpha} \cap B_{\theta}^{\alpha} = \emptyset$ and that $A_{\theta}^{\alpha}, B_{\theta}^{\alpha} \subseteq S_{\theta}$, that is, $\overline{\bot} \notin A_{\theta}^{\alpha} \cup B_{\theta}^{\alpha}$. Let $\overline{t} \in A_{\theta}^{\alpha} \cap B_{\theta}^{\alpha}$. Then $t \circ \alpha \in \overline{T}^{\theta}$ and $t \circ \alpha \in \overline{F}^{\theta}$ and so $(T, F) \in \theta$ which is a contradiction to \pref{PropMaxTheta}. Using \eqref{EM7} we have $\bot \circ \alpha = U$ and so if $\overline{\bot} \in A_{\theta}^{\alpha} \cup B_{\theta}^{\alpha}$ we would have $\bot \circ \alpha = U \in \{ \overline{T}^{\theta}, \overline{F}^{\theta} \}$, a contradiction to \pref{PropMaxTheta}. Finally we show that the image under $\rho_{\theta}$ is independent of the representative of the equivalence class chosen. Using \pref{Prop.Rho.Hom}(v) we have $\overline{s} = \overline{t} \Rightarrow (s \circ \alpha,  t \circ \alpha) \in \theta$. The result follows.

 \emph{Claim: $\rho_{\theta}$ preserves the constants $T, F, U$.} It is clear that $\rho_{\theta}(T) = (S_{\theta}, \emptyset)$, $\rho_{\theta}(F) = (\emptyset, S_{\theta})$ and, using \eqref{EM8} and \pref{PropMaxTheta}, that $\rho_{\theta}(U) = (A_{\theta}^{U}, B_{\theta}^{U}) = (\emptyset, \emptyset)$ from which the result follows.

 \emph{Claim: $\rho_{\theta}$ is a $C$-algebra homomorphism.} We show that $\rho_{\theta}(\neg \alpha) = \neg (\rho_{\theta}(\alpha)).$ If $\alpha \in \{ T, F \}$ the proof is obvious. Suppose that $\alpha \notin \{ T, F \}$. Then we have the following.
   \begin{align*}
    \rho_{\theta}(\neg \alpha) & = (A_{\theta}^{\neg \alpha}, B_{\theta}^{\neg \alpha}) \\
                               & = (\{ \overline{t} : t \circ (\neg \alpha) \in \overline{T}^{\theta} \}, \{ \overline{t} : t \circ (\neg \alpha) \in \overline{F}^{\theta} \} ) \\
                               & = (\{ \overline{t} : \neg (t \circ \alpha) \in \overline{T}^{\theta} \}, \{ \overline{t} : \neg (t \circ \alpha) \in \overline{F}^{\theta} \} ) & \text{ using } \eqref{EM4} \\
                               & = (\{ \overline{t} : t \circ \alpha \in \overline{F}^{\theta} \}, \{ \overline{t} : t \circ \alpha \in \overline{T}^{\theta} \} ) \\
                               & = (B_{\theta}^{\alpha}, A_{\theta}^{\alpha}) \\
                               & = \neg(\rho_{\theta}(\alpha))
   \end{align*}

 Finally we show that $\rho_{\theta}(\alpha \wedge \beta) = \rho_{\theta}(\alpha) \wedge \rho_{\theta}(\beta)$. Note that the proof of $\rho_{\theta}(\alpha \vee \beta) = \rho_{\theta}(\alpha) \vee \rho_{\theta}(\beta)$ follows using the double negation and De Morgan's laws, viz., \eqref{C1} and \eqref{C2} respectively in conjunction with the fact that $\rho_{\theta}$ preserves $\neg$ and $\wedge$. In order to prove that $\rho_{\theta}(\alpha \wedge \beta) = \rho_{\theta}(\alpha) \wedge \rho_{\theta}(\beta)$ we proceed by considering the following cases. \\

\noindent  \emph{Case I}: $\alpha, \beta \notin \{ T, F \}$. We have the following subcases:

\noindent \emph{Subcase 1}: $\alpha \wedge \beta \notin \{ T, F \}$. Then $\rho_{\theta}(\alpha \wedge \beta) = (A_{\theta}^{\alpha \wedge \beta}, B_{\theta}^{\alpha \wedge \beta})$, $\rho_{\theta}(\alpha) = (A_{\theta}^{\alpha}, B_{\theta}^{\alpha})$ and $\rho_{\theta}(\beta) = (A_{\theta}^{\beta}, B_{\theta}^{\beta})$. Now $(A_{\theta}^{\alpha}, B_{\theta}^{\alpha}) \wedge (A_{\theta}^{\beta}, B_{\theta}^{\beta}) = (A_{\theta}^{\alpha} \cap A_{\theta}^{\beta}, B_{\theta}^{\alpha} \cup (A_{\theta}^{\alpha} \cap B_{\theta}^{\beta}))$. Thus we have to show that
     $$(A_{\theta}^{\alpha \wedge \beta}, B_{\theta}^{\alpha \wedge \beta}) = (A_{\theta}^{\alpha} \cap A_{\theta}^{\beta}, B_{\theta}^{\alpha} \cup (A_{\theta}^{\alpha} \cap B_{\theta}^{\beta})).$$ We show that the pairs of sets are equal componentwise.

     Let $\overline{q} \in A_{\theta}^{\alpha \wedge \beta}$. Then $q \circ (\alpha \wedge \beta) \in \overline{T}^{\theta}$
      \[\begin{array}{ccll}
         &\Rightarrow& ((q \circ \alpha) \wedge (q \circ \beta), T) \in \theta  & \text{(using \eqref{EM3})}\\
         &\Rightarrow& ( (q \circ \alpha) \wedge ((q \circ \alpha) \wedge (q \circ \beta)), (q \circ \alpha) \wedge T) \in \theta & \text{(since $\theta$ is a congruence)}\\
         &\Rightarrow& ((q \circ \alpha) \wedge (q \circ \beta), q \circ \alpha) \in \theta & \text{(using the properties of $\wedge$)}\\
         &\Rightarrow& (q \circ \alpha, T) \in \theta & \text{(by transitivity of $\theta$)}
      \end{array}\]
      so that $\overline{q} \in A_{\theta}^{\alpha}$. Along similar lines one can observe that \[((((q \circ \alpha) \wedge (q \circ \beta)) \wedge (q \circ \beta)), T \wedge (q \circ \beta)) \in \theta.\] Consequently  $(q \circ \beta, T) \in \theta$ so that $\overline{q} \in A_{\theta}^{\beta}$.  Hence $A_{\theta}^{\alpha \wedge \beta} \subseteq A_{\theta}^{\alpha} \cap A_{\theta}^{\beta}$.

      For reverse inclusion let $\overline{q} \in A_{\theta}^{\alpha} \cap A_{\theta}^{\beta}$. Then  $(q \circ \alpha, T), (q \circ \beta, T) \in \theta$. Since $\theta$ is a congruence we have $((q \circ \alpha) \wedge (q \circ \beta), T \wedge T) = ((q \circ \alpha) \wedge (q \circ \beta), T) = ((q \circ (\alpha \wedge \beta), T) \in \theta$ and so $\overline{q} \in A_{\theta}^{\alpha \wedge \beta}$. Hence $A_{\theta}^{\alpha \wedge \beta} = A_{\theta}^{\alpha} \cap A_{\theta}^{\beta}$.

      In order to show that $B_{\theta}^{\alpha \wedge \beta} \subseteq B_{\theta}^{\alpha} \cup (A_{\theta}^{\alpha} \cap B_{\theta}^{\beta})$ consider $\overline{q} \in B_{\theta}^{\alpha \wedge \beta}$ that is $(q \circ (\alpha \wedge \beta), F) \in \theta$. Since $\theta$ is a maximal congruence consider the following three possibilities:
      \begin{description}
       \item[$(q \circ \alpha, F) \in \theta$] Then clearly $\overline{q} \in B_{\theta}^{\alpha}$ and so $\overline{q} \in B_{\theta}^{\alpha} \cup (A_{\theta}^{\alpha} \cap B_{\theta}^{\beta})$. \\
      \item[$(q \circ \alpha, T) \in \theta$] Then we have $\overline{q} \in A_{\theta}^{\alpha}$.  We show that $(q \circ \beta, F) \in \theta$. If this is not the case then either $(q \circ \beta, T) \in \theta$ or $(q \circ \beta, U) \in \theta$. If $(q \circ \beta, T) \in \theta$ then since $(q \circ \alpha, T) \in \theta$ we have $(q \circ (\alpha \wedge \beta), T \wedge T) = (q \circ (\alpha \wedge \beta), T) \in \theta$ using \eqref{EM3} and the fact that $\theta$ is a congruence. However since $(q \circ (\alpha \wedge \beta), F) \in \theta$ we obtain a contradiction that $(T, F) \in \theta$ (cf. \pref{PropMaxTheta}). Along similar lines if $(q \circ \beta, U) \in \theta$ then as $(q \circ \alpha, T) \in \theta$ we have $(q \circ (\alpha \wedge \beta), U) \in \theta$ and so $(F, U) \in \theta$ a contradiction to \pref{PropMaxTheta}. Hence $(q \circ \beta, F) \in \theta$ so that $\overline{q} \in (A_{\theta}^{\alpha} \cap B_{\theta}^{\beta}) \subseteq B_{\theta}^{\alpha} \cup (A_{\theta}^{\alpha} \cap B_{\theta}^{\beta})$. \\
      \item[$(q \circ \alpha, U) \in \theta$] Since  $(q \circ \beta, q \circ \beta) \in \theta$  we have  $(q \circ (\alpha \wedge \beta), U \wedge q \circ \beta) = (q \circ (\alpha \wedge \beta), U) \in \theta$ using  \eqref{EM3} and the fact that $\theta$ is a congruence. However since $(q \circ (\alpha \wedge \beta), F) \in \theta$ we have $(F, U) \in \theta$ a contradiction to \pref{PropMaxTheta}. Thus this case cannot occur.
      \end{description}

      To show the reverse inclusion let $\overline{q} \in B_{\theta}^{\alpha} \cup (A_{\theta}^{\alpha} \cap B_{\theta}^{\beta})$ that is $\overline{q} \in B_{\theta}^{\alpha}$ or $\overline{q} \in A_{\theta}^{\alpha} \cap B_{\theta}^{\beta}$.
      If $\overline{q} \in B_{\theta}^{\alpha}$ then $(q \circ \alpha, F) \in \theta$
      \[\begin{array}{ccll}
         &\Rightarrow& ((q \circ \alpha) \wedge (q \circ \beta), F \wedge (q \circ \beta)) \in \theta & \text{(since $\theta$ is a congruence)}\\
         &\Rightarrow& (q \circ (\alpha \wedge \beta), F \wedge (q \circ \beta)) \in \theta  & \text{(using \eqref{EM3})} \\
         &\Rightarrow& (q \circ (\alpha \wedge \beta), F) \in \theta & \text{(since $F$ is a left-zero for $\wedge$)}\\
      \end{array}\]
      from which it follows that $\overline{q} \in B_{\theta}^{\alpha \wedge \beta}$. In the case where $\overline{q} \in A_{\theta}^{\alpha} \cap B_{\theta}^{\beta}$ that is $(q \circ \alpha, T), (q \circ \beta, F) \in \theta$, along similar lines it follows that $(q \circ (\alpha \wedge \beta), T \wedge F) = (q \circ (\alpha \wedge \beta), F) \in \theta$ and so $\overline{q} \in B_{\theta}^{\alpha \wedge \beta}$. Therefore $B_{\theta}^{\alpha \wedge \beta} = B_{\theta}^{\alpha} \cup (A_{\theta}^{\alpha} \cap B_{\theta}^{\beta})$. \\

\noindent \emph{Subcase 2}: $\alpha \wedge \beta \in \{ T, F \}$.  Using the fact that $M \leq \mathbb{3}^{X}$ for some set $X$ it is easy to see that if $\alpha, \beta \notin \{ T, F \}$ then $\alpha \wedge \beta \neq T$. It follows that the only possibility in this case is that $\alpha \wedge \beta = F$. Therefore $\rho_{\theta}(\alpha \wedge \beta) = \rho_{\theta}(F) = (\emptyset, S_{\theta})$ and $\rho_{\theta}(\alpha) \wedge \rho_{\theta}(\beta) = (A_{\theta}^{\alpha} \cap A_{\theta}^{\beta}, B_{\theta}^{\alpha} \cup (A_{\theta}^{\alpha} \cap B_{\theta}^{\beta}))$ and so we have to show that
$$(\emptyset, S_{\theta}) = (A_{\theta}^{\alpha} \cap A_{\theta}^{\beta}, B_{\theta}^{\alpha} \cup (A_{\theta}^{\alpha} \cap B_{\theta}^{\beta})).$$

      We first show that $A_{\theta}^{\alpha} \cap A_{\theta}^{\beta} = \emptyset$. If $A_{\theta}^{\alpha} \cap A_{\theta}^{\beta} \neq \emptyset$ then let $\overline{q} \in A_{\theta}^{\alpha} \cap A_{\theta}^{\beta}$ so that $(q \circ \alpha, T) \in \theta, (q \circ \beta, T)  \in \theta$
      \[\begin{array}{ccll}
         &\Rightarrow& ((q \circ \alpha) \wedge (q \circ \beta), T \wedge T) = ((q \circ \alpha) \wedge (q \circ \beta), T) \in \theta  & \text{(since $\theta$ is a congruence)} \\
         &\Rightarrow& (q \circ (\alpha \wedge \beta), T) \in \theta  & \text{(using \eqref{EM3})} \\
         &\Rightarrow& (q \circ F, T) \in \theta & \text{(since $\alpha \wedge \beta = F$)}\\
         &\Rightarrow& (\neg (q \circ F), \neg T) = (\neg (q \circ F), F) \in \theta & \text{(since $\theta$ is a congruence)}\\
         &\Rightarrow& (q \circ \neg F, F) = (q \circ T, F) \in \theta & \text{(using \eqref{EM4})}\\
      \end{array}\]
which is a contradiction to \pref{Prop.Rho.Hom}(ii). Hence $A_{\theta}^{\alpha} \cap A_{\theta}^{\beta} = \emptyset$.

      In order to show that $B_{\theta}^{\alpha} \cup (A_{\theta}^{\alpha} \cap B_{\theta}^{\beta}) = S_{\theta}$ consider $\overline{q} \in S_{\theta}$ that is $\overline{q} \neq \overline{\bot}$ which gives $(q, \bot) \notin E_{\theta}$. We proceed by considering the following three cases:
      \begin{description}
       \item[$(q \circ \alpha, F) \in \theta$] Then it is clear that $\overline{q} \in B_{\theta}^{\alpha} \subseteq B_{\theta}^{\alpha} \cup (A_{\theta}^{\alpha} \cap B_{\theta}^{\beta})$. \\
      \item[$(q \circ \alpha, T) \in \theta$] Then we have $\overline{q} \in A_{\theta}^{\alpha}$. We show that $(q \circ \beta, F) \in \theta$. Suppose that this is not the case. Since $\theta$ is a maximal congruence it implies that either $(q \circ \beta, T) \in \theta$ or $(q \circ \beta, U) \in \theta$. If $(q \circ \beta, T) \in \theta$ then since $(q \circ \alpha, T) \in \theta$ it follows that $(q \circ (\alpha \wedge \beta), T \wedge T) = (q \circ F, T) \in \theta$ so that $(q \circ T, F) \in \theta$. This is a contradiction to \pref{Prop.Rho.Hom}(ii). In the case that $(q \circ \beta, U) \in \theta$ proceeding as earlier we have $(q \circ (\alpha \wedge \beta), T \wedge U) = (q \circ F, U) \in \theta$ so that $(q \circ T, U) \in \theta$. It follows from \pref{Prop.Rho.Hom}(iii) that $(q, \bot) \in E_{\theta}$ which is a contradiction to the assumption that $\overline{q} \in S_{\theta}$. Consequently it must be the case that $(q \circ \beta, F) \in \theta$ so that $\overline{q} \in A_{\theta}^{\alpha} \cap B_{\theta}^{\beta} \subseteq B_{\theta}^{\alpha} \cup (A_{\theta}^{\alpha} \cap B_{\theta}^{\beta})$. \\
      \item[$(q \circ \alpha, U) \in \theta$] Since $\theta$ is a congruence we have $(q \circ \beta, q \circ \beta) \in \theta$
      \[\begin{array}{ccll}
         &\Rightarrow& ((q \circ \alpha) \wedge (q \circ \beta), U \wedge (q \circ \beta))  \in \theta & \text{(since $\theta$ is a congruence)}\\
         &\Rightarrow& ((q \circ (\alpha \wedge \beta), U) = (q \circ F, U) \in \theta  & \text{(since $U$ is a left-zero for $\wedge$ and using \eqref{EM3})} \\
         %&\Rightarrow& ((q \circ (\alpha \wedge \beta), U) \in \theta  & \text{(since $U$ is a left-zero for $\wedge$)} \\
         %&\Rightarrow& (q \circ F, U) \in \theta & \text{(since $\alpha \wedge \beta = F$)}\\
         &\Rightarrow& (\neg (q \circ F), \neg U) \in \theta & \text{(since $\theta$ is a congruence)}\\
         &\Rightarrow& (q \circ \neg F, \neg U) = (q \circ T, U) \in \theta & \text{(using \eqref{EM4})}\\
      \end{array}\]
        Thus using \pref{Prop.Rho.Hom}(iii) we have $(q, \bot) \in E_{\theta}$ which is a contradiction to the assumption that $\overline{q}\in S_{\theta}$. Hence this case cannot occur.
      \end{description}
      Thus $B_{\theta}^{\alpha} \cup (A_{\theta}^{\alpha} \cap B_{\theta}^{\beta}) = S_{\theta}$ which completes the proof in the case where $\alpha, \beta \notin \{ T, F \}$. \\

\noindent \emph{Case II}: $\alpha \in \{ T, F \}$. The verification is straightforward by considering $\alpha = T$ and $\alpha = F$ casewise.

\noindent \emph{Subcase 1}: $\alpha = T$. Then $\rho_{\theta}(\alpha \wedge \beta) = \rho_{\theta}(T \wedge \beta) = \rho_{\theta}(\beta) = (S_{\theta}, \emptyset) \wedge \rho_{\theta}(\beta) = \rho_{\theta}(T) \wedge \rho_{\theta}(\beta) = \rho_{\theta}(\alpha) \wedge \rho_{\theta}(\beta)$. \\

\noindent \emph{Subcase 2}: $\alpha = F$. Then $\rho_{\theta}(\alpha \wedge \beta) = \rho_{\theta}(F \wedge \beta) = \rho_{\theta}(F) = (\emptyset, S_{\theta}) = (\emptyset, S_{\theta}) \wedge \rho_{\theta}(\beta) = \rho_{\theta}(F) \wedge \rho_{\theta}(\beta) = \rho_{\theta}(\alpha) \wedge \rho_{\theta}(\beta)$. \\

\noindent \emph{Case III}: $\beta \in \{ T, F \}$. We have the following subcases:

\noindent \emph{Subcase 1}: $\beta = T$. The proof follows along the same lines as \emph{Case II} above since $T$ is the left and right-identity for $\wedge$. Thus $\rho_{\theta}(\alpha \wedge \beta) = \rho_{\theta}(\alpha \wedge T) = \rho_{\theta}(\alpha) = \rho_{\theta}(\alpha) \wedge (S_{\theta}, \emptyset) = \rho_{\theta}(\alpha) \wedge \rho_{\theta}(T) = \rho_{\theta}(\alpha) \wedge \rho_{\theta}(\beta)$. \\

\noindent \emph{Subcase 2}: $\beta = F$. If $\alpha \in \{ T, F \}$ then this reduces to \emph{Case II} proved above and consequently we have $\rho_{\theta}(\alpha \wedge \beta) = \rho_{\theta}(\alpha) \wedge \rho_{\theta}(\beta)$ in this case. Thus it remains to consider the case where $\alpha \notin \{ T, F \}$. We then have the following subcases depending on $\alpha \wedge \beta$:
      \begin{description}
       \item[$\alpha \wedge \beta \notin \{ T, F \}$] Then $\rho_{\theta}(\alpha \wedge \beta) = \rho_{\theta}(\alpha \wedge F) = (A_{\theta}^{\alpha \wedge F}, B_{\theta}^{\alpha \wedge F})$ while $\rho_{\theta}(\alpha) = (A_{\theta}^{\alpha}, B_{\theta}^{\alpha})$ and $\rho_{\theta}(\beta) = \rho_{\theta}(F) = (\emptyset, S_{\theta})$. Thus $\rho_{\theta}(\alpha) \wedge \rho_{\theta}(F) = (\emptyset, A_{\theta}^{\alpha} \cup B_{\theta}^{\alpha})$. We show that $$(A_{\theta}^{\alpha \wedge F}, B_{\theta}^{\alpha \wedge F}) = (\emptyset, A_{\theta}^{\alpha} \cup B_{\theta}^{\alpha})$$ as earlier by proving that the pairs of sets are equal componentwise. \\

           We show that $A_{\theta}^{\alpha \wedge F} = \emptyset$ by contradiction. If $A_{\theta}^{\alpha \wedge F} \neq \emptyset$ then consider $\overline{q} \in A_{\theta}^{\alpha \wedge F}$. It follows that $(q \circ (\alpha \wedge F), T) \in \theta$
                 \[\begin{array}{ccll}
         &\Rightarrow& ((q \circ (\alpha \wedge F)) \wedge (q \circ F), T \wedge q \circ F) \in \theta & \text{(since $\theta$ is a congruence)} \\
         &\Rightarrow& ((q \circ (\alpha \wedge F)) \wedge (q \circ F), q \circ F) \in \theta & \text{(since $T$ is a left-identity for $\wedge$)} \\
         &\Rightarrow& (((q \circ \alpha) \wedge (q \circ F)) \wedge (q \circ F), q \circ F) \in \theta & \text{(using \eqref{EM3})} \\
         &\Rightarrow& ((q \circ \alpha) \wedge (q \circ F), q \circ F) \in \theta & \text{(using the properties of $\wedge$)} \\
         &\Rightarrow& (q \circ F, T) \in \theta & \text{(since $\theta$ is a congruence)} \\
         &\Rightarrow& (q \circ T, F) \in \theta & \text{(from \eqref{EM4} and since $\theta$ is a congruence)}
      \end{array}\]
      which is a contradiction to \pref{Prop.Rho.Hom}(ii). Hence $A_{\theta}^{\alpha \wedge F} = \emptyset$. \\

    We show that $B_{\theta}^{\alpha \wedge F} = A_{\theta}^{\alpha} \cup B_{\theta}^{\alpha}$ using standard set theoretic arguments. Let $\overline{q} \in B_{\theta}^{\alpha \wedge F}$ and so $(q \circ (\alpha \wedge F), F) \in \theta$ so that $((q \circ \alpha) \wedge (q \circ F), F) \in \theta$. In view of the maximality of $\theta$ it suffices to consider three cases. If either $(q \circ \alpha, T) \in \theta$ or $(q \circ \alpha, F) \in \theta$ then $\overline{q} \in A_{\theta}^{\alpha} \cup B_{\theta}^{\alpha}$. If $(q \circ \alpha, U) \in \theta$ then $((q \circ \alpha) \wedge ((q \circ \alpha) \wedge (q \circ F)), U \wedge F) = ((q \circ \alpha) \wedge (q \circ F), U) \in \theta$.Thus $(F, U) \in \theta$ which is a contradiction to \pref{PropMaxTheta}. Hence this case cannot occur and so $B_{\theta}^{\alpha \wedge F} \subseteq A_{\theta}^{\alpha} \cup B_{\theta}^{\alpha}$. \\

    For the reverse inclusion consider $\overline{q} \in A_{\theta}^{\alpha} \cup B_{\theta}^{\alpha}$ so that $\overline{q} \in A_{\theta}^{\alpha}$ or $\overline{q} \in B_{\theta}^{\alpha}$. If $\overline{q} \in A_{\theta}^{\alpha}$ then $(q \circ \alpha, T) \in \theta$. Since $\overline{q} \in A_{\theta}^{\alpha} \subseteq S_{\theta}$ using \pref{Prop.Rho.Hom}(iv) we have $(q, \bot) \notin E_{\theta} \Rightarrow (q \circ F, F) \in \theta$. Consequently $(q \circ (\alpha \wedge F), (T \wedge F)) = (q \circ (\alpha \wedge F), F) \in \theta$ and so $\overline{q} \in B_{\theta}^{\alpha \wedge F}$. Along similar lines if $\overline{q} \in B_{\theta}^{\alpha}$ we have $(q \circ (\alpha \wedge F), F) \in \theta$ so that $\overline{q} \in B_{\theta}^{\alpha \wedge F}$. Hence $(A_{\theta}^{\alpha \wedge F}, B_{\theta}^{\alpha \wedge F}) = (\emptyset, A_{\theta}^{\alpha} \cup B_{\theta}^{\alpha})$. \\

       \item[$\alpha \wedge \beta \in \{ T, F \}$] Using the fact that $M \leq \mathbb{3}^{X}$ for some set $X$ we have $\alpha \wedge F \neq T$ from which it follows that the only case is $\alpha \wedge \beta = \alpha \wedge F = F$. Thus $\rho_{\theta}(\alpha \wedge F) = \rho_{\theta}(F) = (\emptyset, S_{\theta})$ while $\rho_{\theta}(\alpha) \wedge \rho_{\theta}(F) = (\emptyset, A_{\theta}^{\alpha} \cup B_{\theta}^{\alpha})$. We show that $$(\emptyset, S_{\theta}) = (\emptyset, A_{\theta}^{\alpha} \cup B_{\theta}^{\alpha}).$$ In order to show that $A_{\theta}^{\alpha} \cup B_{\theta}^{\alpha} = S_{\theta}$ consider $\overline{q} \in S_{\theta}$. If $(q \circ \alpha, T) \in \theta$ or $(q \circ \alpha, F) \in \theta$ then the proof is complete. If $(q \circ \alpha, U) \in \theta$ then since $\overline{q} \neq \overline{\bot}$ that is $(q, \bot) \notin E_{\theta}$ by \pref{Prop.Rho.Hom}(iv) we have $(q \circ F, F) \in \theta$. Thus  $(q \circ (\alpha \wedge F), U \wedge F) = (q \circ F, U) \in \theta$. Consequently from the transitivity of $\theta$ it follows that $(F, U) \in \theta$ which is a contradiction to \pref{PropMaxTheta}. Hence $(\emptyset, S_{\theta}) = (\emptyset, A_{\theta}^{\alpha} \cup B_{\theta}^{\alpha})$.
      \end{description}
Thus $\rho_{\theta}$ is a homomorphism of $C$-algebras with $T, F, U$.
\end{proof}

\begin{lemma} \label{LemmaPhiRhoTheta}
 The pair $(\phi_{\theta}, \rho_{\theta})$ is a $C$-monoid homomorphism from $(S_{\bot}, M)$ to the functional $C$-monoid $\big( \mathcal{T}_{o}(S_{{\theta_{\bot}}}), \mathbb{3}^{S_{\theta}} \big)$.
\end{lemma}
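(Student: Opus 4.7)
By Propositions \ref{Prop.phitheta} and \ref{Prop.rhotheta}, the component $\phi_\theta$ is already a monoid homomorphism sending the zero to the zero of $\mathcal{T}_o(S_{\theta_\bot})$, and $\rho_\theta$ is a $C$-algebra homomorphism preserving $T, F, U$. So the only thing left to verify is that the pair is compatible with the two \emph{mixed} operations of a $C$-monoid, namely the $C$-set action $[\_,\_]$ and the operation $\circ$. The plan is therefore to establish the two identities
\begin{align*}
\phi_\theta(\alpha[s,t]) &= \rho_\theta(\alpha)[\phi_\theta(s),\phi_\theta(t)], \\
\rho_\theta(s \circ \alpha) &= \phi_\theta(s)\circ\rho_\theta(\alpha),
\end{align*}
for all $s,t \in S_\bot$ and $\alpha \in M$, by evaluating both sides pointwise on representatives $\overline{r}^{E_\theta}$.

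For the first identity, I would evaluate at an arbitrary $\overline{r}\in S_{\theta_\bot}$. Definition of $\phi_\theta$ and the left composition axiom \eqref{EM9} rewrite the left-hand side as
\[
\phi_\theta(\alpha[s,t])(\overline{r}) \;=\; \overline{r\cdot\alpha[s,t]} \;=\; \overline{(r\circ\alpha)[r\cdot s,\, r\cdot t]}.
\]
Since $(E_\theta,\theta)$ is a $C$-set congruence (\pref{PropCollection}), part (i) of that proposition identifies this class as $\overline{r\cdot s}$, $\overline{r\cdot t}$, or $\overline{\bot}$, depending on whether $(r\circ\alpha,T)$, $(r\circ\alpha,F)$, or $(r\circ\alpha,U)$ lies in $\theta$. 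On the right-hand side, the functional action \eqref{FunctionalAction} distributes the same three cases according to the value of $\rho_\theta(\alpha)(\overline{r})$. Thus everything reduces to checking a dictionary: for each $\alpha\in M$ and $\overline{r}\in S_\theta$, the value $\rho_\theta(\alpha)(\overline{r})$ agrees with the $\theta$-class of $r\circ\alpha$. For $\alpha\notin\{T,F\}$ this is the very definition of $A_\theta^\alpha$ and $B_\theta^\alpha$; for $\alpha\in\{T,F\}$ it is \pref{Prop.Rho.Hom}(iv). The edge case $\overline{r}=\overline{\bot}$ is immediate because both sides fix the base point: the left via \eqref{EM7} and \pref{PropCollection}(i), the right by the convention on $\mathcal{T}_o(S_{\theta_\bot})$.

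For the second identity, I would fix $\overline{r}\in S_\theta$ and compute
\[
(\phi_\theta(s)\circ\rho_\theta(\alpha))(\overline{r}) \;=\; \rho_\theta(\alpha)\bigl(\overline{r\cdot s}\bigr)
\]
when $\overline{r\cdot s}\in S_\theta$, and $U$ otherwise (by \eqref{FunctionalCMonoidEq}). Using the dictionary just established, the right-hand side takes value $T$, $F$, or $U$ exactly according to whether $((r\cdot s)\circ\alpha, T)$, $((r\cdot s)\circ\alpha,F)$, or $((r\cdot s)\circ\alpha,U)$ lies in $\theta$. On the other side, $\rho_\theta(s\circ\alpha)(\overline{r})$ is governed by the $\theta$-class of $r\circ(s\circ\alpha)$, which by the semigroup action axiom \eqref{EM2} equals $(r\cdot s)\circ\alpha$ — so the same three cases trigger matching values. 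The boundary case $\overline{r\cdot s}=\overline{\bot}$, i.e.\ $(r\cdot s,\bot)\in E_\theta$, is handled by \pref{Prop.Rho.Hom}(v) together with \eqref{EM7}: both sides yield $U$.

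The main obstacle is bookkeeping at the boundaries $\alpha\in\{T,F\}$ and $\overline{r}=\overline{\bot}$ (or $\overline{r\cdot s}=\overline{\bot}$), where the piecewise definition of $\rho_\theta$ switches clauses and one must translate between equalities in $\theta$, membership in $E_\theta$, and the special values $T,F,U$ themselves. This translation is precisely what \pref{Prop.Rho.Hom}(iii)--(v) provides, and once those are invoked the verification of both identities is a case-by-case but routine chase.
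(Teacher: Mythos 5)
Your proposal is correct and follows essentially the same route as the paper: reduce to the two mixed-operation identities and verify them pointwise on $S_{\theta_\bot}$ using \eqref{EM9}, \eqref{EM2}, \pref{PropCollection}(i) and \pref{Prop.Rho.Hom}(iv)--(v). Your ``dictionary'' observation --- that for $\overline{r}\in S_{\theta}$ the value $\rho_{\theta}(\alpha)(\overline{r})$ is the unique element of $\{T,F,U\}$ that is $\theta$-related to $r\circ\alpha$, uniformly in whether $\alpha$ (or $s\circ\alpha$) lies in $\{T,F\}$ --- is a clean way to collapse the paper's explicit four-way case split on $\alpha\in\{T,F\}$ and $s\circ\alpha\in\{T,F\}$ into a single argument, but the underlying verification is the same.
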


\begin{proof}
 In view of \pref{Prop.phitheta} and \pref{Prop.rhotheta} it suffices to show that $\phi_{\theta}(\alpha[s, t]) = \rho_{\theta}(\alpha)[\phi_{\theta}(s), \phi_{\theta}(t)]$ and $\rho_{\theta}(s \circ \alpha) = \phi_{\theta}(s) \circ \rho_{\theta}(\alpha)$ hold. In order to show that $\phi_{\theta}(\alpha[s, t]) = \rho_{\theta}(\alpha)[\phi_{\theta}(s), \phi_{\theta}(t)]$ we proceed casewise depending on the value of $\alpha$ as per the following: \\

 \noindent \emph{Case I}: $\alpha \in \{ T, F \}$. If $\alpha = T$ then $\phi_{\theta}(\alpha[s, t]) = \phi_{\theta}(T[s, t]) = \phi_{\theta}(s) = (S_{\theta}, \emptyset)[\phi_{\theta}(s), \phi_{\theta}(t)] = \rho_{\theta}(T)[\phi_{\theta}(s), \phi_{\theta}(t)] = \rho_{\theta}(\alpha)[\phi_{\theta}(s), \phi_{\theta}(t)]$. Along similar lines if $\alpha = F$ then $\phi_{\theta}(\alpha[s, t]) = \phi_{\theta}(F[s, t]) = \phi_{\theta}(t) = (\emptyset, S_{\theta})[\phi_{\theta}(s), \phi_{\theta}(t)] = \rho_{\theta}(F)[\phi_{\theta}(s), \phi_{\theta}(t)] = \rho_{\theta}(\alpha)[\phi_{\theta}(s), \phi_{\theta}(t)]$. \\

 \noindent \emph{Case II}: $\alpha\notin \{ T, F \}$. If $\alpha \notin \{ T, F \}$ then using \eqref{EM9} we have $\phi_{\theta}(\alpha[s, t]) = \psi_{\theta}^{\alpha[s, t]}$ where $\psi_{\theta}^{\alpha[s, t]}(\overline{v}) = \overline{v \cdot (\alpha[s, t])} = \overline{(v \circ \alpha)[v \cdot s, v \cdot t]}$. Consider $\rho_{\theta}(\alpha)[\phi_{\theta}(s), \phi_{\theta}(t)] = (A_{\theta}^{\alpha}, B_{\theta}^{\alpha})[\psi_{\theta}^{s}, \psi_{\theta}^{t}]$, where
 \begin{align*}
 (A_{\theta}^{\alpha}, B_{\theta}^{\alpha})[\psi_{\theta}^{s}, \psi_{\theta}^{t}](\overline{v}) =
 \begin{cases}
  \overline{v \cdot s}, & \text{ if } \overline{v} \in A_{\theta}^{\alpha}, \text{ that is } (v \circ \alpha) \in \overline{T}^{\theta}; \\
  \overline{v \cdot t}, & \text{ if } \overline{v} \in B_{\theta}^{\alpha} \text{ that is } (v \circ \alpha) \in \overline{F}^{\theta}; \\
  \overline{\bot}, & \text{ otherwise.}
 \end{cases}
 \end{align*}
 It suffices to consider the following three cases: \\
 \noindent \emph{Subcase 1}: $(v \circ \alpha) \in \overline{T}^{\theta}$. using \pref{PropCollection}(i) we have $((v \circ \alpha)[v \cdot s, v \cdot t], v \cdot s) \in E_{\theta}$. Consequently $\overline{(v \circ \alpha)[v \cdot s, v \cdot t]} = \overline{v \cdot s}$. \\

 \noindent \emph{Subcase 2}: $(v \circ \alpha) \in \overline{F}^{\theta}$. Along similar lines if $(v \circ \alpha) \in \overline{F}^{\theta}$ then $((v \circ \alpha)[v \cdot s, v \cdot t], v \cdot t) \in E_{\theta}$, by \pref{PropCollection}(i) and so $\overline{(v \circ \alpha)[v \cdot s, v \cdot t]} = \overline{v \cdot t}$. \\

 \noindent \emph{Subcase 3}: $(v \circ \alpha) \in \overline{U}^{\theta}$. Then $((v \circ \alpha)[v \cdot s, v \cdot t], \bot) \in E_{\theta}$, by \pref{PropCollection}(i) which gives $\overline{(v \circ \alpha)[v \cdot s, v \cdot t]} = \overline{\bot}$. \\

 Thus we have $\psi_{\theta}^{\alpha[s, t]}(\overline{v}) = (A_{\theta}^{\alpha}, B_{\theta}^{\alpha})[\psi_{\theta}^{s}, \psi_{\theta}^{t}](\overline{v})$ for every $\overline{v} \in S_{{\theta}_{\bot}}$ and so $\phi_{\theta}(\alpha[s, t]) = \rho_{\theta}(\alpha)[\phi_{\theta}(s), \phi_{\theta}(t)]$. \\

 We show that $\rho_{\theta}(s \circ \alpha) = \phi_{\theta}(s) \circ \rho_{\theta}(\alpha)$ by proceeding casewise depending on the value of $\alpha$ and $s \circ \alpha$. \\

  \noindent \emph{Case I}: $\alpha \notin \{ T, F \}, s \circ \alpha \notin \{ T, F \}$. Then $\rho_{\theta}(s \circ \alpha) = (A_{\theta}^{s \circ \alpha}, B_{\theta}^{s \circ \alpha})$ and $\rho_{\theta}(\alpha) = (A_{\theta}^{\alpha}, B_{\theta}^{\alpha})$. Then $\phi_{\theta}(s) \circ (A_{\theta}^{\alpha}, B_{\theta}^{\alpha}) = \psi_{\theta}^{s} \circ (A_{\theta}^{\alpha}, B_{\theta}^{\alpha}) = (C, D)$, where $C = \{ \overline{q} \in S_{\theta} : \psi_{\theta}^{s}(\overline{q}) \in A_{\theta}^{\alpha} \}$ and $D = \{ \overline{q} \in S_{\theta} : \psi_{\theta}^{s}(\overline{q}) \in B_{\theta}^{\alpha} \}$. We have to show that $$(A_{\theta}^{s \circ \alpha}, B_{\theta}^{s \circ \alpha}) = (C, D).$$ It is clear that $\overline{q} \in C$
          \[\begin{array}{ccll}
         &\Leftrightarrow& \psi_{\theta}^{s}(\overline{q}) \in A_{\theta}^{\alpha} \\
         &\Leftrightarrow& \overline{q \cdot s} \in A_{\theta}^{\alpha} \\
         &\Leftrightarrow& ((q \cdot s) \circ \alpha, T) \in \theta \\
         &\Leftrightarrow& (q \circ (s \circ \alpha), T) \in \theta & \text{(using \eqref{EM2})}\\
         &\Leftrightarrow& \overline{q} \in A_{\theta}^{s \circ \alpha} \\
       \end{array}\]
  Along similar lines we have $\overline{q} \in D \Leftrightarrow \overline{q} \in B_{\theta}^{s \circ \alpha}$. \\

 \noindent \emph{Case II}: $\alpha \in \{ T, F \}, s \circ \alpha \notin \{ T, F \}$. If $\alpha = T$ then $\rho_{\theta}(s \circ \alpha) = \rho_{\theta}(s \circ T) = (A_{\theta}^{s \circ T}, B_{\theta}^{s \circ T})$. On the other hand $\phi_{\theta}(s) \circ \rho_{\theta}(\alpha) = \phi_{\theta}(s) \circ \rho_{\theta}(T) = \psi_{\theta}^{s} \circ (S_{\theta}, \emptyset) = (C, D)$ where $C = \{ \overline{q} \in S_{\theta} : \psi_{\theta}^{s}(\overline{q}) \in S_{\theta} \}$ and $D = \emptyset$. We have to show that $$(A_{\theta}^{s \circ T}, B_{\theta}^{s \circ T}) = (C, \emptyset).$$
 We show that $B_{\theta}^{s \circ T} = \emptyset$ by contradiction. If $B_{\theta}^{s \circ T} \neq \emptyset$ then let $\overline{q} \in B_{\theta}^{s \circ T}$
        \[\begin{array}{ccll}
         &\Rightarrow& (q \circ (s \circ T), F) \in \theta \\
         &\Rightarrow& ((q \cdot s) \circ T, F) \in \theta & \text{(using \eqref{EM2})} \\
       \end{array}\]
 which is a contradiction to \pref{Prop.Rho.Hom}(ii). Thus $B_{\theta}^{s \circ T} = \emptyset$. \\
 We now show that $A_{\theta}^{s \circ T} = C$. It is clear that $\overline{q} \in C$
       \[\begin{array}{ccll}
        &\Leftrightarrow& \psi_{\theta}^{s}(\overline{q}) \in S_{\theta} \\
        &\Leftrightarrow& \overline{q \cdot s} \in S_{\theta} \\
        &\Leftrightarrow& (q \cdot s, \bot) \notin E_{\theta} \\
        &\Leftrightarrow& ((q \cdot s) \circ T, T) \in \theta & \text{(using \pref{Prop.Rho.Hom}(iv))}\\
        &\Leftrightarrow& (q \circ (s \circ T), T) \in \theta & \text{(using \eqref{EM2})}\\
        &\Leftrightarrow& \overline{q} \in A_{\theta}^{s \circ T} \\
       \end{array}\]
 In the case where $\alpha = F$ the proof follows along similar lines. \\

 \noindent \emph{Case III}: $\alpha \notin \{ T, F \}, s \circ \alpha \in \{ T, F \}$. We have the following subcases: \\
 \noindent \emph{Subcase 1}: $s \circ \alpha = T$. Then $\rho_{\theta}(s \circ \alpha) = \rho_{\theta}(T) = (S_{\theta}, \emptyset)$. On the other hand $\phi_{\theta}(s) \circ \rho_{\theta}(\alpha) = \psi_{\theta}^{s} \circ (A_{\theta}^{\alpha}, B_{\theta}^{\alpha}) = (C, D)$ where $C = \{ \overline{q} \in S_{\theta} : \psi_{\theta}^{s}(\overline{q}) \in A_{\theta}^{\alpha} \}$ and $D = \{ \overline{q} \in S_{\theta} : \psi_{\theta}^{s}(\overline{q}) \in B_{\theta}^{\alpha} \}$. We have to show that $$(C, D) = (S_{\theta}, \emptyset).$$
 We first show by contradiction that $D = \emptyset$. If $D \neq \emptyset$ consider $\overline{q} \in D$
       \[\begin{array}{ccll}
         &\Rightarrow& \psi_{\theta}^{s}(\overline{q}) \in B_{\theta}^{\alpha} \\
         &\Rightarrow& \overline{q \cdot s} \in B_{\theta}^{\alpha} \\
         &\Rightarrow& ((q \cdot s) \circ \alpha, F) \in \theta \\
         &\Rightarrow& (q \circ (s \circ \alpha), F) \in \theta & \text{(using \eqref{EM2})} \\
         &\Rightarrow& (q \circ T, F) \in \theta
       \end{array}\]
 which is a contradiction to \pref{Prop.Rho.Hom}(ii). \\
 In order to show that $C = S_{\theta}$ consider $\overline{q} \in S_{\theta}$ that is $(q, \bot) \notin E_{\theta}$
       \[\begin{array}{ccll}
        &\Rightarrow& (q \circ T, T) \in \theta & \text{(using \pref{Prop.Rho.Hom}(iv))} \\
        &\Rightarrow& (q \circ (s \circ \alpha), T) \in \theta \\
        &\Rightarrow& ((q \cdot s) \circ \alpha, T) \in \theta & \text{(using \eqref{EM2})} \\
        &\Rightarrow& \overline{q \cdot s} \in A_{\theta}^{\alpha} \\
        &\Rightarrow& \psi_{\theta}^{s}(\overline{q}) \in A_{\theta}^{\alpha} \\
        &\Rightarrow& \overline{q} \in C.
      \end{array}\]

 \noindent \emph{Subcase 2}: $s \circ \alpha = F$. Then $\rho_{\theta}(s \circ \alpha) = \rho_{\theta}(F) = (\emptyset, S_{\theta})$ while $\phi_{\theta}(s) \circ \rho_{\theta}(\alpha) = \psi_{\theta}^{s} \circ (A_{\theta}^{\alpha}, B_{\theta}^{\alpha}) = (C, D)$ where $C = \{ \overline{q} \in S_{\theta} : \psi_{\theta}^{s}(\overline{q}) \in A_{\theta}^{\alpha} \}$ and $D = \{ \overline{q} \in S_{\theta} : \psi_{\theta}^{s}(\overline{q}) \in B_{\theta}^{\alpha} \}$. We have to show that $$(C, D) = (\emptyset, S_{\theta}).$$
 We first show $C = \emptyset$ by contradiction. If $C \neq \emptyset$ consider $\overline{q} \in C$
       \[\begin{array}{ccll}
        &\Rightarrow& \psi_{\theta}^{s}(\overline{q}) \in A_{\theta}^{\alpha} \\
        &\Rightarrow& \overline{q \cdot s} \in A_{\theta}^{\alpha} \\
        &\Rightarrow& ((q \cdot s) \circ \alpha, T) \in \theta \\
        &\Rightarrow& (q \circ (s \circ \alpha), T) \in \theta & \text{(using \eqref{EM2})} \\
        &\Rightarrow& (q \circ F, T) \in \theta \\
        &\Rightarrow& (q \circ T, F) \in \theta
      \end{array}\]
 which is a contradiction to \pref{Prop.Rho.Hom}(ii). \\
 In order to show that $D = S_{\theta}$ consider $\overline{q} \in S_{\theta}$ that is $(q, \bot) \notin E_{\theta}$.
       \[\begin{array}{ccll}
        &\Rightarrow& (q \circ F, F) \in \theta & \text{(using \pref{Prop.Rho.Hom}(iv))} \\
        &\Rightarrow& (q \circ (s \circ \alpha), F) \in \theta \\
        &\Rightarrow& ((q \cdot s) \circ \alpha, F) \in \theta & \text{(using \eqref{EM2})} \\
        &\Rightarrow& \overline{q \cdot s} \in B_{\theta}^{\alpha} \\
        &\Rightarrow& \psi_{\theta}^{s}(\overline{q}) \in B_{\theta}^{\alpha} \\
        &\Rightarrow& \overline{q} \in D
      \end{array}\]
 which completes the proof for the case where $\alpha \notin \{ T, F \}$ and $s \circ \alpha \in \{ T, F \}$.\\

 \noindent \emph{Case IV}: $\alpha \in \{ T, F \}, s \circ \alpha \in \{ T, F \}$. Note that $s \circ T \neq F$ as a consequence of  \pref{Prop.Rho.Hom}(ii). If $s \circ T = F$ then as $\theta$ is a congruence, $(F, F) \in \theta \Rightarrow (s \circ T, F) \in \theta$, a contradiction to \pref{Prop.Rho.Hom}(ii). Similarly we have $s \circ F \neq T$. In view of the above it suffices to consider the following cases: \\

 \noindent \emph{Subcase 1}: $\alpha = T, s \circ \alpha = T$. Then $\rho_{\theta}(s \circ \alpha) = \rho_{\theta}(T) = (S_{\theta}, \emptyset)$ and $\phi_{\theta}(s) \circ \rho_{\theta}(\alpha) = \psi_{\theta}^{s} \circ (S_{\theta}, \emptyset) = (C, D)$ where $C = \{ \overline{q} \in S_{\theta} : \psi_{\theta}^{s}(\overline{q}) \in S_{\theta} \}$ and $D = \emptyset$. Thus it suffices to show that $C = S_{\theta}$. Let $\overline{q} \in S_{\theta}$ that is $(q, \bot) \notin E_{\theta}$
       \[\begin{array}{ccll}
        &\Rightarrow& (q \circ T, T) \in \theta & \text{(using \pref{Prop.Rho.Hom}(iv))} \\
        &\Rightarrow& (q \circ (s \circ T), T) \in \theta \\
        &\Rightarrow& ((q \cdot s) \circ T, T) \in \theta & \text{(using \eqref{EM2})} \\
        &\Rightarrow& (q \cdot s, \bot) \notin E_{\theta} & \text{(using \pref{Prop.Rho.Hom}(iv))} \\
        &\Rightarrow& \overline{q \cdot s} \in S_{\theta} \\
        &\Rightarrow& \psi_{\theta}^{s}(\overline{q}) \in S_{\theta} \\
        &\Rightarrow& \overline{q} \in C.
      \end{array}\]
 Thus $C = S_{\theta}$. \\

 \noindent \emph{Subcase 2}: $\alpha = F, s \circ \alpha = F$. Then $\rho_{\theta}(s \circ \alpha) = \rho_{\theta}(F) = (\emptyset, S_{\theta})$ and $\phi_{\theta}(s) \circ \rho_{\theta}(\alpha) = \psi_{\theta}^{s} \circ (\emptyset, S_{\theta}) = (C, D)$ where $C = \emptyset$ and $D = \{ \overline{q} \in S_{\theta} : \psi_{\theta}^{s}(\overline{q}) \in S_{\theta} \}$. The proof follows along similar lines as above. In order to show that $D = S_{\theta}$ consider $\overline{q} \in S_{\theta}$ that is $(q, \bot) \notin E_{\theta}$
       \[\begin{array}{ccll}
        &\Rightarrow& (q \circ F, F) \in \theta & \text{(using \pref{Prop.Rho.Hom}(iv))} \\
        &\Rightarrow& (q \circ (s \circ F), F) \in \theta \\
        &\Rightarrow& ((q \cdot s) \circ F, F) \in \theta & \text{(using \eqref{EM2})} \\
        &\Rightarrow& (q \cdot s, \bot) \notin E_{\theta} & \text{(using \pref{Prop.Rho.Hom}(iv))} \\
        &\Rightarrow& \overline{q \cdot s} \in S_{\theta} \\
        &\Rightarrow& \psi_{\theta}^{s}(\overline{q}) \in S_{\theta} \\
        &\Rightarrow& \overline{q} \in D.
      \end{array}\]
 Hence $D = S_{\theta}$ which completes the proof.

 Thus $(\phi_{\theta}, \rho_{\theta})$ is a homomorphism of $C$-monoids.
\end{proof}

\begin{proposition} \label{PropAlphaTTheta}
 For all $\alpha \in M$ the following statements hold:
 \begin{enumerate}[\rm(i)]
  \item $\rho_{\theta}(\alpha) = (S_{\theta}, \emptyset) \Rightarrow (\alpha, T) \in \theta$.
  \item $\rho_{\theta}(\alpha) = (\emptyset, S_{\theta}) \Rightarrow (\alpha, F) \in \theta$.
 \end{enumerate}
\end{proposition}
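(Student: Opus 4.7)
The plan is to evaluate $\rho_\theta(\alpha)$ at the distinguished equivalence class $\overline{1}^{E_\theta}$ and use axiom \eqref{EM1} to identify $1 \circ \alpha$ with $\alpha$. The only preliminary fact needed is that $\overline{1}^{E_\theta}$ actually lies in $S_\theta$, which is exactly \pref{Prop.Rho.Hom}(vi): $(1, \bot) \notin E_\theta$.

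For part (i), assume $\rho_\theta(\alpha) = (S_\theta, \emptyset)$. First I will dispose of the boundary cases: if $\alpha = T$ the conclusion is immediate, and if $\alpha = F$ then by definition $\rho_\theta(\alpha) = (\emptyset, S_\theta)$, contradicting the hypothesis (using \pref{Prop.Rho.Hom}(vi) to see that $S_\theta \neq \emptyset$, so the two pairs are genuinely distinct). So I may assume $\alpha \notin \{T,F\}$, in which case $\rho_\theta(\alpha) = (A_\theta^\alpha, B_\theta^\alpha)$ with $A_\theta^\alpha = S_\theta$. Since $\overline{1}^{E_\theta} \in S_\theta = A_\theta^\alpha$, the definition of $A_\theta^\alpha$ gives $(1 \circ \alpha, T) \in \theta$. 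Applying \eqref{EM1} then yields $(\alpha, T) \in \theta$.

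Part (ii) follows symmetrically: the hypothesis $\rho_\theta(\alpha) = (\emptyset, S_\theta)$ rules out $\alpha = T$ and handles $\alpha = F$ trivially, so that $\alpha \notin \{T,F\}$ and $B_\theta^\alpha = S_\theta$. Then $\overline{1}^{E_\theta} \in B_\theta^\alpha$ gives $(1 \circ \alpha, F) \in \theta$, and again \eqref{EM1} yields $(\alpha, F) \in \theta$.

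There is no real obstacle here; the content of the proposition is essentially that the representing functional $C$-algebra faithfully records the $\theta$-class at the identity element. The only subtlety worth stating explicitly is the appeal to \pref{Prop.Rho.Hom}(vi) to guarantee both that $\overline{1}^{E_\theta}$ belongs to $S_\theta$ and that $S_\theta$ is nonempty (needed to rule out the degenerate case where the two pairs $(S_\theta, \emptyset)$ and $(\emptyset, S_\theta)$ would coincide).
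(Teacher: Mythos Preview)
Your proof is correct and follows essentially the same approach as the paper: both arguments hinge on \pref{Prop.Rho.Hom}(vi) to place $\overline{1}$ in $S_\theta$, then read off $(1\circ\alpha,T)\in\theta$ (resp.\ $(1\circ\alpha,F)\in\theta$) from the definition of $A_\theta^\alpha$ (resp.\ $B_\theta^\alpha$) and apply \eqref{EM1}. You are slightly more explicit than the paper in separately disposing of the degenerate case $\alpha=F$ in part~(i) (and $\alpha=T$ in part~(ii)), but this is a cosmetic difference only.
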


\begin{proof}$\;$
 \begin{enumerate}[\rm(i)]
  \item If $\alpha = T$ then the result is obvious. Suppose that $\alpha \neq T$ and $\rho_{\theta}(\alpha) = (A_{\theta}^{\alpha}, B_{\theta}^{\alpha}) = (S_{\theta}, \emptyset)$. It follows that $(t \circ \alpha, T) \in \theta$ for all $\overline{t} \in S_{\theta}$. Using \pref{Prop.Rho.Hom}(vi) and \eqref{EM1} we have $\overline{1} \in S_{\theta}$ and so $(1 \circ \alpha, T) = (\alpha, T) \in \theta$.
  \item Along similar lines if $\alpha \neq F$ then $\rho_{\theta}(\alpha) = (A_{\theta}^{\alpha}, B_{\theta}^{\alpha}) = (\emptyset, S_{\theta})$ gives $(t \circ \alpha, F) \in \theta$ for all $\overline{t} \in S_{\theta}$. Using \pref{Prop.Rho.Hom}(vi) and \eqref{EM1} we have $\overline{1} \in S_{\theta}$ and so $(1 \circ \alpha, F) = (\alpha, F) \in \theta$.
 \end{enumerate}
\end{proof}

\begin{lemma} \label{LemmaPhiTheta}
 For every $s, t \in S_{\bot}$ where $s \neq t$ there exists a maximal congruence $\theta$ on $M$ such that $\phi_{\theta}(s) \neq \phi_{\theta}(t)$.
\end{lemma}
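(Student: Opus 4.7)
The plan is to unpack what $\phi_{\theta}(s) = \phi_{\theta}(t)$ means functionally and then invoke the separation property of the family $\{E_{\theta}\}$ recorded in Proposition \ref{PropCollection}(iv).

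First I would observe that, by the definition of $\phi_{\theta}$, the equality $\phi_{\theta}(s) = \phi_{\theta}(t)$ says $\psi_{\theta}^{s} = \psi_{\theta}^{t}$, i.e.\ $\overline{u \cdot s}^{E_{\theta}} = \overline{u \cdot t}^{E_{\theta}}$ for every $u \in S_{\bot}$. So it suffices to produce a single maximal congruence $\theta$ and a single $u$ for which $(u \cdot s, u \cdot t) \notin E_{\theta}$. The natural choice is $u = 1$, which reduces the task to finding a maximal $\theta$ with $(s,t) \notin E_{\theta}$.

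Now suppose $s \neq t$. By Proposition \ref{PropCollection}(iv), $\bigcap_{\theta} E_{\theta} = \Delta_{S_{\bot}}$, where $\theta$ ranges over all maximal congruences on $M$. Since $(s,t) \notin \Delta_{S_{\bot}}$, there must exist some maximal $\theta$ with $(s,t) \notin E_{\theta}$; equivalently, $\overline{s}^{E_{\theta}} \neq \overline{t}^{E_{\theta}}$. Evaluating the functions at $\overline{1}^{E_{\theta}}$ we obtain
\[
\psi_{\theta}^{s}(\overline{1}) \;=\; \overline{1 \cdot s} \;=\; \overline{s} \;\neq\; \overline{t} \;=\; \overline{1 \cdot t} \;=\; \psi_{\theta}^{t}(\overline{1}),
\]
so $\phi_{\theta}(s) \neq \phi_{\theta}(t)$, as required.

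There is no real obstacle here: the statement is essentially a direct reading of Proposition \ref{PropCollection}(iv) through the definition of $\phi_{\theta}$, together with the fact that $1$ is the monoid identity so that multiplication by $1$ does not disturb the pair $(s,t)$. The whole argument is a couple of lines.
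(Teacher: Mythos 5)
Your proposal is correct and follows exactly the paper's own argument: invoke Proposition \ref{PropCollection}(iv) to find a maximal congruence $\theta$ with $(s,t)\notin E_{\theta}$, then separate $\phi_{\theta}(s)$ and $\phi_{\theta}(t)$ by evaluating at $\overline{1}^{E_{\theta}}$. No differences worth noting.
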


\begin{proof}
 Using \pref{PropCollection}(iv) we have $\bigcap E_{\theta} = \Delta_{S_{\bot}}$ and so since $s \neq t$ there exists a maximal congruence $\theta$ on $M$ such that $(s, t) \notin E_{\theta}$, i.e., $\overline{s} \neq \overline{t}$. For this $\theta$, consider $\phi_{\theta} : S_{\bot} \rightarrow \mathcal{T}_{o}(S_{{\theta_{\bot}}})$. Then $\phi_{\theta}(s) = \psi_{\theta}^{s}$, $\phi_{\theta}(t) = \psi_{\theta}^{t}$. For $\overline{1} \in S_{{\theta}_{\bot}}$ we have $\psi_{\theta}^{s}(\overline{1}) = \overline{1 \cdot s} = \overline{s}$ while $\psi_{\theta}^{t}(\overline{1}) = \overline{1 \cdot t} = \overline{t}$. Since $\overline{s} \neq \overline{t}$ it follows that $\phi_{\theta}(s) \neq \phi_{\theta}(t)$.
\end{proof}

\begin{lemma} \label{LemmaRhoTheta}
 For every $\alpha, \beta \in M$ where $\alpha \neq \beta$ there exists a maximal congruence $\theta$ on $M$ such that $\rho_{\theta}(\alpha) \neq \rho_{\theta}(\beta)$.
\end{lemma}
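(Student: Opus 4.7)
The plan is to find, for distinct $\alpha, \beta \in M$, a single maximal congruence $\theta$ on $M$ that separates them under $\rho_\theta$, in direct parallel with the proof of \lref{LemmaPhiTheta}. First, since $\bigcap \theta = \Delta_M$ over maximal congruences (\pref{PropCollection}(v)), the inequality $\alpha \neq \beta$ forces the existence of a maximal congruence $\theta$ with $(\alpha, \beta) \notin \theta$. I would then fix this $\theta$ and show $\rho_\theta(\alpha) \neq \rho_\theta(\beta)$.

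Next, I would evaluate both pairs-of-sets at the single point $\overline{1}^{E_\theta}$, which lies in $S_\theta$ by \pref{Prop.Rho.Hom}(vi). A direct read-off from the definition of $\rho_\theta$, combined with \eqref{EM1} (so that $1 \circ \alpha = \alpha$), shows that $\rho_\theta(\alpha)(\overline{1})$ equals $T$ if $\alpha \in \overline{T}^\theta$, equals $F$ if $\alpha \in \overline{F}^\theta$, and equals $U$ in every other case (in particular when $\alpha \in \overline{U}^\theta$). The analogous statement holds for $\beta$.

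The delicate point is the ``every other case'' option: a priori, both $\alpha$ and $\beta$ could evaluate to $U$ at $\overline{1}$ without lying in the same $\theta$-class. To close this gap, I would invoke the results on adas cited in the preliminaries: since $\theta$ is a maximal congruence on the ada $M$, the quotient $M/\theta$ is a simple ada, hence subdirectly irreducible, hence (by Manes) isomorphic to $\mathbb{3}$. Consequently $\theta$ has exactly three equivalence classes, namely $\overline{T}^\theta, \overline{F}^\theta, \overline{U}^\theta$ (which are pairwise distinct by \pref{PropMaxTheta}).

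With this ingredient in hand, both $\alpha$ and $\beta$ necessarily lie in one of the three classes $\overline{T}^\theta, \overline{F}^\theta, \overline{U}^\theta$, and since $(\alpha, \beta) \notin \theta$ they lie in \emph{different} ones. The previous paragraph then yields $\rho_\theta(\alpha)(\overline{1}), \rho_\theta(\beta)(\overline{1}) \in \{T, F, U\}$ with distinct values, so $\rho_\theta(\alpha) \neq \rho_\theta(\beta)$, as required. The main (and essentially only) obstacle is the structural fact that $M/\theta \cong \mathbb{3}$; everything else is a routine case split using \eqref{EM1} and the definition of $\rho_\theta$.
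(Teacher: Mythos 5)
Your proposal is correct and follows essentially the same route as the paper: both obtain $\theta$ with $(\alpha,\beta)\notin\theta$ from \pref{PropCollection}(v) and then separate $\rho_\theta(\alpha)$ from $\rho_\theta(\beta)$ by evaluating at the single point $\overline{1}^{E_\theta}\in S_\theta$ (via \pref{Prop.Rho.Hom}(vi) and \eqref{EM1}). The only cosmetic difference is that you make explicit the fact that a maximal congruence on an ada has exactly the three classes $\overline{T}^\theta,\overline{F}^\theta,\overline{U}^\theta$ (i.e.\ $M/\theta\cong\mathbb{3}$), whereas the paper uses this implicitly in its case split and handles the $\alpha,\beta\in\{T,F\}$ cases separately through \pref{PropAlphaTTheta}.
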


\begin{proof}
  Using \pref{PropCollection}(v) since $\alpha \neq \beta$ there exists a maximal congruence $\theta$ on $M$ such that $(\alpha, \beta) \notin \theta$. We  show that $\rho_{\theta}(\alpha) \neq \rho_{\theta}(\beta)$. If $\alpha$ or $\beta$ is in $\{ T, F \}$ but $\rho_{\theta}(\alpha) = \rho_{\theta}(\beta)$ then using \pref{PropAlphaTTheta} we have $(\alpha, \beta) \in \theta$, a contradiction. In the case where $\alpha, \beta \notin \{ T, F \}$ we show that $$(A_{\theta}^{\alpha}, B_{\theta}^{\alpha}) \neq (A_{\theta}^{\beta}, B_{\theta}^{\beta})$$ by showing that either $A_{\theta}^{\alpha} \neq A_{\theta}^{\beta}$ or that $B_{\theta}^{\alpha} \neq B_{\theta}^{\beta}$. Owing to \pref{PropMaxTheta} it suffices to consider the following three cases:

 \noindent \emph{Case I}: $(\alpha, T) \in \theta$. Note that \pref{Prop.Rho.Hom}(vi) gives $\overline{1} \in S_{\theta}$. Thus we have $\overline{1} \in S_{\theta}$ for which $1 \circ \alpha = \alpha \in \overline{T}^{\theta}$ and so $\overline{1} \in A_{\theta}^{\alpha}$. However $\overline{1} \notin A_{\theta}^{\beta}$ since $(\alpha, \beta) \notin \theta$. \\

 \noindent \emph{Case II}: $(\alpha, F) \in \theta$. Along similar lines for $\overline{1} \in S_{\theta}$ we have $1 \circ \alpha = \alpha \in \overline{F}^{\theta}$ and so $\overline{1} \in B_{\theta}^{\alpha}$. It is clear that $\overline{1} \notin B_{\theta}^{\beta}$ since $(\alpha, \beta) \notin \theta$. \\

  \noindent \emph{Case III}: $(\alpha, U) \in \theta$. In view of \pref{PropMaxTheta} it suffices to consider the following cases:

   \noindent \emph{Subcase 1}: $(\beta, T) \in \theta$. As earlier we have $\overline{1} \in A_{\theta}^{\beta} \setminus A_{\theta}^{\alpha}$. \\

   \noindent \emph{Subcase 2}: $(\beta, F) \in \theta$. It is clear that $\overline{1} \in B_{\theta}^{\beta} \setminus B_{\theta}^{\alpha}$.

 Thus $\rho_{\theta}(\alpha) \neq \rho_{\theta}(\beta)$ which completes the proof.
\end{proof}

\subsection{Embedding into a functional $C$-monoid}
\label{embed_func}
Let $\{\theta\}$ be the collection of all maximal congruences of $M$. Define the set $X$ to be the disjoint union of $S_{\theta}$ taken over all maximal congruences of $M$, written
\begin{equation}\label{NotationDisjoint}
 X = \bigsqcup_{\theta} S_{\theta}
\end{equation}
Set $X_{\bot} = X \cup \{ \bot \}$ with base point $\bot \notin X$. For notational convenience we use the same symbol $\bot$ in $X_{\bot}$ as well as in $S_{\bot}$. Which $\bot$ we are referring to will be clear from the context of the statement.

In this subsection we obtain monomorphisms $\phi : S_{\bot} \rightarrow \mathcal{T}_{o}(X_{\bot})$ and $\rho : M \rightarrow \mathbb{3}^{X}$, using which we establish that $(S_{\bot}, M)$ can be embedded into the functional $C$-monoid $\big( \mathcal{T}_{o}(X_{\bot}), \mathbb{3}^{X} \big)$.

\begin{remark}$\;$ \label{RemarkDisjoint}
 \begin{enumerate}[\rm(i)]
  \item Let $q \in S$ be fixed. For different $\theta$'s the representation of classes $\overline{q}^{_{E_{\theta}}}$'s are different in the disjoint union $X$ of $S_{\theta}$'s.
  \item Let $\{A_{\lambda}\}, \{B_{\lambda}\}$ be two families of sets indexed over $\Lambda$. Then $\displaystyle\bigsqcup_\lambda (A_{\lambda} \cap B_{\lambda}) = \Big(\bigsqcup_\lambda A_{\lambda}\Big) \cap \Big(\bigsqcup_\lambda B_{\lambda}\Big)$ and $\displaystyle\bigsqcup_\lambda (A_{\lambda} \cup B_{\lambda}) = \Big(\bigsqcup_\lambda A_{\lambda}\Big) \cup \Big(\bigsqcup_\lambda B_{\lambda}\Big)$.
 \end{enumerate}
\end{remark}

\begin{notation}$\;$
 \begin{enumerate}[\rm(i)]
  \item For the pair of sets $(A, B)$, we denote by $\pi_{1}(A, B)$ the first component $A$, and by $\pi_{2}(A, B)$ the second component $B$.
  \item For a family of pairs of sets $(A_{\lambda}, B_{\lambda})$ where $\lambda \in \Lambda$ we denote by $\displaystyle\bigsqcup_\lambda (A_{\lambda}, B_{\lambda})$ the pair of sets $\displaystyle\Big(\bigsqcup_\lambda A_{\lambda}, \bigsqcup_\lambda B_{\lambda}\Big)$.
 \end{enumerate}
\end{notation}

\begin{lemma} \label{LemmaPhiMono}
 Consider $\phi : S_{\bot} \rightarrow \mathcal{T}_{o}(X_{\bot})$ given by
 \begin{equation*}
  (\phi(s))(x) = \begin{cases}
                    (\phi_{\theta}(s))(\overline{q}^{_{E_{\theta}}}), & \text{ if } x = \overline{q}^{_{E_{\theta}}} \in S_{\theta} \text{ and } (\phi_{\theta}(s))(\overline{q}^{_{E_{\theta}}}) \neq \overline{\bot}^{_{E_{\theta}}}; \\
                    \bot, & \text{ otherwise.}
                   \end{cases}
 \end{equation*}
 Then $\phi$ is a monoid monomorphism that maps the zero (and base point) of $S_{\bot}$ to that of $\mathcal{T}_{o}(X_{\bot})$, that is $\bot \mapsto \zeta_{\bot}$.
\end{lemma}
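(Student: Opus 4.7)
The plan is to reduce each required property of $\phi$ to the corresponding property of the component maps $\phi_{\theta}$, which have already been established in \pref{Prop.phitheta}. The definition of $\phi$ essentially pastes together the maps $\phi_{\theta}$ across the disjoint union $X = \bigsqcup_{\theta} S_{\theta}$, with the convention that whenever $\phi_{\theta}(s)$ would take the value $\overline{\bot}^{_{E_{\theta}}}$, the point is sent instead to the global base point $\bot \in X_{\bot}$. A key observation used throughout is that on each $\theta$-slice, the value of $\phi(s)$ is either $\bot$ or lies in the same slice $S_{\theta}$, so the construction never crosses slice boundaries.

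First I would verify that $\phi(s) \in \mathcal{T}_{o}(X_{\bot})$: by construction $\phi(s)(\bot) = \bot$, and for $x \in S_{\theta}$ the value lies in $S_{\theta} \subseteq X$ or equals $\bot$. The identity $\phi(\bot) = \zeta_{\bot}$ follows from $\phi_{\theta}(\bot) = \zeta_{\overline{\bot}^{_{E_{\theta}}}}$, and $\phi(1) = id_{X_{\bot}}$ follows from $\phi_{\theta}(1) = id_{S_{{\theta}_{\bot}}}$ together with the fact that $\phi_{\theta}(1)(\overline{q}^{_{E_{\theta}}}) = \overline{\bot}^{_{E_{\theta}}}$ only when $\overline{q}^{_{E_{\theta}}} = \overline{\bot}^{_{E_{\theta}}}$, i.e., only when $x = \bot$.

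Next, to verify $\phi(s \cdot t) = \phi(s) \cdot \phi(t)$, I would evaluate both sides at an arbitrary $x \in X_{\bot}$. The case $x = \bot$ is immediate since both functions fix $\bot$. For $x = \overline{q}^{_{E_{\theta}}} \in S_{\theta}$, a short case analysis on whether $\phi_{\theta}(s)(\overline{q}^{_{E_{\theta}}})$ and $\phi_{\theta}(t)(\phi_{\theta}(s)(\overline{q}^{_{E_{\theta}}}))$ equal $\overline{\bot}^{_{E_{\theta}}}$ reduces the claim to the monoid homomorphism identity $\phi_{\theta}(s \cdot t) = \phi_{\theta}(s) \cdot \phi_{\theta}(t)$ from \pref{Prop.phitheta}, using also that $\phi_{\theta}(t)$ fixes $\overline{\bot}^{_{E_{\theta}}}$ so that the global convention $\overline{\bot}^{_{E_{\theta}}} \leftrightarrow \bot$ is respected by both compositions.

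Finally, for injectivity, suppose $s \neq t$ in $S_{\bot}$. By \lref{LemmaPhiTheta} there is a maximal congruence $\theta$ with $\phi_{\theta}(s) \neq \phi_{\theta}(t)$. Since both of these maps lie in $\mathcal{T}_{o}(S_{{\theta}_{\bot}})$ they agree at $\overline{\bot}^{_{E_{\theta}}}$, so they must differ on some $y = \overline{q}^{_{E_{\theta}}} \in S_{\theta} \subseteq X$. Inspecting the three possibilities -- namely, both values nonzero and distinct, or exactly one of them equal to $\overline{\bot}^{_{E_{\theta}}}$ -- immediately yields $\phi(s)(y) \neq \phi(t)(y)$, and hence $\phi(s) \neq \phi(t)$. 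The main bookkeeping point is the consistent collapsing of the local base points $\overline{\bot}^{_{E_{\theta}}}$ into the single global base point $\bot$; this is smoothed out by the fact that each $\phi_{\theta}$ already fixes its own $\overline{\bot}^{_{E_{\theta}}}$, so no information is lost in the collapse.
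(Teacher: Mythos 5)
Your proposal is correct and follows essentially the same route as the paper's proof: every claim is reduced to the corresponding property of the slice maps $\phi_{\theta}$ from \pref{Prop.phitheta} (and to \lref{LemmaPhiTheta} for injectivity), with the same case analysis handling the collapse of each local base point $\overline{\bot}^{_{E_{\theta}}}$ to the global $\bot$. The paper organizes the multiplicativity check by casing on whether $(\phi(s\cdot t))(\overline{q})=\bot$ rather than on the component values, but this is only a cosmetic difference.
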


\begin{proof}
  It is clear that $\phi$ is well-defined and that $\phi(s) \in \mathcal{T}_{o}(X_{\bot})$ since $(\phi(s))(\bot) = \bot$. \\

  \emph{Claim: $\phi$ is injective.} Let $s \neq t \in S_{\bot}$. Using \lref{LemmaPhiTheta} there exists a maximal congruence $\theta$ on $M$ such that $\phi_{\theta}(s) \neq \phi_{\theta}(t)$. Hence there exists a $\overline{q}^{_{E_{\theta}}} (\neq \overline{\bot}^{_{E_{\theta}}})$ such that $(\phi_{\theta}(s))(\overline{q}) \neq (\phi_{\theta}(t))(\overline{q})$. By extrapolation it follows that $(\phi(s))(\overline{q}) \neq (\phi(t))(\overline{q})$ and so $\phi(s) \neq \phi(t)$. \\

  \emph{Claim: $\phi(\bot) = \zeta_{\bot}$.} Using \pref{Prop.phitheta} we have $\phi_{\theta}(\bot) = \zeta_{\overline{\bot}^{_{E_{\theta}}}}$ for all $\theta$ and so by definition $(\phi(\bot))(x) = \bot$ for all $x \in X_{\bot}$. \\

  \emph{Claim: $\phi(1) = id_{X_{\bot}}$.} It is clear that $(\phi(1))(\bot) = \bot$. Consider $\overline{q} \in X$ that is $\overline{q}^{_{E_{\theta}}} \in S_{\theta}$ for some $\theta$. Then by \pref{Prop.phitheta} we have $(\phi(1))(\overline{q}^{_{E_{\theta}}}) = (\phi_{\theta}(1))(\overline{q}^{_{E_{\theta}}}) = \overline{q}^{_{E_{\theta}}}$ and hence $\phi(1) = id_{X_{\bot}}$. \\

  \emph{Claim: $\phi(s \cdot t) = \phi(s) \cdot \phi(t)$.} Clearly $(\phi(s \cdot t))(\bot) = \bot = (\phi(s) \cdot \phi(t))(\bot)$. Let $\overline{q} \in X$ that is $\overline{q}^{_{E_{\theta}}} \in S_{\theta}$ for some $\theta$. Suppose that $(\phi(s \cdot t))(\overline{q}) = \bot$ so that $(\phi_{\theta}(s \cdot t))(\overline{q}) = \overline{\bot}$
       \[\begin{array}{ccll}
         &\Rightarrow& ((\phi_{\theta}(s) \cdot \phi_{\theta}(t))(\overline{q}) = \overline{\bot} & \text{(using \pref{Prop.phitheta})} \\
         &\Rightarrow& \phi_{\theta}(t)(\phi_{\theta}(s)(\overline{q})) = \overline{\bot} \\
         &\Rightarrow& \phi(t)(\phi_{\theta}(s)(\overline{q})) = \bot
       \end{array}\]
  Noting that there are only two possibilities for $\phi(s)(\overline{q})$ we see that if $\phi(s)(\overline{q}) = \phi_{\theta}(s)(\overline{q})$ then we are through. On the other hand if $\phi(s)(\overline{q}) = \bot$ that is $\phi_{\theta}(s)(\overline{q}) = \overline{\bot}$ then we have $(\phi(s \cdot t))(\overline{q}) = \bot = (\phi(s) \cdot \phi(t))(\overline{q})$ which completes the proof in this case.

  Consider the case where $(\phi(s \cdot t))(\overline{q}) \neq \bot$. Using \pref{Prop.phitheta} it follows that $(\phi(s \cdot t))(\overline{q}) = (\phi_{\theta}(s \cdot t))(\overline{q}) = (\phi_{\theta}(s) \cdot \phi_{\theta}(t))(\overline{q}) = \phi_{\theta}(t)(\phi_{\theta}(s)(\overline{q}))$ and so $(\phi_{\theta}(s))(\overline{q}) \neq \overline{\bot}$. Consequently $\phi(t)(\phi(s)(\overline{q})) = \phi_{\theta}(t)(\phi_{\theta}(s)(\overline{q}))$ since $(\phi_{\theta}(s))(\overline{q}) \neq \overline{\bot}$. It follows that $(\phi(s \cdot t))(\overline{q}) = (\phi(s) \cdot \phi(t))(\overline{q})$ which completes the proof.
\end{proof}

\begin{lemma} \label{LemmaRhoMono}
 The function $\rho : M \rightarrow \mathbb{3}^{X}$ defined by
 \begin{equation*}
  \rho(\alpha) = \displaystyle \sqcup_{\theta} \rho_{\theta}(\alpha)
 \end{equation*}
 is a monomorphism of $C$-algebras with $T, F, U$.
\end{lemma}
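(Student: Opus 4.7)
The plan is to verify that $\rho$ is well-defined, preserves all three nullary operations $T$, $F$, $U$ as well as $\neg$ and $\wedge$, and then establish injectivity using Lemma \ref{LemmaRhoTheta}. Preservation of $\vee$ will come for free from the double negation law \eqref{C1} and De Morgan's law \eqref{C2} once $\neg$ and $\wedge$ are handled.

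For well-definedness, for each $\theta$ the image $\rho_\theta(\alpha)$ is a pair of disjoint subsets of $S_\theta$ (by \pref{Prop.rhotheta}), and since the $S_\theta$'s are mutually disjoint in $X = \bigsqcup_\theta S_\theta$, the disjoint union $\sqcup_\theta \rho_\theta(\alpha)$ is a pair of disjoint subsets of $X$, hence a well-defined element of $\mathbb{3}^X$. For the constants, $\rho(T) = \sqcup_\theta (S_\theta, \emptyset) = (X, \emptyset)$, $\rho(F) = \sqcup_\theta (\emptyset, S_\theta) = (\emptyset, X)$, and $\rho(U) = \sqcup_\theta (\emptyset, \emptyset) = (\emptyset, \emptyset)$, which are precisely $T$, $F$, $U$ in $\mathbb{3}^X$.

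For the operations, the key point is that the $C$-algebra operations on pairs of sets (as recalled in the preliminaries) are built componentwise from $\cap$, $\cup$, and the swap, all of which commute with disjoint unions by \rref{RemarkDisjoint}(ii). Thus
\[
\rho(\neg \alpha) = \bigsqcup_\theta \rho_\theta(\neg \alpha) = \bigsqcup_\theta \neg \rho_\theta(\alpha) = \neg \bigsqcup_\theta \rho_\theta(\alpha) = \neg \rho(\alpha),
\]
using that $\rho_\theta$ preserves $\neg$ by \pref{Prop.rhotheta}. For $\wedge$, writing $\rho_\theta(\alpha) = (A_\theta^\alpha, B_\theta^\alpha)$ and similarly for $\beta$, one has
\[
\rho(\alpha) \wedge \rho(\beta) = \Bigl(\bigsqcup_\theta A_\theta^\alpha, \bigsqcup_\theta B_\theta^\alpha\Bigr) \wedge \Bigl(\bigsqcup_\theta A_\theta^\beta, \bigsqcup_\theta B_\theta^\beta\Bigr),
\]
and applying the pair-of-sets formula for $\wedge$ together with \rref{RemarkDisjoint}(ii) gives $\bigsqcup_\theta (\rho_\theta(\alpha) \wedge \rho_\theta(\beta))$, which equals $\bigsqcup_\theta \rho_\theta(\alpha \wedge \beta) = \rho(\alpha \wedge \beta)$ since each $\rho_\theta$ preserves $\wedge$.

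Injectivity is the cleanest step: if $\alpha \neq \beta$ in $M$, then by \lref{LemmaRhoTheta} there exists a maximal congruence $\theta$ with $\rho_\theta(\alpha) \neq \rho_\theta(\beta)$, so the two pairs differ in their $\theta$-component of the disjoint union, and therefore $\rho(\alpha) \neq \rho(\beta)$. I expect the main bookkeeping burden to be the $\wedge$-preservation step, simply because it requires carefully unpacking the pair-of-sets formula and invoking the distributivity of $\bigsqcup$ over $\cap$ and $\cup$; everything else is essentially a direct assembly from the componentwise facts already established for $\rho_\theta$.
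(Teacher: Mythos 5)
Your proposal is correct and follows essentially the same route as the paper's proof: well-definedness from the disjointness of the $S_{\theta}$'s, injectivity via \lref{LemmaRhoTheta}, direct computation for the constants, and preservation of $\neg$ and $\wedge$ by combining the componentwise homomorphism property of each $\rho_{\theta}$ from \pref{Prop.rhotheta} with the commutation of disjoint unions with $\cap$ and $\cup$ from \rref{RemarkDisjoint}(ii). The only cosmetic difference is that you make the reduction of $\vee$ to $\neg$ and $\wedge$ explicit at this stage, whereas the paper records that observation already in the proof of \pref{Prop.rhotheta}.
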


\begin{proof}$\;$
  \emph{Claim: $\rho$ is well defined.} Let $\alpha \in M$. Using \rref{RemarkDisjoint}(i) we have $\pi_{1}(\rho(\alpha)) \cap \pi_{2}(\rho(\alpha)) = \emptyset$ due to the distinct representation of equivalence classes. Also by \pref{Prop.rhotheta} we have $\pi_{1}(\rho_{\theta}(\alpha)), \pi_{2}(\rho_{\theta}(\alpha)) \subseteq S_{\theta}$ and so $\bot \notin \pi_{1}(\rho(\alpha)) \cup \pi_{2}(\rho(\alpha))$ that is $\rho(\alpha)$ is can be identified with a pair of sets over $X$. \\

  \emph{Claim: $\rho$ is injective.} Let $\alpha \neq \beta \in M$. By \lref{LemmaRhoTheta} there exists a $\theta$ such that $\rho_{\theta}(\alpha) \neq \rho_{\theta}(\beta)$. Without loss of generality we infer that there exists a $\overline{q}^{_{E_{\theta}}} \in \pi_{1}(\rho_{\theta}(\alpha)) \setminus \pi_{1}(\rho_{\theta}(\beta))$. Since $\rho(\alpha)$ is formed by taking the disjoint union of the individual images under $\rho_{\theta}(\alpha)$, using \rref{RemarkDisjoint}(i) we can say that $\overline{q} \in \pi_{1}(\rho(\alpha)) \setminus \pi_{1}(\rho(\beta))$ that is $\rho(\alpha) \neq \rho(\beta)$. \\

  \emph{Claim: $\rho$ preserves the constants $T, F, U$.}   It follows easily from \pref{Prop.rhotheta} that $\rho(T) = (X, \emptyset)$, $\rho(F) = (\emptyset, X)$ and $\rho(U) = (\emptyset, \emptyset)$. \\

  \emph{Claim: $\rho(\neg \alpha) = \neg (\rho(\alpha))$.} If $\alpha \in \{ T, F \}$ then the result is obvious. If $\alpha \notin \{ T, F \}$ then $\neg \alpha \notin \{ T, F \}$. Using \pref{Prop.rhotheta} we have $\rho(\neg \alpha) = (\sqcup A_{\theta}^{\neg \alpha}, \sqcup B_{\theta}^{\neg \alpha}) = (\sqcup B_{\theta}^{\alpha}, \sqcup A_{\theta}^{\alpha})$. Thus $\rho(\neg \alpha) = (\sqcup B_{\theta}^{\alpha}, \sqcup A_{\theta}^{\alpha}) = \neg (\rho(\alpha))$. \\

  \emph{Claim: $\rho(\alpha \wedge \beta) = \rho(\alpha) \wedge \rho(\beta)$.} In view of \rref{RemarkDisjoint}(ii) we have $\sqcup ((A_{\lambda}, B_{\lambda}) \wedge (C_{\lambda}, D_{\lambda})) = (\sqcup A_{\gamma}, \sqcup B_{\gamma}) \wedge (\sqcup C_{\gamma}, \sqcup D_{\gamma})$ for the family of pairs of sets $(A_{\lambda}, B_{\lambda}), (C_{\lambda}, D_{\lambda})$ where $\lambda \in \Lambda$ over $X$. In view of the above and \pref{Prop.rhotheta} we have $\sqcup \rho_{\theta}(\alpha \wedge \beta) = \sqcup (\rho_{\theta}(\alpha) \wedge \rho_{\theta}(\beta)) = (\sqcup \rho_{\theta}(\alpha)) \wedge (\sqcup \rho_{\theta}(\beta)) = \rho(\alpha) \wedge \rho(\beta)$ which completes the proof.
\end{proof}

\begin{lemma} \label{LemmaPhiRho}
 The pair $(\phi, \rho)$ is a $C$-monoid monomorphism from $(S_{\bot}, M)$ to the functional $C$-monoid $\big( \mathcal{T}_{o}(X_{\bot}), \mathbb{3}^{X} \big)$.
\end{lemma}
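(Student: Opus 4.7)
The plan is to leverage the $C$-monoid homomorphism property of each pair $(\phi_{\theta}, \rho_{\theta})$ established in \lref{LemmaPhiRhoTheta}, and verify the axioms for the pasted maps $(\phi, \rho)$ coordinatewise over the disjoint union $X = \bigsqcup_{\theta} S_{\theta}$. Injectivity and the single-sort homomorphism properties are already in hand: \lref{LemmaPhiMono} shows $\phi$ is a monoid monomorphism preserving the zero, and \lref{LemmaRhoMono} shows $\rho$ is a monomorphism of $C$-algebras with $T, F, U$. Hence the only two things left to verify are the compatibility of the $C$-set action, namely $\phi(\alpha[s, t]) = \rho(\alpha)[\phi(s), \phi(t)]$, and the compatibility of composition with predicates, namely $\rho(s \circ \alpha) = \phi(s) \circ \rho(\alpha)$.

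For the first identity, I would evaluate both sides pointwise on an arbitrary $x \in X_{\bot}$. If $x = \bot$, both sides equal $\bot$ from the very definitions of $\phi$ and of the functional action \eqref{FunctionalAction}. If $x = \overline{q}^{{}_{E_{\theta}}} \in S_{\theta}$ for some maximal congruence $\theta$, then by construction $\rho(\alpha)(x) = \rho_{\theta}(\alpha)(\overline{q})$ (viewed as a value in $\mathbb{3}$) and $\phi(s)(x)$ equals $\phi_{\theta}(s)(\overline{q})$ when the latter is not $\overline{\bot}^{{}_{E_{\theta}}}$ and equals $\bot$ otherwise, and similarly for $\phi(t)(x)$. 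Splitting by the value of $\rho_{\theta}(\alpha)(\overline{q}) \in \{T, F, U\}$ and invoking \lref{LemmaPhiRhoTheta} to replace $(\rho_{\theta}(\alpha)[\phi_{\theta}(s), \phi_{\theta}(t)])(\overline{q})$ with $(\phi_{\theta}(\alpha[s,t]))(\overline{q})$, each case matches the definition of $\phi(\alpha[s,t])(x)$, the bottom convention on both sides aligning precisely because $\alpha[\bot, \bot] = \bot$ at the $\theta$-quotient level.

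For the second identity, the same strategy applies. For $x = \overline{q}^{{}_{E_{\theta}}} \in S_{\theta}$, the functional composition formula \eqref{FunctionalCMonoidEq} gives $(\phi(s) \circ \rho(\alpha))(x)$ as $T$, $F$, or $U$ according to $\rho(\alpha)(\phi(s)(x))$, with the default value $U$ when $\phi(s)(x) = \bot$. Two subcases arise: either $\phi_{\theta}(s)(\overline{q}) = \overline{q'} \neq \overline{\bot}$, in which case $\phi(s)(x) = \overline{q'} \in S_{\theta} \subseteq X$ and $\rho(\alpha)$ evaluated there is $\rho_{\theta}(\alpha)(\overline{q'})$; or $\phi_{\theta}(s)(\overline{q}) = \overline{\bot}$, in which case the output is $U$. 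In both situations the answer coincides with $(\phi_{\theta}(s) \circ \rho_{\theta}(\alpha))(\overline{q})$ computed inside the functional $C$-monoid over $S_{{\theta}_{\bot}}$, which by \lref{LemmaPhiRhoTheta} equals $\rho_{\theta}(s \circ \alpha)(\overline{q}) = \rho(s \circ \alpha)(x)$.

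The main obstacle I anticipate is bookkeeping the base-point cases consistently. In particular, one must track that when $\phi_{\theta}(s)(\overline{q}) = \overline{\bot}^{{}_{E_{\theta}}}$ the global function $\phi(s)$ genuinely produces $\bot \in X_{\bot}$, and that the ensuing value $U$ dictated by \eqref{FunctionalCMonoidEq} is exactly what $\rho_{\theta}(s \circ \alpha)(\overline{q})$ produces at the quotient level; this is guaranteed by the $\theta$-level argument but requires a careful cross-check because the disjoint union construction re-embeds each $\overline{q'}$ as a distinct element of $X$, so one must not confuse distinct representatives across different $\theta$'s, as noted in \rref{RemarkDisjoint}(i). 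Once the case analysis is laid out, however, the verification reduces to a routine application of \lref{LemmaPhiRhoTheta} in each cell of the partition of $X_{\bot}$ into bottom and the $S_{\theta}$'s.
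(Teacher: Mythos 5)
Your proposal is correct and follows essentially the same route as the paper: reduce to the two compatibility identities $\phi(\alpha[s,t]) = \rho(\alpha)[\phi(s),\phi(t)]$ and $\rho(s\circ\alpha) = \phi(s)\circ\rho(\alpha)$ via \lref{LemmaPhiMono} and \lref{LemmaRhoMono}, then verify each one coordinatewise over the disjoint union by invoking \lref{LemmaPhiRhoTheta} on the relevant $S_{\theta}$ and tracking the base-point cases. The only cosmetic difference is that the paper checks the second identity by proving equality of the components $\pi_{1}$ and $\pi_{2}$ as sets via double inclusion, whereas you evaluate pointwise in $\mathbb{3}$; these are equivalent formulations of the same argument.
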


\begin{proof}
 In view of \lref{LemmaPhiMono} and \lref{LemmaRhoMono} it suffices to show $\phi(\alpha[s, t]) = (\rho(\alpha))[\phi(s), \phi(t)]$ and $\rho(s \circ \alpha) = \phi(s) \circ \rho(\alpha)$. \\

 In order to show that $\phi(\alpha[s, t]) = (\rho(\alpha))[\phi(s), \phi(t)]$ we show that $\phi(\alpha[s, t])(x) = (\rho(\alpha))[\phi(s), \phi(t)](x)$ for all $x \in X_{\bot}$. Thus we have the following cases: \\

 \noindent \emph{Case I}: $x = \bot$. It is clear that $\phi(\alpha[s, t])(\bot) = \bot = (\rho(\alpha))[\phi(s), \phi(t)](\bot)$ since $\pi_{1}(\rho(\alpha)), \pi_{2}(\rho(\alpha)) \subseteq X$ and $\bot \notin X$. \\

 \noindent \emph{Case II}: $x \in X$. Consider $\overline{q} \in X$ that is $\overline{q}^{_{E_{\theta}}} \in S_{\theta}$ for some $\theta$. We have the following subcases: \\

 \noindent \emph{Subcase 1}: $\phi(\alpha[s, t])(\overline{q}) = \bot$. then $\phi_{\theta}(\alpha[s, t])(\overline{q}) = \overline{\bot}$ and so using \lref{LemmaPhiRhoTheta} we have $\phi_{\theta}(\alpha[s, t])(\overline{q}) = \overline{\bot} = (\rho_{\theta}(\alpha))[\phi_{\theta}(s), \phi_{\theta}(t)](\overline{q})$. It follows that either $\overline{q} \notin \pi_{1}(\rho_{\theta}(\alpha)) \cup \pi_{2}(\rho_{\theta}(\alpha))$ or that $\overline{q} \in \pi_{1}(\rho_{\theta}(\alpha))$ and $\phi_{\theta}(s)(\overline{q}) = \overline{\bot}$ or, similarly, that $\overline{q} \in \pi_{2}(\rho_{\theta}(\alpha))$ and $\phi_{\theta}(t)(\overline{q}) = \overline{\bot}$. Thus we have the following:
 \begin{description}
  \item[$\overline{q} \notin \pi_{1}(\rho_{\theta}(\alpha)) \cup \pi_{2}(\rho_{\theta}(\alpha))$] In view of \rref{RemarkDisjoint}(i) it follows that $\overline{q} \notin \pi_{1}(\rho(\alpha)) \cup \pi_{2}(\rho(\alpha))$ and so $(\rho(\alpha))[\phi(s), \phi(t)](\overline{q}) = \bot$.
  \item[$\overline{q} \in \pi_{1}(\rho_{\theta}(\alpha))$ and $\phi_{\theta}(s)(\overline{q}) = \overline{\bot}$] Then $\overline{q} \in \pi_{1}(\rho(\alpha))$ and $\phi(s)(\overline{q}) = \bot$ and so $(\rho(\alpha))[\phi(s), \phi(t)](\overline{q}) = \bot$.
  \item[$\overline{q} \in \pi_{2}(\rho_{\theta}(\alpha))$ and $\phi_{\theta}(t)(\overline{q}) = \overline{\bot}$] Along similar lines we have $(\rho(\alpha))[\phi(s), \phi(t)](\overline{q}) = \bot$. \\
 \end{description}

 \noindent \emph{Subcase 2}: $\phi(\alpha[s, t])(\overline{q}) \neq \bot$. Then $\phi(\alpha[s, t])(\overline{q}) = \phi_{\theta}(\alpha[s, t])(\overline{q})$ and so using \lref{LemmaPhiRhoTheta} we have $\phi(\alpha[s, t])(\overline{q}) = (\rho_{\theta}(\alpha))[\phi_{\theta}(s), \phi_{\theta}(t)](\overline{q})$. It follows that
 \begin{equation*}
  \phi(\alpha[s, t])(\overline{q}) = (\rho_{\theta}(\alpha))[\phi_{\theta}(s), \phi_{\theta}(t)](\overline{q}) = \begin{cases}
                                                                                                                  \phi_{\theta}(s)(\overline{q}), & \text{ if } \overline{q} \in \pi_{1}(\rho_{\theta}(\alpha)); \\
                                                                                                                  \phi_{\theta}(t)(\overline{q}), & \text{ if } \overline{q} \in \pi_{2}(\rho_{\theta}(\alpha)); \\
                                                                                                                  \bot, & \text{ otherwise.}
                                                                                                                  \end{cases}
 \end{equation*}
 \begin{description}
  \item[$\overline{q} \in \pi_{1}(\rho_{\theta}(\alpha))$] It follows that $\overline{q} \in \pi_{1}(\rho(\alpha))$ and so $(\rho(\alpha))[\phi(s), \phi(t)](\overline{q}) = \phi(s)(\overline{q})$. Note that $\phi_{\theta}(s)(\overline{q}) \neq \overline{\bot}$ else $\phi(\alpha[s, t])(\overline{q}) = \bot$, a contradiction. Thus $\phi(s)(\overline{q}) = \phi_{\theta}(s)(\overline{q})$ so that $\phi(\alpha[s, t])(\overline{q}) = (\rho(\alpha))[\phi(s), \phi(t)](\overline{q})$.
  \item[$\overline{q} \in \pi_{2}(\rho_{\theta}(\alpha))$] The proof follows along similar lines as above.
  \item[$\overline{q} \notin (\pi_{1}(\rho_{\theta}(\alpha)) \cup \pi_{2}(\rho_{\theta}(\alpha)))$] This case cannot occur since we assumed that $\phi(\alpha[s, t])(\overline{q}) \neq \bot$.
 \end{description}
 Thus $\phi(\alpha[s, t]) = (\rho(\alpha))[\phi(s), \phi(t)]$. \\

 We now show that $\rho(s \circ \alpha) = \phi(s) \circ \rho(\alpha)$. In order to prove this we proceed by showing that $$\pi_{i}(\rho(s \circ \alpha)) = \pi_{i}(\phi(s) \circ \rho(\alpha))$$ for $i \in \{ 1, 2 \}$.

 Let $\overline{q} \in \pi_{1}(\rho(s \circ \alpha)) = \sqcup \pi_{1}(\rho_{\theta}(s \circ \alpha))$. Then $\overline{q}^{_{E_{\theta}}} \in S_{\theta}$ for some $\theta$ and $\overline{q}^{_{E_{\theta}}} \in \pi_{1}(\rho_{\theta}(s \circ \alpha))$
       \[\begin{array}{ccll}
         &\Rightarrow& \overline{q}^{_{E_{\theta}}} \in \pi_{1}(\phi_{\theta}(s) \circ \rho_{\theta}(\alpha)) & \text{(using \lref{LemmaPhiRhoTheta})} \\
         &\Rightarrow& \phi_{\theta}(s)(\overline{q}^{_{E_{\theta}}}) \in \pi_{1}(\rho_{\theta}(\alpha)) \subseteq S_{\theta} \\
         &\Rightarrow& \phi_{\theta}(s)(\overline{q}^{_{E_{\theta}}}) \neq \overline{\bot} \\
         &\Rightarrow& \phi(s)(\overline{q}^{_{E_{\theta}}}) = \phi_{\theta}(s)(\overline{q}^{_{E_{\theta}}}) \\
         &\Rightarrow& \phi(s)(\overline{q}^{_{E_{\theta}}}) \in \sqcup \pi_{1}(\rho_{\theta}(\alpha)) \\
         &\Rightarrow& \phi(s)(\overline{q}^{_{E_{\theta}}}) \in \pi_{1}(\rho(\alpha)) \\
         &\Rightarrow& \overline{q}^{_{E_{\theta}}} \in \pi_{1}(\phi(s) \circ \rho(\alpha))
       \end{array}\]
 and so $\pi_{1}(\rho(s \circ \alpha)) \subseteq \pi_{1}(\phi(s) \circ \rho(\alpha))$.

 For the reverse inclusion assume that $\overline{q} \in \pi_{1}(\phi(s) \circ \rho(\alpha))$. Consequently we have $\overline{q}^{_{E_{\theta}}} \in S_{\theta}$ for some $\theta$ and $\phi(s)(\overline{q}^{_{E_{\theta}}}) \in \pi_{1}(\rho(\alpha)) \subseteq X$
 \[\begin{array}{ccll}
         &\Rightarrow& \phi(s)(\overline{q}^{_{E_{\theta}}}) \neq \bot \\
         &\Rightarrow& \phi(s)(\overline{q}^{_{E_{\theta}}}) = \phi_{\theta}(s)(\overline{q}^{_{E_{\theta}}}) (\neq \overline{\bot}^{_{E_{\theta}}}) \\
         &\Rightarrow& \phi_{\theta}(s)(\overline{q}^{_{E_{\theta}}}) \in \pi_{1}(\rho_{\theta}(\alpha)) & \text{(using \rref{RemarkDisjoint}(i))} \\
         &\Rightarrow& \overline{q}^{_{E_{\theta}}} \in \pi_{1}(\phi_{\theta}(s) \circ \rho_{\theta}(\alpha)) \\
         &\Rightarrow& \overline{q}^{_{E_{\theta}}} \in \pi_{1}(\rho_{\theta}(s \circ \alpha)) & \text{(using \lref{LemmaPhiRho})} \\
         &\Rightarrow& \overline{q}^{_{E_{\theta}}} \in \sqcup \pi_{1}(\rho_{\theta}(s \circ \alpha)) = \pi_{1}(\rho(s \circ \alpha)) \\
       \end{array}\]
 from which it follows that $\pi_{1}(\phi(s) \circ \rho(\alpha)) \subseteq \pi_{1}(\rho(s \circ \alpha))$.
 Proceeding along exactly the same lines we can show that $\pi_{2}(\rho(s \circ \alpha)) = \pi_{2}(\phi(s) \circ \rho(\alpha))$ which completes the proof.
\end{proof}

\subsection{Proof of Theorem \ref{main_theorem}}
\label{proof_main_theorem}

 Let $\{\theta\}$ be the collection of all maximal congruences of $M$. Consider the set $X$ as in \eqref{NotationDisjoint}. The functions $\phi : S_{\bot} \rightarrow \mathcal{T}_{o}(X_{\bot})$ and $\rho : M \rightarrow \mathbb{3}^{X}$ as defined in \lref{LemmaPhiMono} and \lref{LemmaRhoMono}, respectively, are monomorphisms. Further, by \lref{LemmaPhiRho}, the pair $(\phi, \rho)$ is a monomorphism  from $(S_{\bot}, M)$ to the functional $C$-monoid $\big( \mathcal{T}_{o}(X_{\bot}), \mathbb{3}^{X} \big)$. From the construction of $X$ it is also evident  that if $M$ and $S_{\bot}$ are finite then there are only finitely many maximal congruences $\theta$ on $M$ and finitely many equivalence classes $E_{\theta}$ on $S_{\bot}$ and so $X$ must be finite.
%\qed

\begin{corollary} \label{IdCmonoids}
 An identity is satisfied in every $C$-monoid $(S_{\bot}, M)$ where $M$ is an ada if and only if it is satisfied in all functional $C$-monoids.
\end{corollary}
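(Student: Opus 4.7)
The plan is to deduce \cref{IdCmonoids} as an almost immediate consequence of the embedding result \tref{main_theorem}. The forward direction is trivial: for every set $X$, the functional $C$-monoid $\big(\mathcal{T}_{o}(X_{\bot}), \mathbb{3}^{X}\big)$ is itself a $C$-monoid whose test component $\mathbb{3}^{X}$ is an ada (as was recalled in \sref{SectionPrelims}), so any identity holding throughout the class in question must in particular hold in every functional $C$-monoid.

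For the converse, I would fix an identity $u \equiv v$ in the two-sorted signature (with separate variables ranging over the program sort and the test sort) that holds in every functional $C$-monoid, take an arbitrary $C$-monoid $(S_{\bot}, M)$ with $M$ an ada, and an arbitrary assignment of the variables of $u, v$ into $(S_{\bot}, M)$. By \tref{main_theorem}, there exist a set $X$ and a $C$-monoid monomorphism $(\phi, \rho) : (S_{\bot}, M) \hookrightarrow \big(\mathcal{T}_{o}(X_{\bot}), \mathbb{3}^{X}\big)$. Because $(\phi, \rho)$ preserves every operation of the two-sorted signature (the monoid product, the $C$-algebra operations, the action $\_[\_, \_]$, and the composition $\circ$), a routine induction on term structure shows that the image under $(\phi, \rho)$ of the value of any term coincides with the value of that term under the pushed-forward assignment in $\big(\mathcal{T}_{o}(X_{\bot}), \mathbb{3}^{X}\big)$. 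Applying this to both $u$ and $v$, and noting that $u$ and $v$ share a common output sort, the values of $u$ and $v$ at our assignment must agree in the functional $C$-monoid by hypothesis; injectivity of $\phi$ or $\rho$ (according to the output sort) then forces their preimages in $(S_{\bot}, M)$ to be equal as well, so $u \equiv v$ is satisfied under the chosen assignment, and hence universally in $(S_{\bot}, M)$.

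I do not expect any substantive obstacle here: the heavy lifting has already been discharged in constructing the embedding in \tref{main_theorem}, and the rest is the familiar universal-algebra fact that injective homomorphisms reflect identities. The only point deserving a sentence of care is the two-sorted nature of the signature, which is precisely why we need both components of $(\phi, \rho)$ to be injective — a property already built into the conclusion of \tref{main_theorem}.
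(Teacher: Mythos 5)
Your argument is correct and is exactly the intended one: the paper gives no explicit proof of \cref{IdCmonoids}, treating it as an immediate consequence of \tref{main_theorem} via the standard fact that identities pass to subalgebras (and, for the forward direction, that functional $C$-monoids are themselves $C$-monoids over adas). Nothing further is needed.
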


In view of \cref{IdCmonoids} and \eqref{FunctionalCMonoidEq}, we have the following result.

\begin{corollary}
 In every $C$-monoid $(S_{\bot}, M)$ where $M$ is an ada we have \\ $(f \circ T)[f, f] = f$.
\end{corollary}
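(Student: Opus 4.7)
The plan is to invoke Corollary \ref{IdCmonoids} to reduce the claim to verifying the identity $(f \circ T)[f, f] = f$ in every functional $C$-monoid $\big( \mathcal{T}_{o}(X_{\bot}), \mathbb{3}^{X} \big)$. Once reduced, the verification is a direct pointwise computation using \eqref{FunctionalCMonoidEq} and \eqref{FunctionalAction}, so there is essentially no obstacle beyond being careful about the non-halting case.

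Concretely, I would fix an arbitrary set $X$ and $f \in \mathcal{T}_{o}(X_{\bot})$, and first compute $f \circ T$ by \eqref{FunctionalCMonoidEq}. Here $T$ denotes the constant element of the ada $\mathbb{3}^{X}$ with value $T$ on all of $X$, so for each $x \in X_{\bot}$ the value $(f \circ T)(x)$ equals $T$ precisely when $f(x) \neq \bot$ (so that $T(f(x)) = T$), and equals $U$ when $f(x) = \bot$ (the case $F$ cannot occur). This is the only subtlety in the argument.

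Next I would apply the $C$-set action \eqref{FunctionalAction} with test $\alpha = f \circ T$ and both branches equal to $f$. On any $x$ with $f(x) \neq \bot$ the test takes value $T$, so the action returns $f(x)$; on any $x$ with $f(x) = \bot$ the test takes value $U$, so the ``otherwise'' clause of \eqref{FunctionalAction} returns $\bot$, which also equals $f(x)$. Thus $(f \circ T)[f, f](x) = f(x)$ for every $x \in X_{\bot}$, giving equality of functions in $\mathcal{T}_{o}(X_{\bot})$ and hence the identity in the functional $C$-monoid. By Corollary \ref{IdCmonoids}, the identity then holds in every $C$-monoid $(S_{\bot}, M)$ with $M$ an ada, completing the proof.

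As a sanity check, the identity can also be derived directly from the axioms without passing through functional models: applying \eqref{EM9} with $r = f$, $s = t = 1$, and $\alpha = T$ gives $f \cdot T[1, 1] = (f \circ T)[f \cdot 1, f \cdot 1] = (f \circ T)[f, f]$, and the left-hand side simplifies to $f$ since $T[1, 1] = 1$ (using $(\neg F)[s, t] = F[t, s] = s$ from \eqref{EC5} and \eqref{EC6}). Either route works, but the route through Corollary \ref{IdCmonoids} is the one flagged by the ``In view of'' remark preceding the statement.
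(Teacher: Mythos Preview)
Your proposal is correct and follows essentially the same route as the paper: the paper's justification is simply ``in view of \cref{IdCmonoids} and \eqref{FunctionalCMonoidEq}'', and your argument spells out precisely that verification in the functional model. Your alternative direct derivation via \eqref{EM9} is also valid and parallels the computation in \pref{Prop.Rho.Hom}(i).
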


\section{Conclusion} \label{Conclusion}

The notion of $C$-sets axiomatize the program construct {\tt if-then-else} considered over possibly non-halting programs and non-halting tests. In this work, we extended the axiomatization to $C$-monoids which include the composition of programs as well as composition of programs with tests. For the class of $C$-monoids where the $C$-algebra is an ada we obtain a Cayley-type theorem which exhibits the embedding of such $C$-monoids into functional $C$-monoids. Using this, we obtain a mechanism to determine the equivalence of programs through functional $C$-monoids. It is desirable to achieve such a representation for the general class of $C$-monoids with no restriction on the $C$-algebra, which can be considered as future work. Note that the term $f \circ T$ in the standard functional model of a $C$-monoid represents the aspect of the domain of the function, as used in \cite{desharnais09,jackson15}. It is interesting to study the relation between these two concepts in the current set up.

%\bibliographystyle{abbrv}
%\bibliography{Paper_GP_1_rev_bib}

\appendix

\section{Proofs} \label{SectionAnnexure}

\subsection{Verification of \eref{ExampleFunctionalCmonoid}} \label{VerFunctionalCmonoid}

\begin{flushleft}

We use the pairs of sets representation given by Guzm\'{a}n and Squier in \cite{guzman90} and identify $\alpha \in \mathbb{3}^{X}$ with a \emph{pair of sets} $(A, B)$ of $X$ where $A = \alpha^{-1}(T)$ and $B = \alpha^{-1}(F)$. In this representation ${\bf T} = (X, \emptyset), {\bf F} = (\emptyset, X)$ and ${\bf U} = (\emptyset, \emptyset)$. Thus the operation $\circ$ is given as follows:
 \begin{equation*}
  (f \circ \alpha)(x) =
  \begin{cases}
   T, & \text{ if } f(x) \in A; \\
   F, & \text{ if } f(x) \in B; \\
   U, & \text{ otherwise.}
  \end{cases}
 \end{equation*}
In other words $f \circ \alpha$ can be identified with the pair of sets $(C, D)$ where $C = \{ x \in X : f(x) \in A \}$ and $D = \{ x \in X : f(x) \in B \}$. \newline

Axiom \eqref{EM7}: Let $\alpha$ be identified with the pair of sets $(A, B)$. Then $\zeta_{\bot} \circ \alpha = (\emptyset, \emptyset) = {\bf U}$ as $\zeta_{\bot}(x) = \bot \notin (A \cup B)$. \newline

Axiom \eqref{EM8}: Consider ${\bf U} = (\emptyset, \emptyset)$. Then $f \circ {\bf U} = (\emptyset, \emptyset) = {\bf U}$. \newline

Axiom \eqref{EM1}:
  \begin{align*}
   (1 \circ \alpha)(x) & =
  \begin{cases}
   T, & \text{ if } id_{X_{\bot}}(x) \in A; \\
   F, & \text{ if } id_{X_{\bot}}(x) \in B; \\
   U, & \text{ otherwise}
  \end{cases} \\
   & = \alpha(x).
   \end{align*}
  Thus $1 \circ \alpha = \alpha$. \newline

Axiom \eqref{EM4}: Let $\alpha$ be identified with the pair of sets $(A, B)$. Then $f \circ \alpha = (C, D)$ where $C = \{ x \in X : f(x) \in A \}$ and $D = \{ x \in X : f(x) \in B \}$. Thus $\neg (f \circ \alpha) = (D, C)$. Also $f \circ (\neg \alpha) = f \circ (B, A) = (E, F)$ where $E = \{ x \in X : f(x) \in B \}$ and $F = \{ x \in X : f(x) \in A \}$. It follows that $(E, F) = (D, C)$. \newline

Axiom \eqref{EM3}: Let $\alpha, \beta$ be represented by the pairs of sets $(A_{1}, A_{2})$ and $(B_{1}, B_{2})$ respectively. Then $\alpha \wedge \beta = (A_{1} \cap B_{1}, A_{2} \cup (A_{1} \cap B_{2}))$. Also let $f \circ \alpha = (C_{1}, C_{2})$ where $C_{1} = \{ x \in X : f(x) \in A_{1} \}$ and $C_{2} = \{ x \in X : f(x) \in A_{2} \}$, and $f \circ \beta = (D_{1}, D_{2})$ where $D_{1} = \{ x \in X : f(x) \in B_{1} \}$ and $D_{2} = \{ x \in X : f(x) \in B_{2} \}$. Then $(C_{1}, C_{2}) \wedge (D_{1}, D_{2}) = (C_{1} \cap D_{1}, C_{2} \cup (C_{1} \cap D_{2}))$. Thus $C_{1} \cap D_{1} = \{ x \in X : f(x) \in A_{1} \cap B_{1} \}$ and $C_{2} \cup (C_{1} \cap D_{2}) = \{ x \in X : f(x) \in A_{2} \cup (A_{1} \cap B_{2}) \}$. Hence $f \circ (\alpha \wedge \beta) = (f \circ \alpha) \wedge (f \circ \beta)$. \newline

Axiom \eqref{EM2}: Consider $f, g \in \mathcal{T}_{o}(X_{\bot})$ and $\alpha \in \mathbb{3}^{X}$ represented by the pair of sets $(A, B)$.
  \begin{equation*}
   ((f \cdot g) \circ \alpha)(x) =
   \begin{cases}
    T, &\text{ if } g(f(x)) \in A; \\
    F, & \text{ if } g(f(x)) \in B; \\
    U, & \text{ otherwise.}
   \end{cases}
  \end{equation*}
  Let $g \circ \alpha = (C, D)$ where $C = \{ x \in X : g(x) \in A \}$ and $D = \{ x \in X : g(x) \in B \}$.
  \begin{equation*}
   (f \circ (g \circ \alpha))(x) =
   \begin{cases}
    T, & \text{ if } f(x) \in C; \\
    F, & \text{ if } f(x) \in D; \\
    U, & \text{ otherwise.}
   \end{cases}
  \end{equation*}
  We may consider the following three cases. \\
   \emph{Case I}: \emph{$x \in X$ such that $g(f(x)) \in A$:} Then $((f \cdot g) \circ \alpha)(x) = T$. Also $f(x) \in C$ as $g(f(x)) \in A$. Thus $(f \circ (g \circ \alpha))(x) = T$. \\
   \emph{Case II}: \emph{$x \in X$ such that $g(f(x)) \in B$:} Then $((f \cdot g) \circ \alpha)(x) = F$. Similarly $g(f(x)) \in B$ means that $f(x) \in D$. Thus $(f \circ (g \circ \alpha))(x) = F$. \\
   \emph{Case III}: \emph{$x \in X$ such that $g(f(x)) \notin (A \cup B)$:} Then $((f \cdot g) \circ \alpha)(x) = U$. Since $f(x)$ is in neither $C$ nor $D$ it follows that $(f \circ (g \circ \alpha))(x) = U$. \newline

Axiom \eqref{EM5}: Consider $\alpha \in \mathbb{3}^{X}$ represented by the pair of sets $(A, B)$.
  \begin{equation*}
   (\alpha[f, g] \cdot h)(x) = h(\alpha[f, g](x)) =
   \begin{cases}
    h(f(x)), & \text{ if } x \in A; \\
    h(g(x)), & \text{ if } x \in B; \\
    \bot, & \text{ otherwise.}
   \end{cases}
  \end{equation*}
  Hence $\alpha[f, g] \cdot h = \alpha[f \cdot h, g \cdot h]$. \newline

Axiom \eqref{EM9}: Let $\alpha \in \mathbb{3}^{X}$ be represented by the pair of sets $(A, B)$.
  \begin{equation*}
   (h \cdot \alpha[f, g])(x) = \alpha[f, g](h(x)) =
   \begin{cases}
    f(h(x)), & \text{ if } h(x) \in A; \\
    g(h(x)), & \text{ if } h(x) \in B; \\
    \bot, & \text{ otherwise.}
   \end{cases}
  \end{equation*}
  Let $h \circ \alpha$ be represented by the pair of sets $(C, D)$ where $C = \{ x \in X : h(x) \in A \}$ and $D = \{ x \in X : h(x) \in B \}$.
  \begin{align*}
   (h \circ \alpha)[h \cdot f, h \cdot g](x) & =
   \begin{cases}
    (h \cdot f)(x), & \text{ if } x \in C; \\
    (h \cdot g)(x), & \text{ if } x \in D; \\
    \bot, & \text{ otherwise} \\
   \end{cases} \\
    & =
    \begin{cases}
     f(h(x)), & \text{ if } h(x) \in A; \\
     g(h(x)), & \text{ if } h(x) \in B; \\
     \bot, & \text{ otherwise.}
    \end{cases}
  \end{align*}
  Thus $h \cdot \alpha[f, g] = (h \circ \alpha)[h \cdot f, h \cdot g]$. \newline

  Axiom \eqref{EM6}: Let $\alpha, \beta \in \mathbb{3}^{X}$ be represented by the pairs of sets $(A_{1}, A_{2})$ and $(B_{1}, B_{2})$ respectively. For $f, g \in \mathcal{T}_{o}(X_{\bot})$ we have the following:
  \begin{equation*}
   h(x) = \alpha[f, g](x) =
   \begin{cases}
    f(x), & \text{ if } x \in A_{1}; \\
    g(x), & \text{ if } x \in A_{2}; \\
    \bot, & \text{ otherwise.}
   \end{cases}
  \end{equation*}
  Also $h \circ \beta = (C_{1}, C_{2})$ where $C_{1} = \{ x \in X : h(x) \in B_{1} \}$ and $C_{2} = \{ x \in X : h(x) \in B_{2} \}$. Similarly $f \circ \beta = (D_{1}, D_{2})$ where $D_{1} = \{ x \in X : f(x) \in B_{1} \}$ and $D_{2} = \{ x \in X : f(x) \in B_{2} \}$. Let $g \circ \beta = (E_{1}, E_{2})$ where $E_{1} = \{ x \in X : g(x) \in B_{1} \}$ and $E_{2} = \{ x \in X : g(x) \in B_{2} \}$. Thus $\alpha \llbracket f \circ \beta, g \circ \beta \rrbracket = \big( (A_{1}, A_{2}) \wedge (D_{1}, D_{2}) \big) \vee \big( \neg (A_{1}, A_{2}) \wedge (E_{1}, E_{2}) \big)$. \\
  This evaluates to
  \begin{align*}
  \alpha \llbracket f \circ \beta, g \circ \beta \rrbracket
   & = \big( A_{1} \cap D_{1}, A_{2} \cup (A_{1} \cap D_{2}) \big) \vee \big( A_{2} \cap E_{1}, A_{1} \cup (A_{2} \cap E_{2}) \big) \\
   & = \Big( (A_{1} \cap D_{1}) \cup \big( (A_{2} \cup (A_{1} \cap D_{2})) \cap (A_{2} \cap E_{1})  \big), \\ & (A_{2} \cup (A_{1} \cap D_{2})) \cap (A_{1} \cup (A_{2} \cap E_{2})) \Big) \\
   & = (S_{1}, S_{2}) \text{   (say)}
  \end{align*}
  We show that $(C_{1}, C_{2}) = (S_{1}, S_{2})$ by standard set theoretic arguments.

  First we prove that $C_{1} \subseteq S_{1}$. Let $x \in C_{1}$. Then $h(x) \in B_{1}$. Consider the following cases: \\
    \emph{Case I}: $x \in A_{1}$: Then $h(x) = f(x) \in B_{1}$ hence $x \in D_{1}$. Therefore $x \in A_{1} \cap D_{1}$ and so $x \in S_{1}$. \\
    \emph{Case II}: $x \in A_{2}$: Then $h(x) = g(x) \in B_{1}$ hence $x \in E_{1}$. Hence $x \in A_{2} \cap E_{1} \subseteq A_{2}$ we have $x \in S_{1}$. \\
    \emph{Case III}: $x \notin (A_{1} \cup A_{2})$: Then $h(x) = \bot \notin B_{1}$ a contradiction to our assumption that $h(x) \in B_{1}$. It follows that this case cannot occur. \\

  We show that $S_{1} \subseteq C_{1}$. Let $x \in S_{1}$. Thus $x \in A_{1} \cap D_{1}$ or $x \in \big( (A_{2} \cup (A_{1} \cap D_{2})) \cap (A_{2} \cap E_{1})  \big)$. If $x \in A_{1} \cap D_{1}$ then $h(x) = f(x)$ as $x \in A_{1}$ and $f(x) \in B_{1}$ as $x \in D_{1}$. Thus $h(x) \in B_{1}$ and so $x \in C_{1}$. If $x \in \big( (A_{2} \cup (A_{1} \cap D_{2})) \cap (A_{2} \cap E_{1})  \big)$, then $x \in (A_{2} \cap E_{1})$. Thus $h(x) = g(x)$ as $x \in A_{2}$ and $g(x) \in B_{1}$ as $x \in E_{1}$. Hence $h(x) \in B_{1}$, thus $x \in C_{1}$. \\

   We show that $C_{2} \subseteq S_{2}$. Let $x \in C_{2}$ hence $h(x) \in B_{2}$. Consider the following cases: \\
    \emph{Case I}: $x \in A_{1}$: Then $h(x) = f(x) \in B_{2}$, therefore $x \in D_{2}$. Hence $x \in A_{1} \cap D_{2} \subseteq A_{1}$ and so $x \in S_{2}$. \\
    \emph{Case II}: $x \in A_{2}$: Then $h(x) = g(x) \in B_{2}$ therefore $x \in E_{2}$. Thus $x \in A_{2} \cap E_{2} \subseteq A_{2}$ and so $x \in S_{2}$. \\
    \emph{Case III}: $x \notin (A_{1} \cup A_{2})$: Then $h(x) = \bot \notin B_{2}$ which is a contradiction. It follows that this case cannot occur. \\

   Finally we show that $S_{2} \subseteq C_{2}$. Since $A_{1} \cap A_{2} = \emptyset$ it follows that $x \in A_{1} \cap D_{2}$ or $x \in A_{2} \cap E_{2}$. If $x \in A_{1} \cap D_{2}$ then $h(x) = f(x) \in B_{2}$ and hence $x \in C_{2}$. If $x \in A_{2} \cap E_{2}$ then $h(x) = g(x) \in B_{2}$ hence $x \in C_{2}$.

   Thus $\alpha[f, g] \circ \beta = \alpha \llbracket f \circ \beta, g \circ \beta \rrbracket$. \newline

\end{flushleft}

\subsection{Verification of \eref{ExampleGeneralCmonoid}} \label{VerGeneralCmonoid}

\begin{flushleft}

Let $f, g, h \in S_{\bot}^{X}$ and $\alpha, \beta \in \mathbb{3}^{X}$.

Axiom \eqref{EM7}: It is easy to see that $(\zeta_{\bot} \circ \alpha)(x) = U$ for all $x \in X$. \newline

Axiom \eqref{EM8}: It is clear that $(f \circ {\bf U})(x) = U$. \newline

Axiom \eqref{EM1}: Since $S_{\bot}$ is non-trivial we must have $1 \neq \bot$. If not then for $a \in S_{\bot} \setminus \{ \bot \}$ we have $a = a \cdot 1 = a \cdot \bot = \bot$ a contradiction. It follows that $\zeta_{1} \neq \zeta_{\bot}$. Hence $(\zeta_{1} \circ \alpha)(x) = \alpha(x)$ as $\zeta_{1}(x) = 1 \neq \bot$. \newline

Axiom \eqref{EM4}: We have
 \begin{align*}
  (f \circ (\neg \alpha))(x) & = \begin{cases}
                                 (\neg \alpha)(x), & \text{ if } f(x) \neq \bot; \\
                                 U, & \text{ otherwise} \\
                                 \end{cases} \\
                             & = \begin{cases}
                                 \neg (\alpha(x)), & \text{ if } f(x) \neq \bot; \\
                                 U, & \text{ otherwise} \\
                                 \end{cases} \\
                             & = \neg (f \circ \alpha)(x).
  \end{align*}
 Thus $f \circ (\neg \alpha) = \neg (f \circ \alpha)$. \newline

 Axiom \eqref{EM3}: We have
 \begin{align*}
  (f \circ (\alpha \wedge \beta))(x) & = \begin{cases}
                                         (\alpha \wedge \beta)(x), & \text{ if } f(x) \neq \bot; \\
                                         U, & \text{ otherwise} \\
                                         \end{cases} \\
                                     & = \begin{cases}
                                         \alpha(x) \wedge \beta(x), & \text{ if } f(x) \neq \bot; \\
                                         U \wedge U, & \text{ otherwise} \\
                                         \end{cases} \\
                                     & = (f \circ \alpha)(x) \wedge (f \circ \beta)(x).
 \end{align*}
 Thus $f \circ (\alpha \wedge \beta) = (f \circ \alpha) \wedge (f \circ \beta)$. \newline

 Axiom \eqref{EM2}: Since $S_{\bot}$ has no zero-divisors we have $f(x) \cdot g(x) = \bot \Leftrightarrow f(x) = \bot \text{ or } g(x) = \bot$. Consequently
 \begin{align*}
  ((f \cdot g) \circ \alpha)(x) & = \begin{cases}
                              \alpha(x), & \text{ if } (f \cdot g)(x) \neq \bot; \\
                              U, & \text{ otherwise} \\
                              \end{cases} \\
                          & = \begin{cases}
                              \alpha(x), & \text{ if } f(x) \cdot g(x) \neq \bot; \\
                              U, & \text{ otherwise} \\
                              \end{cases} \\
                          & = \begin{cases}
                              \alpha(x), & \text{ if } f(x) \neq \bot \text{ and } g(x) \neq \bot; \\
                              U, & \text{ otherwise} \\
                              \end{cases} \\
                          & = (f \circ (g \circ \alpha))(x).
 \end{align*}
 Thus $(f \cdot g) \circ \alpha = f \circ (g \circ \alpha)$. \newline

 Axiom \eqref{EM5}: We have
 \begin{align*}
  (\alpha[f, g] \cdot h)(x) = \alpha[f, g](x) \cdot h(x) & = \begin{cases}
                                                  f(x) \cdot h(x), & \text{ if } \alpha(x) = T; \\
                                                  g(x) \cdot h(x), & \text{ if } \alpha(x) = F; \\
                                                  \bot, & \text{ otherwise} \\
                                                 \end{cases} \\
                                               & = \alpha[f \cdot h, g \cdot h](x).
 \end{align*}
 Thus $\alpha[f, g] \cdot h = \alpha[f \cdot h, g \cdot h]$. \newline

 Axiom \eqref{EM9}: Consider
 \begin{equation*}
  h \cdot \alpha[f, g](x) = h(x) \cdot \alpha[f, g](x) = \begin{cases}
                                                h(x) \cdot f(x), & \text{ if } \alpha(x) = T; \\
                                                h(x) \cdot g(x), & \text{ if } \alpha(x) = F; \\
                                                \bot, & \text{ otherwise.}
                                               \end{cases}
 \end{equation*}

 On the other hand
 \begin{equation*}
  (h \circ \alpha)[h \cdot f, h \cdot g](x) = \begin{cases}
                                     h(x) \cdot f(x), & \text{ if } (h \circ \alpha)(x) = T; \\
                                     h(x) \cdot g(x), & \text{ if } (h \circ \alpha)(x) = F; \\
                                     \bot, & \text{ otherwise.}
                                    \end{cases}
 \end{equation*}

 Note that if $h(x) = \bot$ then $h \cdot \alpha[f, g](x) = \bot = (h \circ \alpha)[h \cdot f, h \cdot g](x)$. Suppose that $h(x) \neq \bot$ then $(h \circ \alpha)(x) = \alpha(x)$. It is clear that in this case as well $h \cdot \alpha[f, g](x) = (h \circ \alpha)[h \cdot f, h \cdot g](x)$ holds. Thus $h \cdot \alpha[f, g] = (h \circ \alpha)[h \cdot f, h \cdot g]$. \newline

  Axiom \eqref{EM6}: Consider
 \begin{align*}
  (\alpha[f, g] \circ \beta)(x) & = \begin{cases}
                                     \beta(x), & \text{ if } \alpha[f, g](x) \neq \bot; \\
                                     U, & \text{ otherwise} \\
                                    \end{cases} \\
                                & = \begin{cases}
                                     \beta(x), & \text{ if } (f(x) \neq \bot, \alpha(x) = T) \text{ or } (g(x) \neq \bot, \alpha(x) = F); \\
                                     U, & \text{ otherwise.} \\
                                    \end{cases}
 \end{align*}
 We have $(\alpha \llbracket f \circ \beta, g \circ \beta \rrbracket)(x) = (\alpha(x) \wedge (f \circ \beta)(x)) \vee (\neg \alpha(x) \wedge (g \circ \beta)(x))$.

 If $f(x) \neq \bot$ and $\alpha(x) = T$ we have $(\alpha \llbracket f \circ \beta, g \circ \beta \rrbracket)(x) = (T \wedge \beta(x)) \vee (F \wedge (g \circ \beta)(x)) = \beta(x) \vee F = \beta(x) = (\alpha[f, g] \circ \beta)(x)$.

 If $g(x) \neq \bot$ and $\alpha(x) = F$ we have $(\alpha \llbracket f \circ \beta, g \circ \beta \rrbracket)(x) = (F \wedge (f \circ \beta)(x)) \vee (T \wedge \beta(x)) = F \vee \beta(x) = \beta(x) = (\alpha[f, g] \circ \beta)(x)$.

 In all other cases it can be easily ascertained that $(\alpha \llbracket f \circ \beta, g \circ \beta \rrbracket)(x) = U = (\alpha[f, g] \circ \beta)(x)$. Thus $\alpha[f, g] \circ \beta = \alpha \llbracket f \circ \beta, g \circ \beta \rrbracket$. \newline

\end{flushleft}

\subsection{Verification of \eref{ExampleBasicCmonoid}} \label{VerBasicCmonoid}

\begin{flushleft}

Axiom \eqref{EM7}: It is clear that $\bot \circ \alpha = U$. \newline

Axiom \eqref{EM8}: It is obvious that $t \circ U = U$. \newline

Axiom \eqref{EM1}: Since $S_{\bot}$ is non-trivial it follows that $1 \neq \bot$. Consequently $1 \circ \alpha = \alpha$. \newline

Axiom \eqref{EM4}: If $s = \bot$ then $s \circ (\neg \alpha) = U = \neg ( s \circ \alpha)$. If $s \neq \bot$ then $s \circ (\neg \alpha) = \neg \alpha = \neg (s \circ \alpha)$. Thus $s \circ (\neg \alpha) = \neg (s \circ \alpha)$. \newline

Axiom \eqref{EM3}: If $s = \bot$ then $s \circ (\alpha \wedge \beta) = U$ and $(s \circ \alpha) \wedge (s \circ \beta) = U \wedge U = U$. If $s \neq \bot$ then $s \circ (\alpha \wedge \beta) = \alpha \wedge \beta = (s \circ \alpha) \wedge (s \circ \beta)$. Thus $s \circ (\alpha \wedge \beta) = (s \circ \alpha) \wedge (s \circ \beta)$. \newline

Axiom \eqref{EM2}: Consider $s, t \in S_{\bot}$ such that $s \cdot t = \bot$. Then $(s \cdot t) \circ \alpha = \bot \circ \alpha = U$. Since $S_{\bot}$ has no non-zero zero-divisors we have $s = \bot$ or $t = \bot$ and so $s \circ (t \circ \alpha) = U$ in either case. If $s \cdot t \neq \bot$ then $(s \cdot t) \circ \alpha = \alpha$ and $s \circ (t \circ \alpha) = t \circ \alpha = \alpha$ as neither $s$ nor $t$ are $\bot$. Thus $(s \cdot t) \circ \alpha = s \circ (t \circ \alpha)$. \newline

Axiom \eqref{EM5}: As $\alpha \in \{ T, F, U \}$ we consider the following three cases: \\
 \emph{Case I}: $\alpha = T$: Then $\alpha[s, t] \cdot u = T[s, t] \cdot u = s \cdot u = T[s \cdot u, t \cdot u]$. \\
 \emph{Case II}: $\alpha = F$: Then $\alpha[s, t] \cdot u = F[s, t] \cdot u = t \cdot u = F[s \cdot u, t \cdot u]$. \\
 \emph{Case III}: $\alpha = U$: Then $\alpha[s, t] \cdot u = U[s, t] \cdot u = \bot \cdot u = \bot = U[s \cdot u, t \cdot u]$. \\
 Thus $\alpha[s, t] \cdot u = \alpha[s \cdot u, t \cdot u]$. \newline

Axiom \eqref{EM9}: Consider the following cases: \\
 \emph{Case I}: $r = \bot$: Then $r \cdot \alpha[s, t] = \bot \cdot \alpha[s, t] = \bot = U[r \cdot s, r \cdot t] = (\bot \circ \alpha)[r \cdot s, r \cdot t] = (r \circ \alpha)[r \cdot s, r \cdot t]$. \\
 \emph{Case II}: $r \neq \bot$: We again consider the following three cases: \\
   \emph{Case i}: $\alpha = T$: $r \cdot \alpha[s, t] = r \cdot T[s, t] = r \cdot s = T[r \cdot s, r \cdot t] = (r \circ T)[r \cdot s, r \cdot t] = (r \circ \alpha)[r \cdot s, r \cdot t]$. \\
   \emph{Case ii}: $\alpha = F$: $r \cdot \alpha[s, t] = r \cdot F[s, t] = r \cdot t = F[r \cdot s, r \cdot t] = (r \circ F)[r \cdot s, r \cdot t] = (r \circ \alpha)[r \cdot s, r \cdot t]$. \\
   \emph{Case iii}: $\alpha = U$: $r \cdot \alpha[s, t] = r \cdot U[s, t] = r \cdot \bot = \bot = U[r \cdot s, r \cdot t] = (r \circ U)[r \cdot s, r \cdot t] = (r \circ \alpha)[r \cdot s, r \cdot t]$. \\
 Thus $r \cdot \alpha[s, t] = (r \circ \alpha)[r \cdot s, r \cdot t]$. \newline

 Axiom \eqref{EM6}: Consider the following three cases: \\
 \emph{Case I}: $\alpha = T$: $\alpha[s, t] \circ \beta = T[s, t] \circ \beta = s \circ \beta = T \llbracket s \circ \beta, t \circ \beta \rrbracket$. \\
 \emph{Case II}: $\alpha = F$: $\alpha[s, t] \circ \beta = F[s, t] \circ \beta = t \circ \beta = F \llbracket s \circ \beta, t \circ \beta \rrbracket$. \\
 \emph{Case III}: $\alpha = U$: $\alpha[s, t] \circ \beta = U[s, t] \circ \beta = \bot \circ \beta = U = U \llbracket s \circ \beta, t \circ \beta \rrbracket$. \\
 Thus $\alpha[s, t] \circ \beta = \alpha \llbracket s \circ \beta, t \circ \beta \rrbracket$. \newline

\end{flushleft}

\end{document}